\def\doi{7 (3:22) 2011}
\newenvironment{notation}
  {\begin{trivlist}\item[\hskip\labelsep{\bf Notation:}\small]}{\end{trivlist}}
\newcommand{{\msa}} {{\sf Msa}}
\newcommand{{\mwa}} {{\sf Mwa}}
\newcommand{{\msc}} {{\sf Msc}}
\newcommand{{\mnc}} {{\sf Mnc}}
\newcommand{{\hmsa}} {{\sf Hmsa}}
\newcommand{{\hmwa}} {{\sf Hmwa}}
\newcommand{{\hmsc}} {{\sf Hmsc}}
\newcommand{{\hmnc}} {{\sf Hmnc}}
\newcommand{{\bn}}{{\sf bin}}
\newcommand{{\wrd}}{{\sf wrd}}
\newcommand{\Det}{\rm Deterministic}
\newcommand{\alsure} {\mathbb{L}(\PBA^{=1})}
\newcommand{\alsureh} {\mathbb{L}(\HPBA^{=1})}
\newcommand{\probable} {\mathbb{L}(\PBA^{>0})}
\newcommand{\probableh} {\mathbb{L}(\HPBA^{>0})}
\newcommand{\strict} {\mathbb{L}(\PBA^{>\half})}
\newcommand{\stricth} {\mathbb{L}(\HPBA^{>\half})}
\newcommand{\nstrict} {\mathbb{L}(\PBA^{\geq\half})}
\newcommand{\nstricth} {\mathbb{L}(\HPBA^{\geq\half})}
\newcommand{\cBool}{\mathsf{BCl}}
\newcommand{\restrict}[2] {#1|_{#2}}
\newcommand{\Path}{\mathsf{Paths}}
\renewcommand{\subset}  {\subseteq}
\newcommand{\Good} {\mathsf{Good}}
\newcommand{\Bad} {\mathsf{Bad}}
   \newenvironment{claim}
  {\begin{trivlist}\item[\hskip\labelsep{\bf Claim:}\small]}{\end{trivlist}}
\newcommand{\fpm}[1]                         {(#1)}
\newcommand{\qs}                             {q_s}
\newcommand{\Q}                              {Q}
\newcommand{\Qf}                              {Q_f}
\newcommand{\qr}                             {q_r}
\newcommand{\qa}                             {q_a}
\newcommand{\unit}                           {[0,1]}
\newcommand{\cM}                             {\mathcal{M}} 
\newcommand{\FPM}                            {{\rm FPM}} 
\newcommand{\HPM}                            {{\rm HPM}}
\newcommand{\RFPM}                           {{\rm RatFPM}} 
\newcommand{\RHPM}                           {{\rm RatHPM}} 
\newcommand{\st}                             {\mathbin{|}}
\newcommand{\PFA}                            {{\rm PFA}}
\newcommand{\PBA}                            {{\rm PBA}}
\newcommand{\PRA}                            {{\rm PRA}}
\newcommand{\RPRA}                            {{\rm RatPRA}}
\newcommand{\HPBA}                            {{\rm HPBA}}
\newcommand{\RPBA}                            {{\rm RatPBA}}
\newcommand{\RHPBA}                            {{\rm RatHPBA}}
\newcommand{\rk}                           {\mathsf{rk}}
\newcommand{\intersect}                     {\mathbin{\cap}}
\newcommand{\union}                         {\mathbin{\cup}}
\newcommand{\murej}[2]                       {\mu^{rej}_{{#1},\, {#2}}}
\newcommand{\muacc}[2]                       {\mu^{acc}_{{#1},\,{#2}}}
\newcommand{\word}                                   {\alpha}
\newcommand{\Sigmaw}                      {\Sigma^\omega}
\newcommand{\cu}                                  {\eta}
\newcommand{\C}                                {\mathsf{C}}
\newcommand{\post}                            {\mathsf{post}}
\newcommand{\q}                                 {q}
\newcommand{\Nats}                                {\mathbb{N}}
\newcommand{\cR}                               {\mathcal{R}}
\newcommand{\cL}                               {\mathcal{L}}
\newcommand{\sL}                               {\mathsf{L}}
\newcommand{\cB}                               {\mathcal{B}}
\newcommand{\cF}                             {\mathcal{F}}
\newcommand{\cG}                             {\mathcal{G}}
\newcommand{\cal}[1]                             {\mathcal{#1}}
\newcommand{\cA}{{\cal A}}
\newcommand{\Cc}{{\cal C}}
\newcommand{\cl}[1]                           {\mathsf{cl}(#1)}
\newcommand{\half}                           {\frac{1}{2}}
\newcommand{\zero}                           {\mathbf{0}}
\newcommand{\one}                           {\mathbf{1}}
\newcommand{\val} [1]                        {\mathsf{bin}(#1)}
\newcommand{\Id}                                    {\mathsf{Id}}
\newcommand{\Regular}                         {\mathsf{Regular}}
\newcommand{\Pref}{\mathit{Pref}}
\newcommand{\pspace} {{\bf PSPACE}}
\newcommand {\cDD}{{\cal D}}
\newcommand {\cPP}{{\cal P}}
\newcommand{\re} {{\bf R.E.}}
\newcommand{\core} {{\bf co-R.E.}}
\newcommand{\nl} {{\bf NL}}
\newcommand{\exptime} {{\bf EXPTIME}}
\newcommand{\np} {{\bf NP}}
\newcommand{\ie} {{\it i.e.}}
\newcommand{\cT} {\mathbb{T}}
\newcommand{\new} {\mathsf{new}}
\newcommand{\powerset}[1] {2^ #1}
\newcommand{\I} {\mathcal{I}}
\newcommand{\card}[1] {\vert #1 \vert}
\newcommand{\set}[1] {\ensuremath\{#1 \}}
\newcommand{\SeqComp} {\mathsf{SeqComp}}
\newcommand{\comb}[2] {\begin{pmatrix} #1 \\ #2 \end{pmatrix}}
\tikzstyle{autst}=[draw,circle,minimum size=0.7cm]
\tikzstyle{finalst}=[draw,circle,double,minimum size=0.7cm]
\begin{document}
\title{Power of Randomization in Automata on Infinite Strings\rsuper*}

\author[R.~Chadha]{Rohit Chadha\rsuper a}	%required
\address{{\lsuper a}LSV, ENS Cachan \& CNRS \& INRIA, France}
\thanks{{\lsuper a}Supported in part by NSF grants CCF04-29639 and NSF CCF04-48178.}	%optional
\email{chadha.rohit@gmail.com }

\author[A.~P.~Sistla]{A.~Prasad Sistla\rsuper b}%optional
\address{{\lsuper b}Univ. of Illinois at Chicago, U.S.A.}	%optional
\email{sistla@cs.uic.edu }
\thanks{{\lsuper b}Supported in part by NSF CCF-0742686.}	%optional

\author[M.~Viswanathan]{Mahesh Viswanathan\rsuper c}	%optional
\address{{\lsuper c}Univ. of Illinois at Urbana-Champaign, U.S.A}	%optional
\email{vmahesh@cs.uiuc.edu }
\thanks{{\lsuper c}Supported in part by NSF CCF04-48178 and NSF CCF05-09321.}

%\maketitle
%\thispagestyle{empty}
\date{}

\begin{abstract}
 Probabilistic B\"{u}chi Automata (\PBA) are randomized, finite state
 automata that process input strings of infinite length. Based on the
 threshold chosen for the acceptance probability, different classes of
 languages can be defined. In this paper, we present a number of
 results that clarify the power of such machines and properties of the
 languages they define. The broad themes we focus on are as follows.
 We present results on the decidability and precise complexity of the
 emptiness, universality and language containment problems for such
 machines, thus answering questions central to the use of these models
 in formal verification.  Next, we characterize the languages
 recognized by {\PBA}s topologically, demonstrating that though
 general {\PBA}s can recognize languages that are not regular,
 topologically the languages are as simple as $\omega$-regular
 languages. Finally, we introduce Hierarchical {\PBA}s, which are
 syntactically restricted forms of {\PBA}s that are tractable and
 capture exactly the class of $\omega$-regular languages.
\end{abstract}

%\newpage
\subjclass{F.4.3,D.2.4,F.1.1,F.1.2}
\titlecomment{{\lsuper*}An extended abstract of the paper appeared in
\cite{rch:sis:mv:09}.}
\keywords{ Automata on infinite strings, Randomization, Omega-regular languages,
 Expressiveness, Decidability, Probabilistic Monitors}
%{Theory of Computation}{Formal Languages} 
%\category{D.2.4}{Software Engineering}{Program Verification}
%\category{F.1.1}{Theory of Computation}{Models of Computation}
%\category{F.1.2}{Theory of Computation}{Modes of Computation}

\maketitle

\section{Introduction}
Automata on infinite (length) strings have played a central role in
the specification, modeling and verification of non-terminating,
reactive and concurrent
systems~\cite{var:wol,kur,var:wol:sis,hol:pel,SistlaThesis}.  However,
there are classes of systems whose behavior is probabilistic in
nature; the probabilistic behavior being either due to the employment
of randomization in the algorithms executed by the system or due to
other uncertainties in the system, such as failures, that are modeled
probabilistically. While Markov Chains and Markov Decision Processes
have been used to model such behavior in the formal verification
community~\cite{marta-book}, both these models do not adequately
capture \emph{open}, \emph{reactive} probabilistic systems that
continuously accept inputs from an environment. The most appropriate
model for such systems are probabilistic automata on infinite strings,
which are the focus of study in this paper.

Probabilistic B\"uchi Automata (\PBA) have been introduced
in~\cite{Baier:lics2005} to capture such com\-putational devices. These
automata generalize probabilistic finite automata
(\PFA) \cite{Rabin,SALOMAA,PAZ} from finite length inputs to infinite
length inputs. Informally, {\PBA}s are like finite-state automata
except that they differ in two respects. First, from each state and on
each input symbol, the {\PBA} may roll a dice to determine the next
state. Second, the notion of acceptance is different because {\PBA}s
are probabilistic in nature and have infinite length input strings.
The behavior of a {\PBA} on a given infinite input string can be
captured by an infinite Markov chain that defines a probability
measure on the space of runs/executions of the machine on the given
input. Like B\"{u}chi automata, a run is considered to be accepting if
some accepting state occurs infinitely often, and therefore, the
probability of acceptance of the input is defined to be the measure of
all accepting runs on the given input. There are two possible
languages that one can associate with a {\PBA}
$\cB$~\cite{Baier:lics2005,Baier:fossacs2008} --- $\cL_{>0}(\cB)$
(called \emph{probable semantics}) consisting of all strings whose
probability of acceptance is non-zero, and $\cL_{=1}(\cB)$ (called
\emph{almost sure semantics}) consisting of all strings whose probability
of acceptance is $1$. Based on these two languages, one can define two
classes of languages --- $\probable$, and $\alsure$ which are the
collection of all languages (of infinite length strings) that can be
accepted by some {\PBA} with respect to probable, and almost sure
semantics, respectively. In this paper we study the expressive power
of, and decision problems for these classes of languages.

We present a number of new results that highlight three broad themes.
First, we establish results on decidability and  precise complexity of the canonical decision
problems in verification, namely, emptiness, universality, and
language containment, for the classes $\probable$ and $\alsure$.  For
the decision problems, we focus our attention on {\RPBA}s which are
{\PBA}s in which all transition probabilities are rational. For
{\RPBA}s $\cB$ and $\cB'$, our results are as follows.
\begin{enumerate}[(A)]
\item Checking if $\cL_{=1}(\cB) = \emptyset$ and $\cL_{=1}(\cB)
  = \Sigma^*$ are \pspace-complete.
\item The problems of checking if $\cL_{>0}(\cB) = \emptyset$,
  and if $\cL_{>0}(\cB) = \Sigma^*$ are $\mathbf{\Sigma}^0_2$-complete.
\item The problems of checking if $\cL_{=1}(\cB) \subseteq
  \cL_{=1}(\cB')$ and if $\cL_{>0}(\cB) \subseteq \cL_{>0}(\cB')$ are
  $\mathbf{\Sigma}^0_2$-complete.
\end{enumerate}
The decidability of the universality
checking of $\cL_{=1}(\cB)$ (bullet (A) above) is a new result. 
The result establishing the \pspace-completeness of emptiness checking
of $\cL_{=1}(\cB)$ (bullet (A) above) substantially improves the
result of~\cite{Baier:fossacs2008} where it was shown to be decidable
in {\exptime} and conjectured to be \exptime-hard. The improved upper
bound for emptiness checking is established by observing that the complement of the language
$\cL_{=1}(\cB)$ is recognized by a special {\PBA} $\cM$ (with probable
semantics) called a \emph{finite state probabilistic monitor}
(\FPM)~\cite{rch:sis:mv:08,rch:sis:mv:08a} and then exploiting a
result in~\cite{rch:sis:mv:08a} that shows that the language of an
{\FPM} is universal if and only if there is an \emph{ultimately
  periodic word} in the complement of the language recognized by a {\FPM}. This observation of the existence of
ultimately periodic words does not carry over to the class
$\probable$. However, we show that $\cL_{>0}(\cB)$ is non-empty iff it
contains a \emph{strongly asymptotic word}, which is a generalization
of ultimately periodic word. This allows us to show that the emptiness
problem for $\probable$, though undecidable as originally shown
in~\cite{Baier:fossacs2008}, is $\mathbf{\Sigma}^0_2$-complete (bullet (B)
above), where $\mathbf{\Sigma}^0_2$ is a set in the second level of the
arithmetic hierarchy.  This result is noteworthy because typically
problems of automata on infinite words that are undecidable tend to
lie way beyond the arithmetic hierarchy in the analytical
hierarchy. Finally, given that the emptiness and universality problems
for $\alsure$ are in {\pspace} (bullet (A)), one would expect language
containment under almost sure semantics to be at least decidable. However, surprisingly, we show
that it is, in fact, $\mathbf{\Sigma}^0_2$-complete (bullet (C) above).

The second theme brings to sharper focus the correspondence between
nondeterminism and probable semantics, and between determinism and
almost sure semantics, in the context of automata on infinite words.
This correspondence was hinted at in~\cite{Baier:fossacs2008}. There
it was observed that $\alsure$ is a strict subset of $\probable$ and
that while B\"{u}chi, Rabin and Streett acceptance conditions all
yield the same class of languages under the probable semantics, they
yield different classes of languages under the almost sure semantics.
These observations mirror the situation in non-probabilistic automata
--- languages recognized by deterministic B\"{u}chi automata are a
strict subset of the class of languages recognized by nondeterministic
B\"{u}chi automata, and while B\"{u}chi, Rabin and Streett acceptances
are equivalent for nondeterministic machines, B\"{u}chi acceptance is
strictly weaker than Rabin and Streett for deterministic machines. In
this paper we further strengthen this correspondence through a number
of results on the closure properties as well as the topological
structure of $\probable$ and $\alsure$.

First we consider closure properties. It was shown
in~\cite{Baier:fossacs2008} that the class $\probable$ is closed under
all the Boolean operations (like the class of languages recognized by
nondeterministic B\"{u}chi automata) and that $\alsure$ is not closed
under complementation. We extend these observations as follows.
\begin{iteMize}{(A)}
\item[(D)] $\alsure$ is closed under intersection and union.
\item[(E)] Every language in $\probable$ can be expressed as the
  Boolean combination of languages in $\alsure$.
\end{iteMize}
These results mimic similar observations about B\"{u}chi automata ---
the class of languages recognized by deterministic B\"{u}chi automata
is closed under union and intersection, but not complementation; and,
any $\omega$-regular language (or languages recognized by
nondeterministic B\"{u}chi machines) can be expressed as the Boolean
combination of languages recognized by deterministic B\"{u}chi
automata.

Next, we characterize the classes topologically. There is a natural
topological space on infinite length strings called the \emph{Cantor
  topology}~\cite{thomas-handbook}. We show that, like
$\omega$-regular languages, all the classes of languages defined by
{\PBA}s lie in very low levels of this Borel hierarchy.  We show that--
\begin{iteMize}{(A)}
\item[(F) ]$\alsure$ is strictly contained in $\cG_\delta$, just like the class
of languages recognized by deterministic B\"{u}chi is strictly
contained in $\cG_\delta$. 
\item[(G)]
$\probable$ is strictly contained in the Boolean closure of $\cG_\delta$ much like the
case for $\omega$-regular languages.\smallskip
\end{iteMize}

\noindent The last theme identifies syntactic restrictions on {\PBA}s that
captures regularity. Much like {\PFA}s for finite word languages,
{\PBA}s, though finite state, allow one to recognize non-regular
languages. It has been shown~\cite{Baier:lics2005,Baier:fossacs2008}
that both $\probable$ and $\alsure$ contain non-$\omega$-regular
languages. A question initiated in~\cite{Baier:lics2005} was to
identify restrictions on {\PBA}s that ensure that {\PBA}s have the
same expressive power as finite-state (non-probabilistic) machines.
One such restriction was identified in~\cite{Baier:lics2005}, where it
was shown that \emph{uniform} {\PBA}s with respect to the probable
semantics capture exactly the class of $\omega$-regular languages.
However, the uniformity condition identified by Baier et. al. was
semantic in nature. In this paper, we identify one simple syntactic
restriction (i.e., one that is based only on the local transition
structure of the machine, and can be efficiently checked) that
captures regularity both for probable semantics and almost sure
semantics. Not only, the restricted {\PBA}s capture  the notion of 
regularity, they are also very tractable. 

The restriction we consider is that of a hierarchical
structure. A \emph{Hierarchical {\PBA} (\HPBA)} is a {\PBA} whose
states are partitioned into different levels such that, from any state
$q$, on an input symbol $a$, at most one transition with non-zero
probability goes to a state at the same level as $q$ and all others go
to states at higher level. We show that --
\begin{iteMize}{(A)}
\item[(H)] {\HPBA}s with respect to
probable semantics define exactly the class of $\omega$-regular
languages.
\item[(I)] {\HPBA}s with respect to almost sure semantics define exactly
the class of $\omega$-regular languages in $\alsure$, namely, those
recognized by deterministic B\"{u}chi automata.
\item[(J)] Emptiness and universality
problems for probable semantics for {\HPBA}s with rational transition probabilities are \nl-complete
and \pspace-complete, respectively.
\item[(K)] Emptiness and universality
problems for 
almost sure semantics for {\HPBA}s with rational transition probabilities 
 \pspace-complete and  \nl-complete,
respectively.
\end{iteMize}
% Though {\HPBA}s have
%the same expressive power as (non-probabilistic) finite state
%automata, they can be exponentially more succinct, i.e., there is a
%family of languages such that the smallest {\HPBA} is exponentially
%smaller than the smallest deterministic machine.
The complexity of decision problems for {\HPBA}s under probable semantics
is interesting because this is the exact same complexity as that for
(non-probabilistic) B\"{u}chi automata. In contrast, the emptiness
problem for uniform {\PBA} has been shown to be in {\exptime} and
co-\np-hard~\cite{Baier:lics2005}; thus, they seem to be less
tractable than {\HPBA}.

The rest of the paper is organized as follows. After discussing
closely related work, we start with some preliminaries (in
Section~\ref{sec:prelim}) before introducing {\PBA}s. We present our
results about the probable semantics in Section~\ref{sec:probable},
and almost sure semantics in Section~\ref{sec:alsure}. Hierarchical
{\PBA}s are introduced in Section~\ref{sec:hier}, and conclusions are
presented in Section~\ref{sec:conc}.

{\bf Related Work.}  Probabilistic B\"{u}chi automata (\PBA),
introduced in~\cite{Baier:lics2005}, generalize the model of
Probabilistic Finite Automata~\cite{Rabin,SALOMAA,PAZ} to consider
inputs of infinite length. In~\cite{Baier:lics2005}, Baier and
Gr\"{o}{\ss}er only considered the probable semantics for {\PBA}. They
also introduced the model of uniform {\PBA}s to capture
$\omega$-regular languages and showed that the emptiness problem for
such machines is in {\exptime} and co-\np-hard. The almost sure
semantics for {\PBA} was first considered in~\cite{Baier:fossacs2008}
where a number of results were established. It was shown that
$\probable$ are closed under all Boolean operations, $\alsure$ is
strictly contained in $\probable$, the emptiness problem for
$\probable$ is undecidable, and the emptiness problem of $\alsure$ is
in {\exptime}. We extend and sharpen the results of this paper. In a
series of previous papers~\cite{rch:sis:mv:08,rch:sis:mv:08a}, we
considered a special class of {\PBA}s called {\FPM}s (Finite state
Probabilistic Monitors) whose accepting set of states consists of all
states excepting a rejecting state which is also absorbing.  There we
proved a number of results on the expressiveness and
decidability/complexity of problems for {\FPM}s. We draw on many of
these observations to establish new results for the more general model
of {\PBA}s. 

An extended abstract of this paper appeared in
\cite{rch:sis:mv:09}. Several proofs were omitted in
\cite{rch:sis:mv:09} for lack of space, and the current version
includes all of these proofs.

%%%%%%%%%%%%%%%%%%%%%%%%%%%%%%%%%%%%%%%%%%%%%%%%%%%%%%%%%%%%%%%%%%%%%

%%%%%%%%%%%%%%%%%%%%%%%%%%%%%%%%%%%%%%%%%%%%%%%%%%%%%%%%%%%%%%%%%%%%%%%%

\section{Preliminaries}
\label{sec:prelim}

%We assume that the reader is familiar with arithmetical
%hierarchy
%for the sake of convenience of the reader, we have also introduced them 
%in Appendix \ref{app:prelim}).  
The set of natural numbers will be denoted by $\Nats$,
the closed unit interval by $[0,1]$ and the open unit interval by
$(0,1).$ The power-set of a set $X$ will be denoted by $\powerset X.$

\vspace*{0.1cm}
\noindent {\bf Sequences.}  Given a finite set $S$, $\card S$ denotes
the cardinality of $S$.  Given a sequence (finite or infinite)
$\kappa=s_0,s_1,\ldots$ over $S$, $\card\kappa$ will denote the length
of the sequence (for infinite sequence $\card \kappa$ will be
$\omega$), and $\kappa[i]$ will denote the $i$th element $s_i$ of the
sequence. As usual $S^*$ will denote the set of all finite
sequences/strings/words over $S$, $S^+$ will denote the set of all finite
non-empty sequences/strings/words over $S$ and $S^\omega$ will denote the set of
all infinite sequences/strings/words over $S$. Given $\eta\in S^*$ and
$\kappa \in S^*\union S^\omega$, $\eta\kappa$ is the sequence obtained
by concatenating the two sequences in order.  Given $\sL_1\subseteq
\Sigma^*$ and $\sL_2\subseteq \Sigmaw$, the set $\sL_1\sL_2$ is
defined to be $\set{\eta\kappa\st \eta\in \sL_1 \textrm{ and } \kappa
  \in\sL_2}.$ Given natural numbers $i,j\leq \card \kappa$,
$\kappa[i:j]$ is the finite sequence $s_i,\ldots s_j$, where $s_k =
\kappa[k]$.  The set of {\it finite prefixes} of $\kappa$ is the set
$\Pref{(\kappa)}=\set{\kappa[0,j]\st j \in \Nats, j\leq\card \kappa}$.

\vspace*{0.1cm}
\noindent{\bf Arithmetical Hierarchy.}
Let $\Gamma$ be a finite alphabet. A language $\sL$ over $\Gamma$ is a set of 
finite strings over $\Gamma$. Arithmetical hierarchy consists of 
classes of languages $\mathbf{\Sigma}^0_n,\;\mathbf{\Pi}^0_n$ for each integer $n>0$.
Fix an $n>0$. A language $\sL\in \mathbf{\Sigma}^0_n$ iff there exists a recursive
predicate $\phi(u,\vec{x}_1,...,\vec{x}_n)$ where $u$ is a variable ranging
over $\Gamma^*$, and for each $i$,$0<i\leq n$,
$\vec{x}_i$ is a finite sequence of variables
ranging over integers such that 
$$L\;=\;\{u\in \Gamma^*\st \exists \vec{x}_1,\forall \vec{x}_2,\ldots,
Q_n \vec{x}_n \;\phi(u,\vec{x}_1,...,\vec{x}_n)\}$$ where $Q_n$ is an
existential quantifier if $n$ is odd, else it is a universal
quantifier. Note that the quantifiers in the above equation are
alternating starting with an existential quantifier. The class
$\mathbf{\Pi}^0_n$ is exactly the class of languages that are
complements of languages in
$\mathbf{\Sigma}^0_n$. $\mathbf{\Sigma}^0_1,\;\mathbf{\Pi}^0_1$ are
exactly the class of \re-sets and \core-sets. A canonical
$\mathbf{\Sigma}^0_1$-complete~\footnote{Let ${\cal C}$ be a class in
  the arithmetic hierarchy. $L \in {\cal C}$ is said to be ${\cal
    C}$-complete if $L \in C$, and for every $L' \in {\cal C}$ there
  is a computable function $f$ such that $x \in L'$ iff $f(x) \in L$.}
language is the set of deterministic Turing machine encodings that
halt on some input string.  A well known
$\mathbf{\Sigma}^0_2$-complete language is the set of deterministic
Turing machine encodings that halt on finitely many inputs.

\vspace*{0.1cm}
\noindent {\bf Languages of infinite words.}  A language $\sL$ of
infinite words over a finite alphabet $\Sigma$ is a subset of
$\Sigmaw.$ (Please note we restrict only to finite alphabets).  A set
of languages of infinite words over $\Sigma $ is said to be a class of
languages of infinite words over $\Sigma$.  Given a class $\cL$, the
Boolean closure of $\cL$, denoted $\cBool(\cL)$, is the smallest class
containing $\cL$ that is closed under the Boolean operations of
complementation, union and intersection.

\vspace*{0.1cm}
\noindent{\bf Automata and $\omega$-regular Languages.}  A
\emph{finite automaton on infinite words}, ${\cal A}$, over a (finite)
alphabet $\Sigma$ is a tuple $(\Q,q_0,F,\Delta)$, where $\Q$ is a
finite set of states, $\Delta\subset \Q\times\Sigma\times \Q $ is the
transition relation, $q_0\in \Q$ is the initial state, and $F$ defines
the accepting condition. The nature of $F$ depends on the type of
automaton we are considering; for a \emph{B\"{u}chi automaton} $F
\subseteq \Q$, while for a \emph{Rabin automaton} $F$ is a finite
subset of $\powerset{\Q} \times \powerset{Q}$. If for every $q\in Q$
and $a\in \Sigma$, there is exactly one $q'$ such that $(q,a,q')\in
\Delta$ then ${\cal A}$ is called a {\it deterministic} automaton.
Let $\alpha\,=\, a_0, a_1,\ldots$ be an infinite string over
$\Sigma$. A {\it run $r$ of ${\cal A}$} on $\alpha$ is an infinite
sequence $s_0,s_1,\ldots$ over $Q$ such that $s_0 =q_0$ and for every
$i\geq 0$, $(s_{i},a_i,s_{i+1})\in \Delta$. The notion of an
\emph{accepting run} depends on the type of automaton we consider. For
a B\"{u}chi automaton, $r$ is accepting if some state in $F$ appears
infinitely often in $r$. On the other hand for a Rabin automaton, $r$
is accepting if it satisfies the {\it Rabin acceptance condition} ---
there is some pair $(B_i,G_i) \in F$ such that all the states in $B_i$
appear only finitely many times in $r$, while at least one state in
$G_i$ appears infinitely many times.  The automaton ${\cal A}$ {\it
  accepts} the string $\alpha$ if it has an accepting run on $\alpha$.
The {\it language accepted (recognized) by ${\cal A}$}, denoted by
$\cL({\cal A})$, is the set of strings that ${\cal A}$ accepts. A
language $\sL \subseteq \Sigmaw$ is called \emph{$\omega$-regular} iff
there is some B\"{u}chi automata ${\cal A}$ such that $\cL({\cal A}) =
\sL$. In this paper, given a fixed alphabet $\Sigma$, we will denote
the class of $\omega$-regular languages by $\Regular$. It is
well-known that unlike the case of finite automata on finite strings,
deterministic B\"{u}chi automata are less powerful than
nondeterministic B\"{u}chi automata. On the other hand,
nondeterministic Rabin automata and deterministic Rabin automata have
the expressive power and they recognize exactly the class
$\Regular$. Finally, we will sometimes find it convenient to consider
automata ${\cal A}$ that do not have finitely many states. We will say
that a language $\sL$ is \emph{deterministic} iff it can be accepted
by a deterministic B\"{u}chi automaton that does not necessarily have
finitely many states. We denote by $\Det$ the collection of all
deterministic languages. Please note that the class $\Det$ strictly
contains the class of languages recognized by finite state
deterministic B\"uchi automata. The following are well-known
results~\cite{book,thomas-handbook}.
\begin{prop}\rm
\label{prop:reg-det}
 $\sL \in \Regular\intersect \Det$ iff there is a finite state
deterministic B\"uchi automaton $\cA$ such that $\cL(\cA)=\sL.$
Furthermore, $\Regular\intersect \Det \subsetneq \Regular$ and
$\Regular = \cBool(\Regular\intersect \Det)$.
\end{prop}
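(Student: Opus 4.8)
The plan is to prove the three assertions in turn, drawing on standard facts about $\omega$-regular languages and the Landweber/McNaughton characterizations. First I would prove the equivalence $\sL \in \Regular\intersect\Det$ iff $\sL$ is accepted by a \emph{finite-state} deterministic B\"uchi automaton. The ``if'' direction is immediate: a finite-state deterministic B\"uchi automaton is in particular a B\"uchi automaton (so $\sL\in\Regular$) and a deterministic B\"uchi automaton of the (possibly infinite) kind allowed in the definition of $\Det$ (so $\sL\in\Det$). For the ``only if'' direction, suppose $\sL$ is $\omega$-regular and also equals $\cL(\cD)$ for some deterministic B\"uchi automaton $\cD$ with possibly infinitely many states. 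Being $\omega$-regular, $\sL$ is a Boolean combination of sets of the form $\vec W = \{\word : \mathit{Inf}(\word)\cap W \neq\emptyset\}$ over a fixed finite deterministic automaton; equivalently, $\sL$ lies in the Borel class determined by its position in the Wagner/Landweber hierarchy. The key classical fact I will invoke is Landweber's theorem: an $\omega$-regular language is recognizable by a \emph{finite} deterministic B\"uchi automaton if and only if it is a $\cG_\delta$ set in the Cantor topology. So it suffices to argue that $\sL$, being accepted by \emph{some} (possibly infinite) deterministic B\"uchi automaton, is $\cG_\delta$: indeed $\cL(\cD) = \bigcap_{n}\bigcup_{k\geq n}\{\word : \text{the run of }\cD\text{ on }\word\text{ visits }F\text{ at step }k\}$, and each inner set is open since it depends only on a finite prefix of $\word$; hence $\cL(\cD)$ is $\cG_\delta$. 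Combining with $\sL\in\Regular$ and Landweber's theorem yields a finite-state deterministic B\"uchi automaton for $\sL$.

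Next I would establish the strict inclusion $\Regular\intersect\Det \subsetneq \Regular$. Inclusion is trivial. For strictness I would exhibit the standard witness: over $\Sigma=\{a,b\}$, the language $\sL_0 = \{\word : b \text{ occurs only finitely often in }\word\}$ is $\omega$-regular (accepted by an obvious nondeterministic B\"uchi automaton that guesses the last $b$), but $\sL_0$ is not $\cG_\delta$ — it is properly $\mathbf{\Sigma}^0_2$-complete / a $\cF_{\sigma\delta}$-complete set — so by Landweber's theorem it is not recognizable by any finite deterministic B\"uchi automaton, and therefore by the first part $\sL_0\notin\Det$, a fortiori $\sL_0\notin\Regular\intersect\Det$.

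Finally, for $\Regular = \cBool(\Regular\intersect\Det)$: the inclusion $\supseteq$ holds because $\Regular$ is closed under complementation, union, and intersection (complementation of $\omega$-regular languages is classical, e.g.\ via determinization to Rabin automata), and it contains $\Regular\intersect\Det$; hence it contains its Boolean closure. For $\subseteq$, I would use McNaughton's theorem: every $\omega$-regular language is recognized by a deterministic Rabin automaton $\cR=(\Q,q_0,F,\Delta)$ with $F=\{(B_i,G_i)\}_{i=1}^m$. Then $\cL(\cR) = \bigcup_{i=1}^m \big(\{\word : \mathit{Inf}(\word)\cap G_i\neq\emptyset\} \cap \overline{\{\word : \mathit{Inf}(\word)\cap B_i\neq\emptyset\}}\big)$, where $\mathit{Inf}(\word)$ denotes the set of states visited infinitely often by the (unique) run of $\cR$ on $\word$. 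Each set $\{\word : \mathit{Inf}(\word)\cap W\neq\emptyset\}$ for $W\subseteq\Q$ is recognized by a finite-state deterministic B\"uchi automaton (namely $\cR$ with accepting set $W$), hence lies in $\Regular\intersect\Det$; so $\cL(\cR)$ is a finite union of finite intersections of a set in $\Regular\intersect\Det$ with the complement of such a set, i.e.\ $\cL(\cR)\in\cBool(\Regular\intersect\Det)$.

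I expect the main obstacle to be the ``only if'' direction of the first equivalence — specifically the move from an \emph{arbitrary, possibly infinite-state} deterministic B\"uchi automaton to a \emph{finite-state} one. The point to get right is that $\Det$ as defined here is genuinely larger than finite-state determinism, so one cannot just quote a finite-automata normal form; the bridge must go through topology (every deterministic B\"uchi language is $\cG_\delta$) together with $\omega$-regularity and Landweber's theorem, rather than through any direct automaton-theoretic minimization. Everything else is an assembly of textbook results (McNaughton, Landweber, closure under complement) and the explicit Borel computations above, which are routine once the topological characterization of finite deterministic B\"uchi recognizability is in hand.
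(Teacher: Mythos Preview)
The paper does not actually prove this proposition: it is stated as a well-known result and simply cited to the standard references \cite{book,thomas-handbook}. Your plan is correct and is precisely the textbook argument one finds in those references (Landweber's characterization for the first equivalence, the ``finitely many $b$'s'' language for strictness, and the decomposition via a deterministic Rabin automaton obtained from McNaughton's theorem for the Boolean-closure statement).

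One small slip: the language $\sL_0=\{\word: b\text{ occurs only finitely often}\}$ is $\cF_\sigma$-complete (equivalently $\mathbf{\Sigma}^0_2$-complete in the Borel/Wadge sense), not ``$\cF_{\sigma\delta}$-complete'' as you wrote; but all you need is that it is not $\cG_\delta$, which is indeed the case. Also, the phrasing ``therefore by the first part $\sL_0\notin\Det$, a fortiori $\sL_0\notin\Regular\intersect\Det$'' is slightly roundabout: what you actually get from the first part plus Landweber is directly $\sL_0\notin\Regular\intersect\Det$, and since $\sL_0\in\Regular$ this gives the strict inclusion. (If you want $\sL_0\notin\Det$ itself, appeal instead to $\Det=\cG_\delta$.) These are cosmetic points; the substance of your argument is sound.
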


\vspace*{0.1cm}
\noindent{\bf Topology on infinite strings.} The set $\Sigmaw$
comes equipped with a natural topology called the \emph{Cantor
  topology}. The collection of open sets is the collection
$\cG=\set{\sL\Sigmaw\st \sL\subseteq \Sigma^+}$.\footnote{This
  topology is also generated by the 
  metric $d:\Sigmaw\times \Sigmaw \to [0,1]$ where $d(\alpha,\beta)$
  is $0$ iff $\alpha=\beta$; otherwise it is $\frac{1}{2^i} $ where
  $i$ is the smallest integer such that $\alpha[i]\ne \beta[i].$ } The
collection of closed sets, $\cF$, is the collection of {\it
  prefix-closed sets} --- ${\sL}$ is prefix-closed if for every
infinite string ${\alpha}$, if every prefix of ${\alpha}$ is a prefix
of some string in $\sL$, then ${\alpha}$ itself is in $\sL$. In the
context of verification of reactive systems, closed sets are also
called {\it safety languages}~\cite{La85,AS85}. 

\begin{figure}
\begin{center}
\begin{tikzpicture}
\input{borel-hierarchy.tkz}
\end{tikzpicture}
\end{center}
\caption{The Borel Hierarchy. Inclusions from left to right are strict.}
\label{fig:borel}
\end{figure}

\vspace*{0.1cm}
\noindent{\bf Borel Hierarchy on the Cantor space}.  For a class $\cL$
of languages, we define $\cL_\delta=\set{\intersect_{i\in \Nats}
  \sL_i\st \sL_i\in\cL }$ and $\cL_\sigma=\set{\union_{i\in \Nats}
  \sL_i\st \sL_i\in\cL }$.  The set of open sets of the Cantor space
is closed under arbitrary unions but only finite
intersections. Similarly the set of closed sets of the Cantor union is
closed arbitrary intersections but only finite unions. The Borel
hierarchy of the Cantor space is obtained by the means of countable
unions, intersections and complementation, and is shown in
Figure~\ref{fig:borel}.  This yields a transfinite hierarchy, but we
will restrict our attention to the first few levels.  At the lowest
level of this hierarchy is the collection $\cG\intersect \cF$ which is
strictly contained in both $\cG$ and $\cF$ which form the next level
of the hierarchy.  Both $\cG$ and $\cF$ are strictly contained in the
collection $\cG_{\delta} \intersect \cF_{\sigma}$ which forms the next
level. The collection $\cG_{\delta}\intersect \cF_{\sigma}$ is
strictly contained in $\cG_{\delta}$ and $\cF_{\sigma}$ which is at
the next level.  $\cG_{\delta}$ and $\cF_{\sigma}$ are strictly
contained in $\cG_{\delta\sigma}\intersect \cF_{\sigma\delta}$ which
itself is strictly contained in both $\cG_{\delta\sigma}$ and
$\cF_{\sigma\delta}$.  One remarkable result in automata theory is
that the class of languages $\cG_\delta$ coincides exactly with the
class of languages recognized by infinite-state deterministic B\"uchi
automata \cite{Landweber69,book,thomas-handbook}. This combined with the fact that
the class of $\omega$-regular languages is the Boolean closure of
$\omega$-regular deterministic B\"uchi automata yields that the class
of $\omega$-regular languages is strictly contained in
$\cBool(\cG_\delta)$ which itself is strictly contained in
$\cG_{\delta\sigma}\intersect \cF_{\sigma\delta}$
\cite{book,thomas-handbook}.

\begin{prop}\rm
$\cG_{\delta}=\Det,$ and $\Regular \subsetneq \cBool(\cG_\delta)
  \subsetneq \cG_{\delta\sigma}\intersect \cF_{\sigma\delta}.$
\end{prop}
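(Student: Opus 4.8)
The statement to prove is the proposition asserting $\cG_\delta = \Det$ together with the strict inclusions $\Regular \subsetneq \cBool(\cG_\delta) \subsetneq \cG_{\delta\sigma} \cap \cF_{\sigma\delta}$.

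The plan is to assemble this from the standard facts recalled in the surrounding text rather than to reprove everything from scratch. First I would prove $\cG_\delta = \Det$. For the inclusion $\Det \subseteq \cG_\delta$, given a (possibly infinite-state) deterministic B\"uchi automaton $\cA$ with accepting set $F$, I would write $\cL(\cA) = \bigcap_{n \in \Nats} U_n$, where $U_n$ is the set of infinite words whose unique run visits $F$ at least $n$ times after step... more precisely, $U_n = \{\alpha : \text{the run on }\alpha\text{ visits }F\text{ at some position} \geq n\}$. Each $U_n$ is open (membership depends only on a finite prefix, since once the run has visited $F$ beyond position $n$ the condition is settled), and $\alpha \in \cL(\cA)$ iff the run visits $F$ infinitely often iff $\alpha \in U_n$ for all $n$. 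Hence $\cL(\cA) \in \cG_\delta$. For the converse $\cG_\delta \subseteq \Det$, given $L = \bigcap_n G_n$ with each $G_n = W_n \Sigma^\omega$ open ($W_n \subseteq \Sigma^+$), I would build an infinite-state deterministic B\"uchi automaton that, while reading $\alpha$, keeps track for each $n$ (using a dovetailing/bounded bookkeeping that works because at each finite stage only finitely many constraints are "active") of whether the prefix read so far already lies in $W_n$, and marks a state accepting exactly when the smallest not-yet-satisfied index gets satisfied; then the run visits accepting states infinitely often iff every $G_n$ is eventually entered, i.e. iff $\alpha \in L$. This is the classical characterization of $\cG_\delta$ as deterministic B\"uchi-recognizable, so I would cite \cite{Landweber69,book,thomas-handbook} and only sketch the construction.

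Next, the strict inclusions. From Proposition~\ref{prop:reg-det} I have $\Regular = \cBool(\Regular \cap \Det)$ and $\Regular \cap \Det \subsetneq \Regular$. Since $\Regular \cap \Det \subseteq \Det = \cG_\delta$, applying the Boolean-closure operator monotonically gives $\Regular = \cBool(\Regular \cap \Det) \subseteq \cBool(\cG_\delta)$. For strictness of $\Regular \subsetneq \cBool(\cG_\delta)$, I would exhibit a non-$\omega$-regular language that lies in $\cG_\delta$ — for instance a language recognized by an infinite-state deterministic B\"uchi automaton but by no finite-state one, such as $\{\alpha \in \{a,b\}^\omega : \alpha$ contains infinitely many $b$'s and between consecutive $b$'s the blocks of $a$'s have strictly increasing length$\}$, or any standard example; such a language is in $\cG_\delta = \Det \subseteq \cBool(\cG_\delta)$ but not in $\Regular$. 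Finally, for $\cBool(\cG_\delta) \subsetneq \cG_{\delta\sigma} \cap \cF_{\sigma\delta}$: every element of $\cG_\delta$ is trivially in $\cG_{\delta\sigma}$ and (being $\cG_\delta$, hence both $\cF_\sigma$-ish after one complementation...) — more carefully, $\cG_\delta \subseteq \cF_{\sigma\delta}$ since a $\cG_\delta$ set is in particular a countable intersection of countable unions of closed sets, and $\cG_\delta \subseteq \cG_{\delta\sigma}$ trivially; complements of $\cG_\delta$ sets are $\cF_\sigma$ sets, which lie in both $\cG_{\delta\sigma}$ and $\cF_{\sigma\delta}$; and both these classes are closed under finite unions and intersections, so the whole Boolean closure $\cBool(\cG_\delta)$ sits inside $\cG_{\delta\sigma} \cap \cF_{\sigma\delta}$. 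Strictness follows from the classical fact that there is a set properly in level $\cG_{\delta\sigma} \cap \cF_{\sigma\delta}$ but not in $\cBool(\cG_\delta)$ (the Borel hierarchy on Cantor space is strict), again citing \cite{book,thomas-handbook}.

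The main obstacle is the direction $\cG_\delta \subseteq \Det$: turning a bare countable intersection of open sets into a \emph{deterministic} B\"uchi automaton requires the dovetailing argument to be set up so that the automaton's state after a finite prefix is a well-defined function of that prefix and correctly records "progress" on infinitely many open constraints simultaneously. The key idea making this work is that an open set $W_n \Sigma^\omega$ is entered at a finite, determined time, so the automaton can process the constraints in a round-robin fashion, accepting whenever it certifies entry into the lowest-indexed still-pending $G_n$; since everything here is folklore I would present it as a sketch with a citation rather than a full construction, and spend most of the written proof on the clean part ($\Det \subseteq \cG_\delta$) and on deriving the four inclusion/strictness facts from Proposition~\ref{prop:reg-det} and the known strictness of the Borel hierarchy.
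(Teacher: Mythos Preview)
The paper does not actually prove this proposition: it is stated as background with citations to \cite{Landweber69,book,thomas-handbook}, and the preceding paragraph merely sketches why the inclusions hold (Landweber's characterization of $\cG_\delta$ plus Proposition~\ref{prop:reg-det}). Your proposal is therefore already more detailed than what the paper offers, and the line of argument you give is the standard one behind those references.

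Your sketch is correct. A couple of small points worth tightening if you write it up: for $\cG_\delta \subseteq \cF_{\sigma\delta}$ you should say explicitly that in Cantor space every open set is $\cF_\sigma$ (it is a countable union of basic clopen cylinders, which are closed), so $\cG_\delta \subseteq (\cF_\sigma)_\delta$; dually $\cF_\sigma \subseteq \cG_{\delta\sigma}$. Then $\cG_{\delta\sigma}\cap\cF_{\sigma\delta}$ is closed under complementation (the two classes are dual) and under finite unions and intersections, hence contains $\cBool(\cG_\delta)$. For the strictness $\cBool(\cG_\delta)\subsetneq \cG_{\delta\sigma}\cap\cF_{\sigma\delta}$, the clean justification is the Hausdorff--Kuratowski theorem: $\cBool(\cG_\delta)$ is exactly the union of the finite levels of the difference hierarchy over $\cG_\delta$, whereas $\cG_{\delta\sigma}\cap\cF_{\sigma\delta}=\Delta^0_3$ exhausts all transfinite levels, and the hierarchy is strict on Cantor space. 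Citing \cite{book,thomas-handbook} for this, as you do, is exactly what the paper does.
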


\subsection{Probabilistic B\"uchi automata}
\label{sec:pba}
We shall now recall the definition of probabilistic B\"uchi automata
given in \cite{Baier:lics2005}.  Informally, {\PBA}s are like
finite-state deterministic B\"uchi automata except that the transition
function from a state on a given input is described as a probability
distribution that determines the probability of the next state.
{\PBA}s generalize the probabilistic finite automata
(\PFA)~\cite{Rabin,SALOMAA,PAZ} on finite input strings to infinite
input strings.  Formally,
\begin{defi}\rm
A {\it finite state probabilistic B\"uchi automata ({\PBA})} over a
finite alphabet $\Sigma$ is a tuple $\cB=(\Q,\qs,\Qf,\delta )$ where
$\Q$ is a finite set of {\it states}, $\qs \in \Q$ is the {\it initial
  state}, $\Qf \subseteq \Q$ is the set of {\it accepting/final states}, and
$\delta:\Q\times \Sigma \times \Q\to \unit$ is the {\it transition
  relation} such that for all $q\in \Q$ and $a \in \Sigma$, $
\sum_{q'\in \Q} \delta(q,a,q') =1.$ In addition, if $\delta(q,a,q')$
is a rational number for all $q,q'\in \Q, a \in \Sigma$, then we say
that $\cM$ is a rational probabilistic B\"uchi automata (\RPBA).
\end{defi}

\begin{notation}
The transition function $\delta$ of {\PBA} $\cB$ on input $a$ can be
seen as a square matrix $\delta_a$ of order $\card Q$ with the rows
labeled by ``current'' state, columns labeled by ``next state'' and
the entry $\delta_a(\q,\q')$ equal to $ \delta(\q,a,\q')$. Given a
word $u = a_0a_1\ldots a_n \in \Sigma^+$, $\delta_u$ is the matrix
product $\delta_{a_0}\delta_{a_1}\ldots \delta_{a_n}.$ For an empty
word $\epsilon\in \Sigma^*$ we take $\delta_\epsilon$ to be the
identity matrix.  Finally for any $\Q_0\subseteq \Q,$ we define
$\delta_u(q,\Q_0)=\sum_{q'\in \Q_0}\delta_u(q,q').$ Given a state
$q\in Q$ and a word $u\in \Sigma^+$, $\post(q,u)=\set{q'\st
  \delta_u(q,q')>0}.$
\end{notation}

Intuitively, the {\PBA} starts in the initial state $\qs$ and if after
reading $a_0,a_1\ldots,a_i$ results in state $\q$, then it moves to
state $\q'$ with probability $\delta_{a_{i+1}}(\q,\q')$ on symbol
$a_{i+1}$.  Given a word $\word\in \Sigmaw$, the \PBA\ $\cB$ can be
thought of as a infinite state Markov chain which gives rise to the
standard $\sigma$-algebra defined using cylinders and the standard
probability measure on Markov chains~\cite{markov1,markov2} as
follows.
%Given a \PBA, $\cB=\fpm{\Q,\qs,\Qf,\delta }$ on the alphabet $\Sigma$
%and a word $\word \in \Sigmaw$, 
Given a word $\word \in \Sigmaw$, the {\em probability space
  generated} by $\cB$ and $\word$ is the probability space
$(\Q^\omega,\mathcal{F}_{\cB,\word}, \mu_{\cB,\word}) $ where
\begin{iteMize}{$\bullet$}
\item $\mathcal{F}_{\cB,\word}$ is the smallest $\sigma$-algebra on
  $\Q^\omega$ generated by the collection $\{\C_\cu\st \cu \in \Q^+\}$
  where $\C_\cu=\{\rho \in \Q^\omega \st \cu \textrm{ is a prefix of }
  \rho \}$.
         
\item $\mu_{\cB,\word}$ is the unique probability measure on
  $(\Q^\omega,\mathcal{F}_{\cB,\word})$ such that
  $\mu_{\cB,\word}(\C_{q_0\ldots q_n})$ is
\begin{iteMize}{$-$}
\item $0$ if $q_0 \ne q_s$, 
\item $1$ if $n=0$ and $q_0=\qs$, and 
\item $\delta(q_0,\word(0),q_1)\ldots \delta(q_{n-1},\word(n-1),q_n)$
  otherwise.
\end{iteMize}
\end{iteMize}\smallskip
%\end{defi} 

% We denote this measure by $\mu_{\cB,\alpha}$. 
\noindent A {\em run} of the \PBA\ $\cB$ is an infinite sequence $\rho \in
\Q^\omega$.  A run $\rho$ is {\em accepting} if $\rho[i]\in \Qf$ for
infinitely many $i$. A run $\rho$ is said to be {\em rejecting} if it
is not accepting.  The set of accepting runs and the set of rejecting
runs are measurable~\cite{markov1}. Given a word $\word$, the measure
of the set of accepting runs is said to be the {\it probability of
  accepting $\word$} and is henceforth denoted by $\muacc\cB\word$;
and the measure of the set of rejecting runs is said to be the {\it
  probability of rejecting $\word$} and is henceforth denoted by
$\murej\cB\word$.  Clearly $\muacc\cB\word+\murej\cB\word=1.$
Following, \cite{Baier:lics2005,Baier:fossacs2008}, a {\PBA} $\cB$ on
alphabet $\Sigma$ defines two {\it semantics}:
\begin{iteMize}{$\bullet$}
\item $\cL_{>0}(\cB)=\set{\alpha\in \Sigmaw\st\muacc\cB\word >0}$,
  henceforth referred to as the {\it probable semantics} of $\cB$, and
 
\item $\cL_{=1}(\cB)=\set{\alpha\in \Sigmaw\st \muacc\cB\word =1}$,
   henceforth referred  to as the {\it almost-sure semantics} of $\cB$.
   
\end{iteMize}
This gives rise to the following classes of languages of infinite
words.
\begin{defi}\rm
Given a finite alphabet $\Sigma$, 
\begin{iteMize}{$\bullet$}
\item 
$\probable=\set{\sL\subseteq \Sigmaw\st \exists{ \PBA}\; \cB.\; \sL=\cL_{> 0}(\cB) };$

\item
$\alsure=\set{\sL\subseteq \Sigmaw\st \exists{ \PBA}\; \cB.\; \sL=\cL_{=1 }(\cB)  }.$
\end{iteMize}
\end{defi}  

\noindent
{\bf Probabilistic Rabin automaton.} Analogous to the definition of a
{\PBA} and \RPBA, one can define a Probabilistic Rabin automaton
{\PRA} and {\RPRA} \cite{Baier:fossacs2008,GroesserThesis}; where
instead of using a set of final states, a set of pairs of subsets of
states is used. A run in that case is said to be accepting if it
satisfies the Rabin acceptance condition. It is shown in
\cite{Baier:fossacs2008,GroesserThesis} that {\PRA}s have the same
expressive power under both probable and almost-sure
semantics. Furthermore, it is shown in
\cite{Baier:fossacs2008,GroesserThesis} that for any {\PBA} $\cB$,
there is {\PRA} $\cR$ such that a word $\alpha$ is accepted by $\cR$
with probability $1$ iff $\alpha$ is accepted by $\cB$ with
probability $>0$. All other words are accepted with probability $0$ by
$\cR$.
\begin{prop}[\cite{Baier:fossacs2008}]\rm
\label{prop:buchi-rabin}
For any {\PBA} $\cB$ there is a {\PRA} $\cR$ such that
$\cL_{>0}(\cB)=\cL_{>0}(\cR)=\cL_{=1}(\cR)$ and
$\cL_{=0}(\cB)=\cL_{=0}(\cR).$ Furthermore, if $\cB$ is a $\RPBA$ then
$\cR$ is a $\RPRA$ and the construction of $\cR$ is recursive.
\end{prop}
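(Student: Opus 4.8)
The plan is to prove Proposition~\ref{prop:buchi-rabin} as the probabilistic counterpart of the classical translation of a nondeterministic B\"uchi automaton into a deterministic Rabin automaton: I would build $\cR$ as a product of $\cB$ with a deterministic ``support monitor'' and transfer acceptance across this product by means of a $0$--$1$ law. The support monitor $\cPP$ is the deterministic automaton over $\Sigma$ with states $\powerset{Q}\setminus\set{\emptyset}$, initial state $\set{\qs}$, and transitions $S\xrightarrow{a}\bigcup_{q\in S}\post(q,a)$; reading $\word\in\Sigmaw$ it traverses a sequence $R_0=\set{\qs},R_1,R_2,\ldots$, and an easy induction on $i$ gives $R_i=\set{q\in Q\st\delta_{\word[0:i-1]}(\qs,q)>0}$, i.e.\ $R_i$ is exactly the support of the state distribution of $\cB$ at step $i$ on $\word$. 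I then take $\cR$ to have states $\set{(q,S)\st S\in\powerset{Q}\setminus\set{\emptyset},\ q\in S}$, initial state $(\qs,\set{\qs})$, and a transition from $(q,S)$ on $a$ to $\bigl(q',\bigcup_{p\in S}\post(p,a)\bigr)$ with probability $\delta(q,a,q')$ --- optionally enriching $S$ with a ``breakpoint'' component recording which states still owe a visit to $\Qf$ since the last reset, a refinement I suppress below. Since the second component is a deterministic function of the input prefix, $\cR$ is again a probabilistic automaton, it is rational whenever $\cB$ is, and the whole construction is a finite manipulation of finite data, hence recursive; so the ``furthermore'' clause of the statement follows once the two language identities are proved.

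The substance is the Rabin condition of $\cR$ together with the claim that, for every $\word$, $\muacc\cR\word=1$ when $\muacc\cB\word>0$ and $\muacc\cR\word=0$ otherwise --- which delivers at once $\cL_{>0}(\cB)=\cL_{>0}(\cR)$, hence $\cL_{=0}(\cB)=\cL_{=0}(\cR)$, and, $\muacc\cR\word$ now taking only the values $0$ and $1$, also $\cL_{>0}(\cR)=\cL_{=1}(\cR)$. The design principle is that on a fixed $\word$ a run of $\cR$ records simultaneously a \emph{sample} run $\rho$ of $\cB$ and the deterministic support history; the pairs $(\rho[i],R_i)$ range over a fixed finite set, so the collection $I$ of pairs occurring infinitely often is almost surely well defined, and the Rabin pairs isolate exactly those $I$ that sit inside a ``good recurrent cluster'' meeting $\Qf$. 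Concretely, for each nonempty $C\subseteq Q$ with $C\cap\Qf\neq\emptyset$ and each support $S$ that may recur cofinally, one pair has as its bad set all configurations outside the corresponding cluster and as its good set the configurations whose first component lies in $C\cap\Qf$ (with the refined monitor: at which the breakpoint fires). \emph{Soundness} is routine: a run of $\cR$ meeting such a pair has first component visiting $\Qf$ infinitely often, and since the first-component marginal of $\mu_{\cR,\word}$ is $\mu_{\cB,\word}$, the accepting runs of $\cR$ sit inside a set of measure $\muacc\cB\word$, so $\muacc\cR\word\le\muacc\cB\word$, settling the case $\muacc\cB\word=0$. The content lies in \emph{completeness}: if $\muacc\cB\word>0$ then almost every run of $\cR$ meets one of these Rabin pairs.

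The main obstacle is exactly this completeness half --- the $0$--$1$ law --- and it is genuinely delicate because the Markov chain that $\cB$ induces on a fixed $\word$ is \emph{time-inhomogeneous}: one cannot invoke the fact that a finite Markov chain reaches a bottom strongly connected component almost surely, nor does a naive ``uniform lower bound plus Borel--Cantelli'' argument apply, since the number of steps needed to complete a favourable loop along $\word$ (one may have to wait for the right letters to occur) is not bounded uniformly in time, so the per-window success probability is not bounded away from zero. The route I would take is to exploit the eventual recurrence of the support sequence: there is a finite family of supports occurring cofinally often along $\word$ and a step beyond which every $R_i$ lies in it, and one restricts attention to the infinitely many steps at which the support returns to a fixed value $S$ of this family. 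At those return steps the conditional probability of the tail event defining the Rabin condition can be controlled --- either by comparing its finitely many possible values across return steps, or by applying L\'evy's $0$--$1$ law to this bounded martingale along the return steps --- to conclude that it is almost surely $0$ or $1$, and is almost surely $1$ precisely when a $\Qf$-meeting cluster is reachable along $\word$, which is exactly the condition $\muacc\cB\word>0$. With this $0$--$1$ law in hand, everything else in the proposal is bookkeeping.
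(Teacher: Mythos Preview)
The paper does not prove this proposition: it is quoted from \cite{Baier:fossacs2008} and Gr\"{o}{\ss}er's thesis and stated without argument, so there is no proof in the present paper to compare your attempt against.

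On its own merits, your plan is in the right spirit --- the cited construction does proceed via a product with a deterministic support-tracking automaton and relies on a $0$--$1$ phenomenon --- but as written it is not a proof. The Rabin condition is left essentially unspecified: ``for each nonempty $C\subseteq Q$ with $C\cap\Qf\ne\emptyset$ and each support $S$ that may recur cofinally, one pair has as its bad set all configurations outside the corresponding cluster'' does not say what the cluster is, and without that one cannot verify your soundness claim that satisfying a pair forces the first component to visit $\Qf$ infinitely often. The completeness half, which you correctly identify as the crux, is only a strategy (``apply L\'evy's $0$--$1$ law along the return times of the support''); a genuine argument must name the tail event to which the law is applied, explain why its conditional probabilities stabilise along the return subsequence, and show that value $1$ of that event is equivalent to some Rabin pair being met. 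These are exactly the places where the proof in the cited references does nontrivial work, so if your intent is to reprove the result rather than cite it, those details must be filled in.
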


%\vspace*{-0.1cm}
\noindent
{\bf Finite probabilistic monitors (\FPM)s. } We identify one useful
syntactic restriction of {\PBA}s, called \emph{finite probabilistic
  monitors} (\FPM)s.  In an {\FPM}, all the states are accepting
except a special absorbing \emph{reject} state. We studied them
extensively in \cite{rch:sis:mv:08,rch:sis:mv:08a}.
\begin{defi}\rm
A {\PBA} $\cM=(\Q,\qs,\Qf,\delta)$ on $\Sigma$ is said to be an {\FPM}
if there is a state $\qr\in \Q$ such that $\qr \ne \qs$,
$\Qf=\Q\setminus\set{\qr}$ and $\delta(\qr,a,\qr)=1$ for each $a \in
\Sigma.$ The state $\qr$ said to be the {\it reject} state of $\cM$. If in
addition $\cM$ is a $\RPBA$, we say that $\cM$ is a rational finite
probabilistic monitor ($\RFPM$).
\end{defi}

\section{Probable semantics}
\label{sec:probable}

In this section, we shall study the expressiveness of the languages
contained in $\probable$ as well as the complexity of deciding
emptiness and universality of $\cL_{>0}(\cB)$ for a given {\RPBA}
$\cB$.  We assume  that the alphabet $\Sigma$
is fixed and contains at least two letters.%We shall later restrict our attention to hierarchical {\PBA}s.

%\vspace*{-0.2cm}
\subsection{Expressiveness} 
We shall establish new expressiveness results for the class $\probable$--
\begin{iteMize}{$\bullet$}
\item 
          We  show that although the class $\probable$ strictly contains $\omega$-regular languages \cite{Baier:lics2005}, 
           it is not topologically harder. More precisely, we will show that for any {\PBA} $\cB$, 
           $\cL_{>0}(\cB)$ is a $\cBool(\cG_\delta)$-set. This will be  a consequence of following facts.
            \begin{enumerate}[(a)]
            \item $\probable=\cBool(\alsure)$ (see Theorem \ref{thm:closure}).
            \item $\alsure \subseteq \cG_\delta$ (see Lemma \ref {lem:Gdelta}).
           \end{enumerate}
\item However, there are $\cBool(\cG_\delta)$ sets that are not in $\probable$ (see Lemma \ref{lem:probableexp}).\smallskip
\end{iteMize}

\noindent Our expressiveness results are summarized in
Figure~\ref{fig:exp-results}.

\begin{figure}
\begin{center}
\begin{tikzpicture}
\input{topo-results.tkz}
\end{tikzpicture}
\end{center}
\caption{Relationship between languages recognized by {\PBA}s and sets
  in the Borel hierarchy defined by the Cantor topology. Arrows
  indicate strict containment. `Det. Reg.' refers to the class of
  languages recognized by deterministic B\"{u}chi automata, while
  `Regular' refers to the class of $\omega$-regular
  languages. Containment arrows with label $1$ were proved
  in~\cite{Baier:lics2005} and those labelled $2$ were proved
  in~\cite{Baier:fossacs2008}. Results relating the classes Regular
  and Det. Reg. are classical results; see
  survey~\cite{book,thomas-handbook}. Containment arrows  with label $3$
 and the
  equality $\cBool(\alsure)=\probable$  are proved in this paper.}
\label{fig:exp-results}
\end{figure}
 
We first show that just as the
class of $\omega$-regular languages is the Boolean closure of the
class of $\omega$-regular recognized by deterministic B\"uchi
automata,  the class $\probable$ coincides with the Boolean closure of
the class $\alsure.$ This is the content of the following theorem
whose proof is of independent interest and shall be used later in
establishing that the containment of languages of two PBAs under
almost-sure semantics is undecidable (see Theorem
\ref{thm:alcontundec}).
%The proof has been moved to Appendix \ref{app:pbaproofs} in the interests of
%space.
\begin{thm}\rm
\label{thm:closure}
$\probable=\cBool(\alsure).$
\end{thm}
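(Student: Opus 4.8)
The plan is to prove the two inclusions $\probable \supseteq \cBool(\alsure)$ and $\probable \subseteq \cBool(\alsure)$ separately. The first inclusion is the easy direction and follows essentially from known closure results: by \cite{Baier:fossacs2008} the class $\probable$ is closed under all Boolean operations, and since $\alsure \subseteq \probable$ (again from \cite{Baier:fossacs2008}, and visible in Figure~\ref{fig:exp-results}), the Boolean closure $\cBool(\alsure)$ is contained in $\probable$. So essentially no work is needed there beyond citing the prior results.

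The interesting direction is $\probable \subseteq \cBool(\alsure)$: given an arbitrary {\PBA} $\cB$, I must express $\cL_{>0}(\cB)$ as a finite Boolean combination of languages each of the form $\cL_{=1}(\cB_i)$ for suitable {\PBA}s $\cB_i$. The natural route is to go through the probabilistic Rabin automaton. By Proposition~\ref{prop:buchi-rabin}, there is a {\PRA} $\cR$ with $\cL_{>0}(\cB) = \cL_{=1}(\cR)$, where $\cR$ has a Rabin acceptance condition given by pairs $(B_i, G_i)$ for $i = 1, \dots, k$. A run is accepting iff for some $i$, it visits $B_i$ finitely often and $G_i$ infinitely often. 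The idea is to decompose the ``probability $1$ of satisfying the Rabin condition'' into Boolean combinations of ``probability $1$ of satisfying simpler $\omega$-behaviours,'' each of which can be realized as the almost-sure language of some {\PBA}. Concretely, for a fixed pair $(B_i,G_i)$ the event ``visit $G_i$ infinitely often'' is itself a B\"uchi condition and ``visit $B_i$ finitely often'' is a co-B\"uchi condition; on the underlying Markov chain associated with $\cB$ (or $\cR$) and a word $\alpha$, one analyses the bottom strongly connected components reachable with positive probability. The key structural fact I would use is that for a fixed word $\alpha$, almost every run eventually enters and stays within a recurrent class, so the probability of acceptance is determined by which recurrent classes are ``good'' (intersect the right sets appropriately). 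This lets one rewrite $\mu^{acc}_{\cB,\alpha} = 1$ as the conjunction over recurrent-class types of an almost-sure reachability/safety statement, and $\mu^{acc}_{\cB,\alpha} > 0$ as a finite disjunction of such almost-sure statements — but since we are starting from the {\PRA} $\cR$ already satisfying $\cL_{>0}(\cB)=\cL_{=1}(\cR)$, what we really need is just to turn the single almost-sure-Rabin condition of $\cR$ into a Boolean combination of almost-sure-B\"uchi conditions.

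The mechanism I expect to use for that last step: for each Rabin pair $(B_i, G_i)$, construct from $\cR$ two {\PBA}s — one, call it $\cB_i^{G}$, that essentially uses $G_i$ as its accepting set (so $\cL_{=1}(\cB_i^{G})$ captures ``almost surely visit $G_i$ infinitely often''), and another, $\cB_i^{B}$, obtained by making $B_i$-states absorbing-rejecting or by complementing, so that its almost-sure language captures ``almost surely visit $B_i$ only finitely often.'' Then $\cL_{=1}(\cR)$ should equal $\bigcup_i \big(\cL_{=1}(\cB_i^{G}) \cap \cL_{=1}(\cB_i^{B})\big)$, possibly up to some care about the fact that these events are not independent and the decomposition of probability-$1$ events requires that the relevant sub-events all have probability $1$ simultaneously — this is where the ``probability $1$ of a disjunction need not give probability $1$ of a disjunct'' subtlety bites, and it is resolved precisely by the recurrent-class analysis: on the Markov chain for $\cR$ and $\alpha$, probability $1$ of the Rabin condition forces every positively-reachable recurrent class to be ``accepting for some pair,'' and one then argues these local conditions can be separated. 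I anticipate the main obstacle is exactly this separation argument — carefully showing that the almost-sure Rabin acceptance decomposes as a finite Boolean combination of almost-sure B\"uchi/co-B\"uchi events, each individually realizable by a {\PBA} with probability exactly $1$ (and not just positive probability) on the relevant words — and making sure the co-B\"uchi ``finitely often'' events are themselves expressible in $\cBool(\alsure)$, for which I would invoke that co-B\"uchi is the complement of B\"uchi together with the Boolean closure we are building. A clean alternative, if available from the cited literature on {\PRA}s, is to directly use the fact that {\PRA}s under almost-sure semantics define the same class as under probable semantics and push a deterministic-Rabin-to-Boolean-combination-of-deterministic-B\"uchi style argument through the probabilistic setting; I would check whether \cite{Baier:fossacs2008,GroesserThesis} already give enough to shortcut the recurrent-class bookkeeping.
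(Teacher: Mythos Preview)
Your high-level strategy matches the paper's: the easy inclusion is exactly as you say, and for the hard inclusion the paper also passes through the {\PRA} $\cR$ of Proposition~\ref{prop:buchi-rabin}. But your proposed decomposition and the mechanism you sketch for fixing it both have a genuine gap.

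The per-pair decomposition $\cL_{=1}(\cR) = \bigcup_i \big(\cL_{=1}(\cB_i^{G}) \cap \cL_{=1}(\cB_i^{B})\big)$ is false, as you essentially acknowledge: a word $\alpha$ can be accepted by $\cR$ with probability~$1$ while no single pair $(B_i,G_i)$ is satisfied with probability~$1$ (different runs satisfy different pairs). Your proposed remedy, a recurrent-class analysis of ``the Markov chain for $\cR$ and $\alpha$,'' does not apply here: for an arbitrary infinite word $\alpha$ the induced process is a \emph{time-inhomogeneous} chain, so there are no bottom strongly connected components or recurrent classes in the finite-Markov-chain sense, and the claim that ``almost every run eventually enters and stays within a recurrent class'' has no evident meaning. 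This is not a minor bookkeeping issue; it is precisely the obstruction.

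The paper's fix is combinatorial rather than ergodic, and it does not decompose over single pairs. It ranges over \emph{index sets} $\I \subseteq \{1,\dots,n\}$ together with a distinguished $j \in \I$, and defines two {\PBA}s from $\cR$: one with accepting set $\Good_\I = \bigcup_{r\in\I} G_r$, and one with accepting set $\Bad_{\I,j} = B_j \cup \bigcup_{r\in\I,\,r\neq j} G_r$. The decomposition is
\[
\cL_{>0}(\cB) \;=\; \bigcup_{\I,\; j\in\I}\; \cL_{=1}(\cR_\I) \;\cap\; \bigl(\Sigmaw \setminus \cL_{=1}(\cR^j_\I)\bigr).
\]
The point of lumping the other $G_r$'s into $\Bad_{\I,j}$ is that ``$\Good_\I$ infinitely often with probability~$1$'' together with ``$\Bad_{\I,j}$ infinitely often with probability $<1$'' forces positive probability of satisfying the pair $(B_j,G_j)$; the $0$--$1$ property of $\cR$ (that $\cL_{>0}(\cR)=\cL_{=1}(\cR)$) then upgrades this to membership in $\cL_{>0}(\cB)$. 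For the reverse inclusion the paper builds the right $\I$ greedily, growing it one index at a time until $\cL_{=1}(\cR_\I)$ is attained. None of this uses any recurrent-class structure, and the index-set trick is the idea you are missing.
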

\begin{proof}
First observe that it was already shown in \cite{Baier:fossacs2008}
that $\alsure\subseteq \probable.$ Since $\probable$ is closed under
Boolean operations, we get $\cBool(\alsure)\subseteq \probable.$
We have to show the reverse inclusion.

It suffices to show that given a {\PBA} $\cB$, the language
$\cL_{>0}(\cB)\in \cBool(\alsure).$ Fix $\cB.$ Recall that results of
\cite{Baier:fossacs2008,GroesserThesis} (see
Proposition~\ref{prop:buchi-rabin}) imply that there is a
probabilistic Rabin automaton ({\PRA}) $\cR$ such that 1)
$\cL_{>0}(\cB) = \cL_{=1} (\cR)=\cL_{>0}(\cR)$ and 2) $\cL_{=0}(\cB) =
\cL_{=0}(\cR)$.  Let $\cR=(\Q,\qs,F,\delta)$ where $F\subseteq
2^\Q\times 2^\Q$ is the set of the Rabin pairs. Assuming that $F$
consists of $n$-pairs, let $F=\set{(B_1,G_1),\ldots, (B_n,G_n)}$.

Given an index set $\I \subseteq \set{1,\ldots,n}$, let
$\Good_\I=\union_{r \in \I} G_r$. Let $\cR_\I$ be the {\PBA} obtained
from $\cR$ by taking the set of final states to be $\Good_\I$. In
other words, $\cR_\I=(\Q,\qs,\Good_\I,\delta).$ Given $\I \subseteq
\set{1,\ldots,n}$ and an index $j \in \I$, let $\Bad_{\I,j} =
B_j\union (\union_{r \in \I,r\ne j} G_r)$.  Let $\cR^j_\I$ be the {\PBA}
obtained from $\cR$ by taking the set of final states to be
$\Bad_{\I,j}$, {\it i.e.}, $\cR^j_\I=(\Q,\qs,\Bad_{\I,j},\delta).$ The
result follows from the following claim.
\begin{claim}
$$\cL_{>0}(\cB)= \bigcup_{\I\subseteq \set{1,\ldots,n}, j\in \I}
  \cL_{=1} (\cR_\I) \intersect (\Sigmaw\setminus
  \cL_{=1}(\cR^j_\I)).$$
\end{claim} 
\noindent {\bf Proof of the claim:} Given $\I\subseteq
\set{1,\ldots,n}, j\in \I$, let $\sL_{\I,j}=\cL_{=1} (\cR_\I)
\intersect (\Sigmaw\setminus \cL_{=1}(\cR^j_\I)).$ We will say that a
run $\rho$ of {\PRA} $\cR$ satisfies the Rabin pair $(B_r,G_r)$ if all
states in $B_r$ occur only finitely many times in $\rho$ and at least
one state in $G_r$ occurs infinitely often in $\rho.$

We first show that $\sL_{\I,j} \subseteq \cL_{>0}(\cR).$ Fix any
$\alpha\in\sL_{\I,j}.$ Since $\sL_{\I,j}\subseteq \cL_{=1}(\cR_\I)$,
it follows that on input $\alpha$ the measure of runs that visit the
set $\Good_\I= \union_{i\in \I}G_i$ infinitely often must be $1.$ On
the other hand, as $\sL_{\I,j} \cap \cL_{=1}(\cR^j_\I) = \emptyset$,
it follows that on input $\alpha$ the measure of runs that visit
$\Bad_{\I,j}=B_j\cup (\union_{i\in \I,i\ne j}G_i)$ only finitely many
times has strictly positive measure. Since $\Good_\I\setminus
\Bad_{\I,j}\subseteq G_j$, it now follows from the previous two
observations that the measure of runs that visit $G_j$ infinitely
often but visit $\Bad_{\I,j}$ only finitely many times is strictly
positive.  Since $B_j\subseteq \Bad_{\I,j}$, we get that the set of
runs that satisfy the Rabin pair $(B_j,G_j)$ has non-zero measure on
input $\alpha$.  Therefore, we have that $\sL_{\I,j}\subseteq
\cL_{>0}(\cR)$. But, we have that
$\cL_{>0}(\cR)=\cL_{=1}(\cR)=\cL_{>0}(\cB).$ Hence, we get
\[ 
\bigcup_{\I\subseteq \set{1,\ldots,n}, j\in \I} \sL_{\I,j}\subseteq \cL_{>0} (\cB). 
\]
We will be done if we can show the reverse inclusion.  Thus, given
word $\alpha$ in $\cL_{>0} (\cB)$, we have to construct $\I$ and $j$
such that $\alpha \in \sL_{I,j}$.  We construct them as follows.
First, let $\widetilde\I$ be the set of all indices $r$ such that the
measure of all runs that satisfy the Rabin pair $(B_r,G_r)$ on input
$\alpha$ is $>0.$ $\widetilde\I$ is non-empty (since $\alpha\in
\cL_{=1}(\cR)$).  Clearly, we have that on input $\alpha$, the measure
of runs such that $\Good_{\widetilde\I}$ is visited infinitely often
is $1$ (again, since $\alpha\in \cL_{=1}(\cR)$). In other words,
$\alpha\in\cL_{=1}(\cR_{\widetilde \I})$.  Required $\I$ will be a
subset of $\widetilde\I$ and will be constructed by induction as
follows.

At step $1$ of the induction, we pick an arbitrary index $r$ in
$\widetilde \I$. Then we check if it is the case that on $\alpha$, the
probability of visiting $G_r$ infinitely often in $\cR$ is $1$. Note 
that it is the case that the probability that $B_r$ is visited
infinitely often in $\cR$ is $<1$ (as $\alpha$ satisfies $(B_r,G_r)$
with non-zero probability). Note that this implies that $\alpha\in
\cL_{=1}(\cR_{\{r\}})\intersect (\Sigmaw\setminus
\cL_{=1}(\cR^r_{\{r\}})$ and the induction stops at this point. If it is not the
case, then let $\I_1=\set{r}$.

Proceed by induction. At step $m$, we would have produced an index set
$\I_m\subseteq \widetilde \I$ such that on $\alpha$, we have that
$\alpha \notin \cL_{=1} (\cR_{\I_m})$ (meaning the set of runs which
visit $\Good_{\I_m}$ infinitely often have probability $<1$). Now
since $\alpha$ is accepted by {\PRA} $\cR$ with probability $1$, there
must be some index $r$ in $\widetilde \I \setminus \I_m$ such that the
set of runs that satisfy $(B_r,G_r)$ and visit $\Good_{I_m}$ only
finitely many times is $>0$. Fix one such $r$.  Now, there are two
cases.
\begin{enumerate}[(1)]
\item On the input $\alpha$, the set of runs that visit
  $\Good_{\I_m}\cup G_r$ infinitely often has measure 1.  In that
  case, by construction, we also have that $\alpha\in \sL_{\I_m\cup
    \{r\},r}$ and induction stops.
\item Otherwise,  we let $\I_{m+1}= \I_m\cup\set{r}$ and proceed.
\end{enumerate}
The induction must stop at a finite point at which we will satisfy the
required condition (since $\alpha\in\cL_{=1}(\cR_{\widetilde \I})$).
%The proof of the claim is detailed in Appendix \ref{app:pbaproofs}.
%moved to Apendix \ref{app:pbaproofs} for lack of space reasons.
\end{proof}
The second component needed for showing that $\probable\subseteq
\cBool(\cG_\delta)$ is the fact that for any {\PBA} $\cB$ and $x\in
      [0,1]$, the language $\cL_{\geq x}(\cB)$ is a $\cG_\delta$-set;
      which we prove next.
\begin{lem}\rm
\label{lem:Gdelta}
For any {\PBA} $\cB$ and $x\in [0,1]$, $\cL_{\geq x}(\cB)$ is a $\cG_{\delta}$ set.
\end{lem}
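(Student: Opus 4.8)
The plan is to express $\cL_{\geq x}(\cB)$ as a countable intersection of open sets by approximating the acceptance probability through finite prefixes. The key observation is that the event ``some accepting state occurs infinitely often'' can be written, as usual for B\"uchi conditions, as $\bigcap_{n} \{\rho : \exists m \geq n,\ \rho[m] \in \Qf\}$, so the acceptance probability is the measure of a $G_\delta$ set of runs. Since the measure of $\{\rho : \exists m \geq n,\ \rho[m] \in \Qf\}$ is a decreasing limit (as $n\to\infty$) of cylinder-computable quantities, $\muacc\cB\word$ is itself an infimum over $n$ of a supremum over finite stages, each of which depends only on a finite prefix of $\word$. First I would fix, for each $n$ and each finite stage $N \geq n$, the quantity $p_{n,N}(\word)$ equal to the measure of runs $\rho$ with $\rho[m]\in\Qf$ for some $m$ with $n \leq m \leq N$; this is a finite sum of products of transition probabilities and hence depends only on $\word[0{:}N]$. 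Then $\muacc\cB\word = \inf_n \sup_{N\geq n} p_{n,N}(\word) = \inf_n q_n(\word)$ where $q_n(\word) = \lim_{N} p_{n,N}(\word)$ is the measure of ever visiting $\Qf$ after time $n$.

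Next I would show $\cL_{\geq x}(\cB) = \bigcap_{k \in \Nats} \bigcap_{n\in\Nats} \{ \word : q_n(\word) > x - \tfrac{1}{k}\}$. The ``$\subseteq$'' direction is immediate since $\muacc\cB\word \leq q_n(\word)$ for every $n$. For ``$\supseteq$'': if $\word$ is in the right-hand set, then for every $k$, $\inf_n q_n(\word) \geq x - \tfrac1k$, so $\muacc\cB\word \geq x$. So it remains to check that each set $U_{n,k} := \{\word : q_n(\word) > x - \tfrac1k\}$ is open in the Cantor topology. Here I use that $q_n(\word) = \sup_{N \geq n} p_{n,N}(\word)$ with the supremum being a limit of a nondecreasing sequence: if $q_n(\word) > x - \tfrac1k$, then already $p_{n,N}(\word) > x-\tfrac1k$ for some finite $N$; and since $p_{n,N}$ depends only on the prefix $\word[0{:}N]$, every $\word'$ sharing that prefix also satisfies $p_{n,N}(\word') > x - \tfrac1k$, hence $q_n(\word') \geq p_{n,N}(\word') > x - \tfrac1k$. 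Thus $U_{n,k}$ contains the basic open neighborhood $\word[0{:}N]\,\Sigmaw$ around $\word$, so $U_{n,k}$ is open, and $\cL_{\geq x}(\cB)$ is a countable intersection of open sets, i.e.\ a $\cG_\delta$ set.

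The main obstacle is the bookkeeping needed to justify that $q_n(\word)$ really is the nondecreasing limit of the prefix-computable quantities $p_{n,N}(\word)$, and in particular that $\muacc\cB\word = \inf_n q_n(\word)$. This is where one must invoke the structure of the probability space $(\Q^\omega, \cF_{\cB,\word}, \mu_{\cB,\word})$: the set of accepting runs is $\bigcap_n A_n$ where $A_n = \{\rho : \exists m \geq n,\ \rho[m]\in\Qf\}$ is measurable with $A_0 \supseteq A_1 \supseteq \cdots$, so by continuity of measure from above $\muacc\cB\word = \lim_n \mu_{\cB,\word}(A_n) = \inf_n \mu_{\cB,\word}(A_n) = \inf_n q_n(\word)$; and $A_n = \bigcup_{N \geq n} \{\rho : \exists m,\ n\leq m\leq N,\ \rho[m]\in\Qf\}$ is an increasing union, so continuity from below gives $\mu_{\cB,\word}(A_n) = \lim_N p_{n,N}(\word) = \sup_{N\geq n} p_{n,N}(\word)$. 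Once these two limit interchanges are in place the topological argument above goes through cleanly; everything else is routine.
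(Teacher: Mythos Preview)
Your proof is correct and takes essentially the same approach as the paper: write the B\"uchi acceptance event as a decreasing intersection, use continuity of measure from above and below, approximate $\geq x$ by $> x-\tfrac{1}{k}$, and observe that the finite-stage probabilities depend only on a finite prefix of $\word$, so the resulting sets are open. The only cosmetic difference is that the paper approximates the acceptance event via the sets $\Path^k$ of runs visiting $\Qf$ \emph{at least $k$ times}, whereas you use the sets $A_n$ of runs visiting $\Qf$ \emph{at some time $\geq n$}; both sequences decrease to the set of accepting runs, and the remainder of the argument is structurally identical.
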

\begin{proof}
Let $\cB=(\Q,\qs,\Qf,\delta).$ Now given $k>0$, let
$\Path^{k}\subseteq \Sigmaw$ be the set of all infinite runs which
start at the state $\qs$ and visit the set of final states at least
$k$-times. Let $\Path^\omega$ be the set of all infinite runs which
start at the state $\qs$ and visit the final states infinitely often.
Formally, $\Path^k=\set{\rho\in \Q^\omega\st \rho[0]=\qs \textrm{ and
  } \card{\set{i\in \Nats\st \rho[i]\in \Qf}}\geq k}$ and
$\Path^\omega=\set{\rho\in \Q^\omega\st \rho[0]=\qs \textrm{ and }
  \card{\set{i\in \Nats\st \rho[i]\in \Qf}}=\omega}.$ We have that
$Path^k, k>0$ forms a decreasing sequence and
$$\intersect_{k\in\Nats,k>0}\Path^k= \Path^\omega.$$
From standard probability theory, we get that for any word $\alpha,$
$$\lim_{k\to \infty} \mu_{\cB,\word}(\Path^k)=
\mu_{\cB,\word}(\Path^\omega)$$ where $\mu_{\cB,\word}$ is the
probability measure generated by the infinite word $\word$ and PBA
$\cB.$ From this, we immediately see that an infinite word $\alpha$ is
accepted with probability at least $x$ iff for all $k>0$ the
probability of visiting the set of final states on input $\alpha$ at
least $k$-times $\geq x$. In other words,
$$\set{\alpha\in \Sigmaw\st \muacc\cB\alpha \geq x} = \intersect_{k\in
  \Nats, k>0}\set{\alpha\in\Sigmaw\st \mu_{\cB,\word}(\Path^k) \geq x}
.$$ Hence, it suffices to show that for each $k\in \Nats,k>0$ the set
$\set{\alpha\in\Sigmaw\st \mu_{\cB,\word}(\Path^k) \geq x} $ is a
$\cG_\delta$ set. Note that for each $k>0$,
$$\set{\alpha\in\Sigmaw\st \mu_{\cB,\word}(\Path^k) \geq x}
=\intersect_{n\in \Nats} \set{\alpha\in\Sigmaw\st
  \mu_{\cB,\word}(\Path^k) > x-\frac{1}{n}} .$$ Hence, it suffices to
show that for each $k\in \Nats, n\in \Nats, k>0$ the set
$\set{\alpha\in\Sigmaw\st \mu_{\cB,\word}(\Path^k) > x-\frac{1}{n}} $
is a $\cG$-set. In order to see that this is the case, given $k>0$ and $\ell>0$, let $Path^{k,\ell}\subset \Sigmaw$
be the set of infinite runs that start at the initial state and visit $\Qf$ at least $k$ times in the first $\ell$ steps.
Formally, $\Path^{k,\ell}=\set{\rho\in \Q^\omega\st \rho[0]=\qs \textrm{ and
  } \card{\set{i\in \Nats, i <\ell\st \rho[i]\in \Qf}}\geq k}.$ 
  
  Now, the result  follows immediately from the observation that 
$$\set{\alpha\in\Sigmaw\st \mu_{\cB,\word}(\Path^k) > x-\frac{1}{n}}=\union_{\ell\in \Nats}\set{\alpha\in\Sigmaw\st \mu_{\cB,\word}(\Path^{k,\ell}) > x-\frac{1}{n}}$$
and the observation that each of the set $\set{\alpha\in\Sigmaw\st \mu_{\cB,\word}(\Path^{k,\ell}) > x-\frac{1}{n}}$ is a $\cG$-set.
\end{proof}

Using Lemma~\ref{lem:Gdelta}, one immediately gets that
$\probable\subseteq \cBool(\cG_\delta)$. Even though {\PBA}s accept
non-$\omega$-regular languages, they cannot accept all the languages
in $\cBool(\cG_\delta)$.
\begin{lem}\rm
\label{lem:probableexp}
$\Regular\subsetneq \probable\subsetneq \cBool(\cG_\delta).$
\end{lem}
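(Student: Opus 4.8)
The three assertions are handled separately. The inclusion $\Regular\subsetneq\probable$ is already known: every $\omega$-regular language is $\cL_{>0}(\cB)$ for the {\PBA} $\cB$ obtained from a nondeterministic B\"uchi automaton by turning nondeterministic choices into uniform coin tosses, and a non-$\omega$-regular language in $\probable$ is exhibited in~\cite{Baier:lics2005} (the containment labelled $1$ in Figure~\ref{fig:exp-results}). The inclusion $\probable\subseteq\cBool(\cG_\delta)$ is immediate from the two results just proved: Theorem~\ref{thm:closure} gives $\probable=\cBool(\alsure)$, and Lemma~\ref{lem:Gdelta} with $x=1$ gives $\alsure\subseteq\cG_\delta$, so $\probable=\cBool(\alsure)\subseteq\cBool(\cG_\delta)$. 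Thus the only real content is the strictness of this last inclusion, and for that it suffices to produce an \emph{open} language outside $\probable$, since $\cG\subseteq\cG_\delta\subseteq\cBool(\cG_\delta)$.

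\textbf{The witness.}
Fix two distinct letters $a,b\in\Sigma$ and a set $S\subseteq\Nats$ that is not ultimately periodic, e.g.\ $S=\set{2^k\st k\in\Nats}$. Let $W=\set{a^n b\st n\in S}\subseteq\Sigma^+$ and $L=W\Sigmaw$; this $L$ is open by definition of the Cantor topology. Note that for every $n\in\Nats$ and $\beta\in\Sigmaw$ the only candidate prefix of $a^n b\beta$ lying in $W$ is $a^n b$ itself (as $a\ne b$), so $a^n b\beta\in L$ iff $n\in S$; in particular, if $n\notin S$ then $a^n b\beta\notin L$ for \emph{every} $\beta$.

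\textbf{$L\notin\probable$.}
Suppose $\cL_{>0}(\cB)=L$ for some {\PBA} $\cB=(\Q,\qs,\Qf,\delta)$. For $q\in\Q$ let $\Pr_{\cB,q}[\text{accept }\gamma]$ be the probability that $\cB$, started in $q$ and reading $\gamma\in\Sigmaw$, visits $\Qf$ infinitely often, and call $q$ \emph{dead} if this is $0$ for all $\gamma$. Conditioning on the state reached after a finite word $u$ gives $\muacc{\cB}{u\gamma}=\sum_{q}\delta_u(\qs,q)\,\Pr_{\cB,q}[\text{accept }\gamma]$, a sum of non-negative terms whose coefficient of $q$ is positive exactly for $q\in\post(\qs,u)$; hence ``$\muacc{\cB}{u\gamma}=0$ for all $\gamma$'' holds iff every state of $\post(\qs,u)$ is dead --- a condition depending only on the set $\post(\qs,u)$. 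Now suppose $\post(\qs,a^n)=\post(\qs,a^m)$. Since $\delta_{uv}=\delta_u\delta_v$ and all entries are non-negative, $\post(\qs,uv)=\bigcup_{q\in\post(\qs,u)}\post(q,v)$ for all $u,v$, so $\post(\qs,a^n b)=\post(\qs,a^m b)$. If $n\notin S$ then $\muacc{\cB}{a^n b\beta}=0$ for all $\beta$, so every state of $\post(\qs,a^n b)=\post(\qs,a^m b)$ is dead, whence $\muacc{\cB}{a^m b\beta}=0$ for all $\beta$; but were $m\in S$ we would have $a^m b\beta\in L=\cL_{>0}(\cB)$ and hence $\muacc{\cB}{a^m b\beta}>0$, a contradiction. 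By symmetry, $\post(\qs,a^n)=\post(\qs,a^m)$ implies $n\in S\iff m\in S$. Finally, the sequence $\bigl(\post(\qs,a^n)\bigr)_{n\ge 0}$ is the orbit of $\set{\qs}$ under the map $X\mapsto\bigcup_{q\in X}\post(q,a)$ on the finite set $\powerset{\Q}$, hence ultimately periodic: there are $N,p>0$ with $\post(\qs,a^n)=\post(\qs,a^{n+p})$ for all $n\ge N$. Combined with the previous implication this forces $n\in S\iff n+p\in S$ for all $n\ge N$, i.e.\ $S$ is ultimately periodic, contradicting the choice of $S$. Hence $L\notin\probable$, so $\probable\subsetneq\cBool(\cG_\delta)$. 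The main obstacle is exactly this last part: the easy half reduces to Theorem~\ref{thm:closure}, Lemma~\ref{lem:Gdelta} and~\cite{Baier:lics2005}, whereas the separation rests on the two structural observations that ``kills every continuation'' depends only on the reachable support and that the supports $\post(\qs,a^n)$ are ultimately periodic in $n$ --- which together import a finite-memory obstruction into an otherwise analytic setting.
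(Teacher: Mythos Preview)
Your proof is correct, but your separation argument takes a genuinely different route from the paper's. For the strict inclusion $\probable\subsetneq\cBool(\cG_\delta)$ the paper first proves a general structural fact: for \emph{every} {\PBA} $\cB$, the topological closure of $\cL_{>0}(\cB)$ is $\omega$-regular (it is recognized by the B\"uchi automaton on the non-dead states with the support transition relation). It then picks as witness the singleton closed set $\{abaabbaaabbb\cdots\}$, which is its own closure and is not $\omega$-regular, hence not in $\probable$. You instead pick an \emph{open} witness $L=\{a^nb:n\in S\}\Sigmaw$ with $S$ aperiodic and run a direct pumping argument on the support sequence $\post(\qs,a^n)$: since that sequence is eventually periodic in $n$ and since ``all continuations are rejected'' depends only on the support, $S$ would have to be ultimately periodic. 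The two arguments share the same underlying idea (only the set of non-dead reachable states matters, and that set is governed by a finite-state dynamics), but the paper packages it as a reusable lemma about closures, whereas your argument is more elementary and self-contained at the cost of not isolating that lemma.
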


\begin{proof}
Note that $\Regular\subsetneq \probable$ follows immediately from results of \cite{Baier:lics2005}.
Thanks to Lemma \ref{lem:Gdelta}, we also have that $\alsure \subseteq \cG_\delta.$
Since $\probable=\cBool(\alsure)$ (see Theorem \ref{thm:closure}), we
get that $\probable\subseteq \cBool(\cG_\delta).$ We only have to show
that this containment is strict. The proof of this fact utilizes the
following result which shows that for any $\sL\in \probable$, the
smallest safety language containing $\sL$ is guaranteed to be
$\omega$-regular even if $\sL$ is not.\footnote{As arbtitrary intersection of safety languages is also a safety language, for every language $\sL$, there is a smallest safety language containing $\sL$. Topologically, this is  the {\it closure} of $\sL$.}
\begin{claim}%\rm
%\label{lem:safety}
For any {\PBA} $\cB$, let $\cl{\sL}$ be the smallest safety language
containing $\sL=\cL_{>0}(\cB).$ Then $\cl{\sL}$ is $\omega$-regular.
\end{claim}
{\bf Proof of the claim:} Without loss of generality, we can assume
that $\sL\ne \emptyset.$ Let $\cB=(\Q,\qs, \Qf,\delta).$ Given $q\in
\Q$, let $\cB_q$ be the {\PBA} which is exactly like $\cB$, except
that the initial state is $q$. That is $\cB_q=(\Q,q, \Qf,\delta)$.
Let $\Q_{>0}\subseteq \Q$ be the set of states $\set{q\st \exists
  \alpha.  \muacc {\cB_q}\alpha >0}.$  
Consider the finite state
B\"uchi automata $\cA=(\Q_{>0},\qs,\Q_{>0},\Delta)$ where
$(q_1,a,q_2)\in \Delta$ iff $\delta(q_1,a,q_2)>0.$ It is easy to see
that $\cl\sL$ is exactly the language recognized by $\cA.$ This implies that $\cl\sL$ is $\omega$-regular.({\bf End
  of the claim}) \qed

We proceed as follows. Fix two letters $a,b$ of the alphabet $\Sigma$
and consider the language $\sL$ consisting of exactly one word
$\alpha=abaabb\ldots a^ib^ia^{i+1}b^{i+1}\ldots$. Now, $\cl\sL= \sL$
(every single element set in a metric space is a closed set) and
$\sL$ is not $\omega$-regular (since $\sL$ does not contain any
periodic word). Therefore, the closed set $\sL$ is not in the class
$\probable$ (note that $\sL\in\cG_\delta$ as $\cF\subseteq
\cG_\delta$).  
\end{proof}

%\begin{rems}
%Please note that Lemma \ref{lem:Gdelta} can be used to show that the classes $\strict$ and $\nstrict$
%are also contained within the first few levels of Borel hierarchy. However, we can show that no version of Theorem \ref{thm:closure} holds for those classes. More precisely, $\strict\not\subseteq \nstrict$
%and $\nstrict\not\subseteq \strict$. These results are out of the scope of this paper.
%\end{rems}

\subsection{Decision problems} 
For the rest of this section, we shall focus our attention on decision
problems for probable semantics for {\RPBA}s. Results of this section are summarized in the
 first row of Figure \ref{fig:compresults}  and stated in Theorem \ref{thm:pbaanalytical}.
Given a {\RPBA} $\cB$, the problems of emptiness and universality of
$\cL_{>0}(\cB)$ are known to be undecidable
\cite{Baier:fossacs2008}. We sharpen this result by showing that these
problems are $\mathbf{\Sigma}^0_2$-complete. This is interesting in the light
of the fact that problems on infinite string automata that are
undecidable tend to typically lie in the analytical hierarchy, and not
in the arithmetic hierarchy.

\begin{figure*}[t]
\footnotesize
\begin{center}
\begin{tabular}{| l | l | l | l |}

\hline
  & \multicolumn{1}{c|}{{Emptiness}} & \multicolumn{1}{c|}{{Universality}} &  \multicolumn{1}{c|}{{Containment}} \\

  \hline
  
$\probable$  &$\mathbf{\Sigma}^0_2$-complete$^{(\dagger)}$ & $\mathbf{\Sigma}^0_2$-complete$^{(\dagger)}$ & $\mathbf{\Sigma}^0_2$-complete$^{(\dagger)}$ \\

\hline
$\alsure$ &$\pspace$-complete$^{(\dagger\dagger)}$ & $\pspace$-complete & $\mathbf{\Sigma}^0_2$-complete \\
%$$ & ploynom& 
\hline
  \end{tabular}
\end{center}  

\caption{Hardness of decision problems for {\RPBA}s. $^{(\dagger)}$The problems of checking emptiness, universality and containment for probable semantics was shown to be $\re$-hard in  \cite{Baier:fossacs2008}.  $^{(\dagger\dagger)}$the problem of checking emptiness of
almost sure semantics was shown to decidable in $\exptime$ in  \cite{Baier:fossacs2008}.} 
\label{fig:compresults}
\end{figure*}

% Our upper-bound proof for emptiness also shows that the
%emptiness problem for $\cL_{>\half}(\cB)$ is in $\mathbf{\Sigma}^0_2$. Again
%this is remarkable since we show later (in Section \ref{sec:cut}) that
%the problem of deciding universality of $\cL_{>\half}(\cB)$ is not in
%arithmetical hierarchy and is in fact $\Pi^1_1$-complete.

Before we proceed with the proof of the upper bound, let us recall an
important property of finite-state B\"uchi automata
\cite{thomas-handbook,book}. The language recognized by a finite-state
B\"uchi automaton $\cA$ is non-empty iff there is a final state $q_f$
of $\cA$, and finite words $u$ and $v$ such that $\q_f$ is reachable
from the initial state on input $u$, and $\q_f$ is reachable from the
state $\q_f$ on input $v$. This implies that any
non-empty $\omega$-regular language contains an ultimately periodic
word. We had extended this observation to {\FPM}s in
\cite{rch:sis:mv:08,rch:sis:mv:08a}. In particular, we had shown that
the language $\cL_{>x}(\cM)$ is non-empty for a given $\cM$ iff there
exists a set of final states $C$ of $\cM$ and words $u$ and $v$ such
that the probability of reaching $C$ from the initial state on input
$u$ is $>x$ and for each state $q\in C$ the probability of reaching
$C$ from $q$ on input $v$ is $1$. This immediately implies that if
$\cL_{>x}(\cM)$ is non-empty then $\cL_{>x}(\cM)$ must contain an
ultimately periodic word. In contrast, this fact does not hold for
non-empty languages in $\probable$. In fact, Baier and
Gr\"{o}{\ss}er~\cite{Baier:lics2005}, construct a {\PBA} $\cB$ such
that $\cL_{>0}(\cB)$ does not contain any ultimately periodic word.

However, we will show that even though the probable semantics of a
{\PBA} may not contain an ultimately periodic, they nevertheless are
restrained in the sense that they must contain a {\it strongly asymptotic} word. 
In order to define  strongly asymptotic words formally, we introduce the following notation--

\begin{notation}
Let $\cB=(\Q,\qs,\Qf,\delta)$. Given $C\subseteq \Q$, $q\in C$ and a
finite word $u=a_0a_1\ldots a_k\in \Sigma^+$, let 
$\delta^{\Qf}_u(q,C)=\sum_{q'\in C}\delta^{\Qf}_u(q,q')$ where
$$\textstyle\delta^{\Qf}_u(q,q')=\sum_{\begin{array}{l}

 (\set q\cup \set{q'}\cup \set{q_i\st 1\leq i\leq k})\intersect \Qf \ne \emptyset
\end{array}}\delta_{a_0}(q,q_1)\delta_{a_1}(q_1,q_2)\ldots \delta_{a_k}(q_k,q').$$

\end{notation}
  
\noindent Informally, in the above notation $\delta^{\Qf}_u(q,C)$ is the
probability that the {\PBA} $\cB$, when started in state $q$, on the
input string $u$, is in some state in $C$ at the end of $u$ after
passing through a final state.  We can now define strongly asymptotic 
words.
  
 \begin{defi}
   Given a {\PBA} $\cB=(\Q,\qs,\Qf,\delta)$ and a
set $C$ of states of $\cB$, a word $\alpha\in \Sigmaw$ is said to be
{\it strongly asymptotic with respect to $\cB$ and $C$} if there is an
infinite sequence $i_1<i_2<.... $ such that 
\begin{enumerate}[(1)]
\item
$\delta_{\alpha[0:i_1]}(\qs,C)>0$ and
\item all $j>0$ and for all $q\in C$, $\delta^{\Qf}_{\alpha[i_j+1,i_{j+1}]}(q,C)>1-\frac{1}{2^j}.$
\end{enumerate}
%the probability of being in $C$ from $q$ after passing through a final
%state on the finite input string $\alpha[i_j,i_{j+1}]$ is strictly
%greater than $1-\frac{1}{2^j}.$
 A word $\alpha$ is said to be {\it
  strongly asymptotic with respect to $\cB$} if there is some $C$ such
that $\alpha$ is strongly asymptotic with respect to $\cB$ and $C.$
\end{defi}

%
% we can show that the following result holds. Given a {\PBA}
%$\cB$ and $x\in [0,1)$, $\cL_{>x}(\cB)\ne \emptyset$ iff there exists
 % a set $C$ of states of $\cB$ and a finite word $u$ such that $C$ can
 % be reached from the initial state $u$ with probability $> x$ on
  %input $u$, and there exists an {\it asymptotic sequence} with
  %respect to $C$. An asymptotic sequence with respect to $C$, is an
  %infinite input sequence $\alpha$ such that there are infinite number
  %of integers $i_1<i_2<.... $ satisfying the following property: For
  %all $j>0$, for all $q\in C$, the probability of being in $C$ from
  %$q$ after passing through a final state on the finite input string
  %$\alpha[i_j,i_{j+1}]$ converges to $1$ as $j$ tends to
  %infinity. 

%

We will now show that if the probable semantics of a {\PBA} is non-empty then 
it must contain a strongly asymptotic word. We need one more notation.
\begin{notation}
 $Reach(\cB,C,x)$ denotes the
predicate $\exists u \in\Sigma^+.\delta_{u}(\qs,C)>x.$
\end{notation}
Intuitively, the predicate $Reach(\cB,C,x)$ is true iff 
there is some finite non-empty string $u$, such that  the
probability of being in $C$  having started from the initial state $\qs$ 
and after having read $u$ is $> x$.
The existence of strongly asymptotic word in probable semantics is an immediate consequence of the following Lemma.
\begin{lem} \rm
\label{lem:increasing}
Let $\cB=(\Q,\qs,\Qf,\delta)$.  For any $x\in [0,1)$,
  $\cL_{>x}(\cB)\ne \emptyset$ iff $\exists C\subseteq \Q $ such that
  $ Reach(\cB,C,x )$ is true and  for all $j>0$ there is a finite non-empty word
  $ u_j$ such that for all $q\in C.\; \delta^{\Qf}_{u_j}(q,C) >
  (1-\frac{1}{2^j})$.
\end{lem}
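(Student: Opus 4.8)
The plan is to prove the two implications separately; the reverse implication is a direct construction, while the forward one (non-emptiness yields such a $C$) carries the real weight.

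\emph{($\Leftarrow$).} Suppose $C$ and the words $u_j$ exist. I would fix $u\in\Sigma^+$ with $p_0:=\delta_u(\qs,C)>x$, which exists since $Reach(\cB,C,x)$ holds. As $\prod_{i\ge 1}(1-2^{-i})$ is a convergent infinite product with value $>0$ and tails tending to $1$, choose $N$ with $\prod_{i\ge N}(1-2^{-i})>x/p_0$ (reading $x/p_0=0$ when $x=0$), and set $\alpha:=u\,u_N\,u_{N+1}\,u_{N+2}\cdots$. For $m\ge 0$ let $\nu_m$ be the sub-probability vector on $C$ recording the measure of runs of $\cB$ on $\alpha$ that, after the prefix $u\,u_N\cdots u_{N+m-1}$, sit in a state of $C$ and have passed through $\Qf$ within each of the $m$ blocks $u_N,\dots,u_{N+m-1}$. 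By the Markov property and the definition of $\delta^{\Qf}$, $\nu_m(C)=\sum_{q\in C}\nu_{m-1}(q)\,\delta^{\Qf}_{u_{N+m-1}}(q,C)\ge\nu_{m-1}(C)\,(1-2^{-(N+m-1)})$, so $\nu_m(C)\ge p_0\prod_{i=N}^{N+m-1}(1-2^{-i})$. The events ``in $C$ and have seen $\Qf$ in every block so far'' decrease in $m$, and their intersection is contained in the set of accepting runs (infinitely many finite blocks, each forcing a $\Qf$-visit at positions tending to infinity). Hence $\muacc{\cB}{\alpha}\ge\lim_m\nu_m(C)\ge p_0\prod_{i\ge N}(1-2^{-i})>x$, so $\alpha\in\cL_{>x}(\cB)$.

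\emph{($\Rightarrow$).} Fix $\alpha$ with $p:=\muacc{\cB}{\alpha}>x$; write $\mu$ for $\mu_{\cB,\alpha}$, $A$ for the (measurable) set of accepting runs, and $X_n$ for the state at time $n$. For a state $q$ and time $n$ put $\beta_n(q)=\mu(A\mid X_n=q)$ and $d_n(q)=\mu(X_n=q)$, so $h_n(q):=\beta_n(q)d_n(q)=\mu(A\cap\{X_n=q\})$ and $\sum_q h_n(q)=p$ for all $n$. Since $A$ is a tail event of the run and $\cB$-on-$\alpha$ is a Markov chain, $\beta_n(X_n)$ is exactly the conditional probability of $A$ given the run through time $n$, hence a bounded martingale, converging $\mu$-a.s.\ to $\mathbf{1}_A$ by L\'evy's $0$--$1$ law. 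By compactness of $[0,1]^{\Q}$ I would pass to a subsequence $n_1<n_2<\cdots$ along which $\beta_{n_k},d_{n_k},h_{n_k}$ all converge, say to $\beta^*,d^*,h^*$, with $\sum_q h^*(q)=p$. The key point is that $\beta^*(q)\in\{0,1\}$ whenever $d^*(q)>0$: otherwise $\mu(\{X_{n_k}=q\}\setminus A)\to(1-\beta^*(q))d^*(q)>0$, so by the reverse Fatou inequality (i.e.\ $\mu(\limsup_k E_k)\ge\limsup_k\mu(E_k)$ for the finite measure $\mu$) a positive-measure set of \emph{non}-accepting runs has $X_{n_k}=q$ for infinitely many $k$, forcing $\beta_{n_k}(q)\to 0$ along those indices and contradicting $\beta^*(q)>0$. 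Therefore $C:=\{q:h^*(q)>0\}$ equals $\{q:\beta^*(q)=1,\ d^*(q)>0\}$, on which $h^*=d^*$, so $\sum_{q\in C}d_{n_k}(q)\to\sum_{q\in C}h^*(q)=p>x$; choosing $k$ with $\sum_{q\in C}d_{n_k}(q)>x$ gives $Reach(\cB,C,x)$ through the length-$n_k$ prefix of $\alpha$.

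It remains to build, for each $j$, the word $u_j$ — this is where I expect the bookkeeping to live. Fix $j$, set $\epsilon=2^{-j-3}$, and using $\beta_{n_k}(q)\to 1$ ($q\in C$, finite) fix $k$ with $\beta_{n_k}(q)>1-\epsilon$ and $d_{n_k}(q)>0$ for all $q\in C$. Conditioning on $X_{n_k}=q$ (a Markov chain started at $q$), one checks that for $k'$ large enough: acceptance has probability $>1-\epsilon$; the conditional $\mu$-measure of $A\cap\{\text{no state of }\Qf\text{ in }(n_k,n_{k'}]\}$ is $<\epsilon$ (accepting runs visit $\Qf$ infinitely often); and the conditional $\mu$-measure of $A\cap\{X_{n_{k'}}\notin C\}$ is $<\epsilon$ (L\'evy again shows an accepting run eventually avoids states with $\beta^*=0$, while $d_{n_{k'}}\to d^*$ shows it eventually avoids states with $d^*=0$). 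Taking $k'$ past the maximum of the finitely many thresholds, the word $u_j:=\alpha(n_k)\alpha(n_k+1)\cdots\alpha(n_{k'}-1)$ then satisfies, for every $q\in C$, $\delta^{\Qf}_{u_j}(q,C)\ge\mu\big(A\cap\{X_{n_{k'}}\in C\}\cap\{\Qf\text{ visited in }(n_k,n_{k'}]\}\mid X_{n_k}=q\big)>(1-\epsilon)-\epsilon-\epsilon>1-2^{-j}$, as required. The genuine obstacle is the forward direction, and within it the dichotomy $\beta^*(q)\in\{0,1\}$ on $\operatorname{supp}(d^*)$: this is what forces the single $C$ to consist only of states from which acceptance is asymptotically certain, which is exactly what makes one word $u_j$ work uniformly over $C$; everything after that is routine estimation.
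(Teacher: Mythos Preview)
Your proof is correct and takes a genuinely different route from the paper's. For the reverse implication both arguments are the same direct construction. For the forward implication, the paper argues combinatorially: it calls a set $C$ \emph{good} if there exist $\epsilon>0$, a measurable set $\Path$ of runs with $\mu(\Path)\ge x+\epsilon$, and indices $i_1<i_2<\cdots$ such that every run in $\Path$ lies in $C$ at each time $i_j$ and visits $\Qf$ between consecutive $i_j$'s. A good set is first exhibited by taking $C=\post(\qs,\gamma[0{:}k])$ for infinitely many $k$; one then passes to a \emph{minimal} good $C$ and shows minimality forces $\inf_{j,q\in C}\mu(\Path_{j,q})>0$, where $\Path_{j,q}$ consists of the runs in $\Path$ through $q$ at time $i_j$. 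The words $u_j$ are then obtained by contradiction: if for some fixed $j_0$ no word sufficed, the uniform lower bound on $\mu(\Path_{j,q})$ would drive the measure of a decreasing family containing $\Path$ below zero. Your argument instead invokes L\'evy's $0$--$1$ law to get $\beta_n(X_n)\to\mathbf 1_A$ a.s., extracts subsequential limits $(d^*,h^*)$ by compactness of $[0,1]^{\Q}$, establishes the dichotomy $\beta^*(q)\in\{0,1\}$ on the support of $d^*$, and takes $C=\{q:h^*(q)>0\}$; the words $u_j$ are built constructively by choosing $k,k'$ so that three error terms are each below $\epsilon$. The paper's approach is more elementary---no martingale theory, just finiteness of $\Q$ and measure bookkeeping---and produces the auxiliary ``good set'' structure reused elsewhere; your approach is conceptually sharper, with the dichotomy doing all the real work in one line and the $u_j$ exhibited directly rather than by contradiction.
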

\begin{proof} 
$(\Leftarrow)$ Note that it is a well-known fact that the product
  $\prod^\infty_{j=1}(1-\frac{1}{2^j})$ converges and is $>0.$ Assume
  now that $\exists C\subseteq \Q $ such that $ Reach(\cB,C,x )$ is
  true and for all $ j>0$ there is a finite word $ u_j$ such that
  $\forall q\in C\;.\delta^{\Qf}_{u_j}(q,C) >
  (1-\frac{1}{2^j})$. Since $Reach(\cB,C,x)$ is true, there is a
  finite word $u$ such that $\delta_u(\qs,C)>x.$ Fix $u$. Also for
  each $j>0$, fix $u_j$ such that $\forall q\in C$, and
  $\delta^{\Qf}_{u_j}(q,C) > (1-\frac{1}{2^j})$. % There are two cases.
%\begin{enumerate}
%\item The first case is when $x=0$. In this case, it is easy to see
%  that the infinite word $uu_1u_2\ldots$ is accepted by $\cB$ with
%  probability $>0.$
%\item The second case is when $x>0.$ In this case

 Let
  $z=\delta_u(\qs,C).$ Let $y=\frac{x}{z}.$ We have that $y<1.$ Since
   $\prod^\infty_{j=1}(1-\frac{1}{2^j})>0$ and $y<1$, there
  is a $j_0>0$ such that $\prod^\infty_{j=j_0}(1-\frac{1}{2^j})> y.$
  Now it is easy to see that the word $\alpha=
  uu_{j_0}u_{j_0+1}\ldots$ is accepted by $\cB$ with probability $>
  zy.$ But $zy$ is $x$ and the result follows.
%\end{enumerate}

$(\Rightarrow)$ Assume that $\cL_{>x}(\cB)\ne\emptyset.$ Fix an
infinite input string $\gamma \in \cL_{>x}(\cB).$ Recall that the
probability measure generated by $\gamma$ and $\cB$ is denoted by
$\mu_{\cB,\gamma}$.  For the rest of this proof we will just write
$\mu$ for $\mu_{\cB,\gamma}.$

We will call a non-empty set of states $C$ {\it good} if there is an
$\epsilon>0$, a measurable set $\Path\subseteq \Q^\omega$ of  runs, and an
infinite sequence of natural numbers $i_1<i_2 <i_3<\ldots$ such that
following conditions hold.
\begin{iteMize}{$\bullet$}
\item $\mu(\Path)\geq x+\epsilon$;
\item  For each $j>0$ and each run $\rho$ in $\Path$, we have that 
       \begin{enumerate}[(a)]
       \item $\rho[0]=\qs$, $\rho[i_j]\in C$ and
       \item at least one state in the finite sequence
         $\rho[i_j,i_{j+1}]$ is a final state.
       \end{enumerate}
\end{iteMize}
We say that a good set $C$ is \emph{minimal} if $C$ is good but for
each $q\in C$, the set $C\setminus\set{q}$ is not good. Clearly if
there is a good set of states then there is also a minimal good set of
states. 
%We have the following claim which we prove in Appendix
%\ref{app:pbaproofs}.
\\ {\bf Claim:}
\begin{iteMize}{$\bullet$}
\item There is a good set of states $C$.
\item Let $C$ be a minimal good set of states. Fix $\epsilon,\Path$
  and the sequence $i_1<i_2<\ldots$ which witness the fact that $C$ is a
  good set of states.  For each $q\in C$ and each $j>0$, let
  $\Path_{j,q}$ be the subset of $ \Path$ such that each run in
  $\Path_{j,q}$ passes through $q$ at point $i_j$, {\it i.e.}, $\Path_{j,q}=
  \set{\rho\in \Path\st \rho[i_j]=q}.$ Then there exists a $p>0$ such
  that $\mu(\Path_{j,q})\geq p$ for each $q\in C$ and each $j>0$.
\end{iteMize}
We first show how to obtain the Lemma using the above claim.  Fix a
minimal set of good states $C$. Fix $\epsilon,\Path$ and the sequence
$i_1<i_2<\ldots$ which witness the fact that $C$ is a good set of
states. We claim that $C$ is the required set of states. As
$\mu(\Path)\geq x+\epsilon$ and for each $\rho\in \Path,$
$\rho[i_1]\in C$, it follows immediately that $Reach(\cB,C,x)$. Assume
now, by way of contradiction, that there exists a $j_0>0$ such that
for each finite word $u$, there exists a $q\in C$ such that
$\delta^{\Qf}_u(q,C)\leq 1-\frac{1}{2^{j_0}}$. Fix $j_0.$ Also fix
$p>0$ such that $\mu(\Path_{j,q})\geq p$ for each $j$ and $q\in C$,
where $\Path_{j,q}$ is the subset of $ \Path$ such that each run in
$\Path_{j,q}$ passes through $q$ at point $i_j$; the existence of $p$
is guaranteed by the above claim.

We first construct a sequence of sets $L_i\subseteq \Q^+$ as follows.
Let $L_1\subseteq \Q^+$ be the set of finite words on states of $\Q$
of length $i_1+1$ such that each word in $L_1$ starts with the state
$\qs$ and ends in a state in $C.$ Formally $L_1=\set{\eta\in
  \Q^+\st |\eta|=i_1+1, \eta[0]=\qs \textrm{ and } \eta[i_1]\in C}.$
Assume that $L_r$ has been constructed. Let $L_{r+1}\subseteq \Q^+$ be
the set of finite words on states of $\Q$ of length $i_{r+1}+1$ such
that each word in $L_{r+1}$ has a prefix in $L_r$, passes through a
final state in between $i_r$ and $i_{r+1}$, and ends in a state in
$C.$ Formally, $L_{r+1}=\set{\eta\in\Q^+\st |\eta|=i_{r+1}+1,
  \eta[0:i_r]\in L_r , \exists i. (i_r<i<i_{r+1}\;\wedge\; \eta[i]\in
  \Qf)}.$
 
Note that $(L_r\Sigmaw)_{r\geq 1}$ is a decreasing sequence of measurable subsets
and $\Path\subseteq \intersect_{r>1}L_r\Sigmaw.$ Now, it is easy to
see from the choice of $j_0$ and $p$ that $\mu(L_{r+1}\Sigmaw) \leq
\mu(L_{r}\Sigmaw)- \frac{p}{2^{j_0}}.$ This, however, implies that
there is a $r_0$ such that $\mu(L_{r_0}\Sigmaw)<0.$ A contradiction.
Thus, it suffices to show that the claim is correct.

%Now, 
\noindent{\bf Proof of the claim}:
\begin{enumerate}[(1)]
\item For each $k>0,$ let $C_k=\post(\qs,\gamma[0:k]).$ Since the set
  of states $\Q$ is finite, there must be some $C$ such that $C_k=C$
  for infinitely many $k$'s. Fix one such $C$. We claim that $C$ is a
  good set of states. We need to show that $C$ satisfies the
  definition of good set of states. So we need to construct
  $\epsilon,\Path$ and the infinite sequence $i_1<i_2<\ldots$ as in
  the definition of good set of states.  We will pick $\epsilon>0$
  such that $\muacc\cB\gamma= x+ 2\epsilon.$ We construct $\Path$ and
  the sequence $i_1<i_2<\ldots $ as follows.

First let $\Path_0$ be the set of all runs starting in $\qs$ and
visiting the final states infinitely often. $\Path_0$ is measurable
and $\mu(\Path_0)= x+2\epsilon$. Take $i_1>0$ to be the smallest
integer such that $\post(\qs,\gamma[0:i_1])=C$. Inductively, assume
that we have constructed a sequence of integers $i_1 < i_2 < \cdots
< i_{j+1}$, and a measurable set $\Path_j \subseteq \Path_0$ such that
\begin{iteMize}{(a)}
\item $\mu(\Path_j) > x + \epsilon + \frac{\epsilon}{2^j}$, 
\item for each $\rho \in \Path_j$ and $k \leq j+1$, $\rho[i_k]
  \in C$, and 
\item for each $\rho\in \Path_j$ and $k < j+1$, there is
  some $i$ between $i_k$ and $i_{k+1}$ such that $\rho[i] \in
  \Qf$. 
\end{iteMize}
Observe that $\Path_0$ and $i_1$ satisfy that above conditions as
condition (c) holds vaccuously. Now for each $\ell>0$,
$\Path^\ell_j\subseteq \Path_j$ be the set of runs that visit a final
state at least one time between $i_{j+1}$ and $i_{j+1}+\ell$.
Formally, $\Path^\ell_j=\set{\rho\in \Path_j\st \exists i. ( i_{j+1}<
  i <i_{j+1}+\ell \wedge \rho[i]\in \Qf )}$. Clearly $\Path^\ell_j$ is
an increasing sequence of measurable sets and $\union_{\ell\in\Nats}
\Path^\ell_j=\Path_j$ (each run in $\Path_j$ visits the set of final
states infinitely often). Since $\mu(\Path_j) >
x+\epsilon+\frac{\epsilon}{2^j}$, there must exist a $\ell_0$ such
that
$\mu(\Path^{\ell_0}_j)>x+\epsilon+\frac{1}{2}(\frac{\epsilon}{2^j})$. Fix
$\ell_0$ and let $i_{j+2}>i_{j+1}+\ell_0$ be the smallest integer such
that $\post(\qs,\gamma[0:i_{j+2}])=C$. Let
$\Path_{j+1}=\Path^{\ell_0}_j$. It is easy to see that $\Path_{j+1}$
is measurable and that it satisfies the conditions (a), (b) and (c),
assumed inductively about $\Path_j$.

Observe that the above inductive construction ensures that
$\Path_{j+1} \subseteq \Path_j$. Take $\Path = \intersect_{j \in
  \Nats}\Path_j$. It is easy to see that $\Path$ and the sequence $i_1
< i_2 < \cdots$ constructed inductively, satisfy the claim.

\item
We have that $C$ is minimal good set of states. Note that as $C$ is
finite, we only need to show that for each $q\in Q$,
$\inf_{j>0}\mu(\Path_{j,q}) >0.$
%the function $\liminf_{j>0}\mu(\Path_j,q)=\lim_{j>0} (\inf_{k\geq j}(\mu(\Path_{k,q}))$ is $>0.$ 
We proceed by contradiction. Assume that there is some $q$ such that
$\inf_{j>0}\mu(\Path_{j,q}) =0$.  Fix one such $q$. We will obtain a
contradiction to minimality if we can show that $C\setminus \set{q}$
is also a good set of states.

In order to show that $C\setminus \set{q}$ is a good set of states, we
have to satisfy the definition of a good set of states.

Now, since $\inf_{j>0}\mu(\Path_{j,q}) =0$, there is some $j_1$ such
that $\mu(\Path_{j_1,q})<\frac{\epsilon}{4} .$ Let $\Path^1= \Path
\setminus \Path_{j_1,q}. $ We have that $\Path^1\subseteq \Path$,
$\mu(\Path^1)\geq x+ \frac{\epsilon}{2} +\frac{\epsilon}{4}$ and for
each $\rho\in \Path^1,$ $\rho[i_{j_1}]\in C\setminus\set{q}.$

Now, again as $\inf_{j>0}\mu(\Path_{j,q})=0$, there is some $j_2>j_1$
such that $\mu(\Path_{j_2,q})<\frac{\epsilon}{8} .$ Let $\Path^2=
\Path^1 \setminus \Path_{j_2,q}. $ We have that $\Path^2\subseteq
\Path^1$, $\mu(\Path^2)\geq x+\frac{\epsilon}{2} +\frac{\epsilon}{8}$
and for each $\rho\in \Path^2,$ $\rho[i_{j_2}]\in C\setminus\set{q}.$
Note also that as $j_2>j_1$ and $\Path^1\subseteq\Path$, we have that
for each each $\rho\in \Path^2$ there is some $i$ such that
$i_{j_1}<i<i_{j_2}$ and $\rho[i]\in \Qf.$

We can continue and obtain a sequence $\Path^{j_1}\supseteq
\Path^{j_2}\supseteq \ldots$ of measurable sets, and sequence
$i_{j_1}<i_{j_2}<\ldots$ such that for each $l>0$,
$\mu(\Path^{j_l})\geq x+ \frac{\epsilon}{2} +\frac{\epsilon}{2^l}$ and
for each $\rho\in \Path^{j_l},$ $\rho[i_{j_l}]\in C\setminus\set{q}.$
Furthermore for each $l>1$ and each $\rho\in \Path^l$ there is some
$i$ such that $i_{j_{l-1}}<i<i_{j_l}$ and $\rho[i]\in \Qf.$

Let $\Path'= \intersect_{l>0} \Path^l.$ We have that $\mu(\Path')\geq
x+ \frac{\epsilon}{2}$.  Clearly $\frac{\epsilon}{2}, $ $\Path'$ and
the sequence $i_{j_1}<i_{j_2}<\ldots$ witness the fact that
$C\setminus\set q$ is a good set of states. \qedhere
\end{enumerate}\medskip
\end{proof}

\noindent We get immediately that the probable semantics of a PBA, if non-empty, must contain a strongly asymptotic word. 
\begin{cor}\rm
Given a PBA $\cB$, $\cL_{>0}(\cB)\ne\emptyset$ iff $\cL_{>0}(\cB)$ contains a strongly asymptotic word. 
\end{cor}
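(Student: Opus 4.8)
The plan is to derive this corollary directly from Lemma~\ref{lem:increasing} by unwinding the definition of a strongly asymptotic word. Set $x = 0$ throughout, so that $\cL_{>0}(\cB) = \cL_{>x}(\cB)$, and note that $0 \in [0,1)$ so the lemma applies.

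For the forward direction, suppose $\cL_{>0}(\cB) \ne \emptyset$. By Lemma~\ref{lem:increasing}, there is a set $C \subseteq \Q$ with $Reach(\cB,C,0)$ true and, for each $j>0$, a finite non-empty word $u_j$ with $\delta^{\Qf}_{u_j}(q,C) > 1 - \frac{1}{2^j}$ for all $q \in C$. Since $Reach(\cB,C,0)$ holds, there is a finite non-empty word $u$ with $\delta_u(\qs,C) > 0$. First I would form the word $\alpha = u\, u_1\, u_2\, u_3 \cdots$. Then I would set $i_1 = |u|-1$ and, inductively, $i_{j+1} = i_j + |u_j|$, so that $\alpha[0:i_1] = u$ and $\alpha[i_j+1 : i_{j+1}] = u_j$ for each $j \geq 1$. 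Condition~(1) of the definition of strongly asymptotic holds because $\delta_{\alpha[0:i_1]}(\qs,C) = \delta_u(\qs,C) > 0$, and condition~(2) holds because for all $j>0$ and all $q \in C$ we have $\delta^{\Qf}_{\alpha[i_j+1,i_{j+1}]}(q,C) = \delta^{\Qf}_{u_j}(q,C) > 1 - \frac{1}{2^j}$. Hence $\alpha$ is strongly asymptotic with respect to $\cB$ and $C$, so $\cL_{>0}(\cB)$ contains a strongly asymptotic word. (Strictly, one should also argue $\alpha \in \cL_{>0}(\cB)$, but this is exactly what the $(\Leftarrow)$ direction of Lemma~\ref{lem:increasing} establishes from the same data, with $x=0$; I would just invoke that.)

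For the reverse direction, suppose $\alpha \in \cL_{>0}(\cB)$ is strongly asymptotic with respect to $\cB$ and some $C$, witnessed by a sequence $i_1 < i_2 < \cdots$. Then setting $u = \alpha[0:i_1]$ gives $\delta_u(\qs,C) > 0$, so $Reach(\cB,C,0)$ is true; and for each $j>0$, setting $u_j = \alpha[i_j+1 : i_{j+1}]$ gives a finite non-empty word with $\delta^{\Qf}_{u_j}(q,C) > 1 - \frac{1}{2^j}$ for all $q \in C$. By the $(\Leftarrow)$ direction of Lemma~\ref{lem:increasing} (with $x=0$), this implies $\cL_{>0}(\cB) \ne \emptyset$ — which is in any case immediate since $\alpha$ itself witnesses non-emptiness. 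So both implications follow.

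There is no real obstacle here; the corollary is essentially a restatement of Lemma~\ref{lem:increasing} phrased in terms of the existence of a single witnessing word rather than in terms of a set $C$ and a family of words $u_j$. The only minor point to be careful about is the bookkeeping relating the indices $i_j$ to the block decomposition $\alpha = u\,u_1\,u_2\cdots$, and confirming the non-emptiness of each $u_j$ (which is guaranteed because $\delta^{\Qf}_{u_j}(q,C)$ is defined only for $u_j \in \Sigma^+$, and indeed one needs $i_{j+1} > i_j$, which holds by the strict increase of the sequence). I would keep the argument to a few lines, pointing back to the two directions of the lemma.
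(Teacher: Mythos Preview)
Your proposal is correct and matches the paper's approach: the paper treats the corollary as immediate from Lemma~\ref{lem:increasing} without further detail, and what you have written is precisely the natural unpacking of that claim---build $\alpha = u\,u_1u_2\cdots$ from the data supplied by the $(\Rightarrow)$ direction, check it is strongly asymptotic by construction, and invoke the $(\Leftarrow)$ direction (with $x=0$, so $j_0=1$ works) to see it lies in $\cL_{>0}(\cB)$. The reverse implication is indeed trivial once you notice the hypothesis already places a word in $\cL_{>0}(\cB)$, so your extraction of $u$ and the $u_j$'s there is unnecessary, as you yourself observe.
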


%The proof of the above lemma implies that probable semantics of a {\PBA}
%is exactly the set of strongly asymptotic words (the
%proof can be found in Appendix \ref{app:pbaproofs}).
%\begin{cor}\rm
%\label{cor:asym}
%For any {\PBA} $\cB$, $\cL_{>0}(\cB)= \set{\alpha\in \Sigmaw\st \alpha \textrm{ is strongly asymptotic with respect to }\cB}.$

%\end{cor}

Lemma \ref{lem:increasing} also implies that emptiness-checking of
$\cL_{>0}(\cB)$ for a given a {\RPBA} $\cB$ is in $\mathbf{\Sigma}^0_2$.
\begin{cor}\rm
\label{cor:sigma02}
Given a \RPBA, $\cB$, the problem of deciding whether
$\cL_{>0}(\cB)=\emptyset$ is in $\mathbf{\Sigma}^0_2.$
\end{cor}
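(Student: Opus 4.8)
The plan is to derive the $\mathbf{\Sigma}^0_2$ upper bound directly from the characterization in Lemma~\ref{lem:increasing}, by writing the predicate ``$\cL_{>0}(\cB)\neq\emptyset$'' in the required $\exists\forall$ quantifier form over a recursive matrix. Applying Lemma~\ref{lem:increasing} with $x=0$, we have that $\cL_{>0}(\cB)\neq\emptyset$ iff there exists $C\subseteq\Q$ such that $\mathit{Reach}(\cB,C,0)$ holds and for every $j>0$ there is a finite non-empty word $u_j$ with $\delta^{\Qf}_{u_j}(q,C)>1-\frac{1}{2^j}$ for all $q\in C$.

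The first step is to observe that the subsets $C$ of $\Q$ range over a finite set, so the leading ``$\exists C$'' is a finite disjunction and contributes nothing to the arithmetic complexity; it can be folded into the recursive matrix. The second step is to analyze $\mathit{Reach}(\cB,C,0)$: by definition this is $\exists u\in\Sigma^+.\,\delta_u(\qs,C)>0$, which (since $\delta$ has rational entries) is a $\mathbf{\Sigma}^0_1$ predicate — indeed it is even decidable, being equivalent to reachability of some state of $C$ in the finite underlying graph of $\cB$, but $\mathbf{\Sigma}^0_1$ suffices. The third and key step is the clause ``for all $j>0$ there is $u_j$ with $(\forall q\in C)\,\delta^{\Qf}_{u_j}(q,C)>1-\frac{1}{2^j}$''. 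Here the matrix ``$(\forall q\in C)\,\delta^{\Qf}_{u_j}(q,C)>1-\frac{1}{2^j}$'' is a recursive predicate of $u_j$ and $j$ — each $\delta^{\Qf}_{u_j}(q,C)$ is a rational number computable from the (rational) transition matrices by the finite sum in the Notation preceding the lemma, and comparing a computed rational with $1-\frac{1}{2^j}$ is decidable; the quantifier over $q\in C$ is finite. Thus this clause has the shape $\forall j\,\exists u_j\,R(C,j,u_j)$ with $R$ recursive, i.e.\ it is $\mathbf{\Pi}^0_2$.

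Putting the pieces together, the whole statement is a finite disjunction over $C$ of the conjunction of a $\mathbf{\Sigma}^0_1$ predicate ($\mathit{Reach}$) and a $\mathbf{\Pi}^0_2$ predicate. Since $\mathbf{\Sigma}^0_1\subseteq\mathbf{\Pi}^0_2$, $\mathbf{\Pi}^0_2$ is closed under finite conjunctions and disjunctions, and $\mathbf{\Pi}^0_2\subseteq\mathbf{\Sigma}^0_2$, the resulting predicate lies in $\mathbf{\Sigma}^0_2$. (In fact the bound landing in $\mathbf{\Pi}^0_2$ here is consistent with the stated $\mathbf{\Sigma}^0_2$ claim, the latter being the bound matched by the hardness result; the cosmetic absorption of the extra quantifiers is harmless.) To make the argument fully rigorous one should fix a concrete recursive encoding: the input is a standard encoding of the {\RPBA} $\cB$, the existential witnesses are a pair $(C,u)$ for $\mathit{Reach}$ together with, for each $j$, the word $u_j$, and the universal variable is $j$; one then exhibits the decision procedure for $R$ that multiplies the relevant rational matrices and performs the finitely many rational comparisons.

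The main obstacle is purely bookkeeping: one must be careful that $\delta^{\Qf}_u(q,q')$, despite being defined by a sum over all run-skeletons of length $|u|$ that touch $\Qf$, is genuinely computable from the matrices $\delta_a$ — the cleanest way is to note that $\delta^{\Qf}_u(q,C)$ equals $\delta_u(q,C)$ minus the probability of staying out of $\Qf$ entirely, and the latter is the $(q,C)$-entry of the product of the submatrices of the $\delta_a$ restricted to rows and columns indexed by $\Q\setminus\Qf$ (with a boundary adjustment for the endpoints $q$ and $q'$ lying in $\Qf$). Once this computability is noted, everything else is routine, and the corollary follows.
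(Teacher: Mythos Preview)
Your approach is essentially the paper's: both invoke Lemma~\ref{lem:increasing}, exploit that $C$ ranges over the finitely many subsets of $\Q$, and conclude that \emph{non-emptiness} is a $\mathbf{\Pi}^0_2$ predicate. Your finite-disjunction handling of $\exists C$ is in fact slightly more direct than the paper's maneuver: the paper instead pushes $\exists C$ inside the $\forall j$ to obtain an auxiliary formula $\psi$ of shape $\forall j\,\exists C_j\,\exists u\,\exists u_j\,(\cdots)$, and then argues (via pigeonhole on the finitely many $C$'s, plus monotonicity in $j$) that $\psi$ is equivalent to the original $\varphi$. Both routes rest on the same finiteness observation.

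There is, however, one genuine slip in your write-up. You assert $\mathbf{\Pi}^0_2\subseteq\mathbf{\Sigma}^0_2$; this is false --- the two classes are incomparable at every level of the arithmetic hierarchy. What your argument actually establishes is that the \emph{non-emptiness} predicate lies in $\mathbf{\Pi}^0_2$. The corollary, however, is about \emph{emptiness}, which is the complement of non-emptiness, and the complement of a $\mathbf{\Pi}^0_2$ predicate is $\mathbf{\Sigma}^0_2$ by definition. This complementation step is precisely how the paper concludes (``Thus, $\psi$ demonstrates that the non-emptiness problem is in $\mathbf{\Pi}^0_2$, which means that emptiness is in $\mathbf{\Sigma}^0_2$''). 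Your parenthetical remark suggests you sensed the mismatch, but the inclusion you wrote is simply wrong and should be replaced by the complementation step; once you do so, the proof is complete and correct.
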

\begin{proof}
Let us fix a {\PBA} $\cB=(\Q,\qs,\Qf,\delta)$, and $x \in [0,1)$. Now
  Lemma~\ref{lem:increasing} says that the non-emptiness of
  $\cL_{>x}(\cB)$ is equivalent to the following property
\[
\begin{array}{rl}
\varphi = \exists C \subseteq \Q.\ \exists u \in \Sigma^*. & ((\delta_u(\qs,C) > x) \wedge\\
 & (\forall j.\ \exists u_j \in \Sigma^*.\ (\forall q \in C.\ \delta^{\Qf}_{u_j}(q,C) > (1 - \frac{1}{2^j})))
\end{array}
\]
which can be rewritten as (by moving quantifiers out)
\[
\begin{array}{rl}
\varphi = \exists C \subseteq \Q.\ \forall j.\ \exists u \in \Sigma^*.\ \exists u_j \in \Sigma^*. 
& ((\delta_u(\qs,C) > x) \wedge \\
 & (\forall q \in C.\ \delta^{\Qf}_{u_j}(q,C) > (1 - \frac{1}{2^j})))
\end{array}
\]
Now consider the property $\psi$ given as
\[
\begin{array}{rl}
\psi = \forall j.\ \exists C_j \subseteq \Q.\ \exists u \in \Sigma^*.\ \exists u_j \in \Sigma^*. 
& ((\delta_u(\qs,C_j) > x) \wedge \\
 & (\forall q \in C_j.\ \delta^{\Qf}_{u_j}(q,C_j) > (1 - \frac{1}{2^j})))
\end{array}
\]
Clearly, $\psi$ logically follows from $\varphi$. However, in our
specific case, it turns out that in fact, $\psi$ is equivalent to
$\varphi$ due to the following observations. First note, that since
there are only finitely many subsets of $\Q$, there must be a $C
\subseteq \Q$ such that $C = C_j$ for infinitely many $j$ (if $\psi$
holds). Further observe that if $\exists u_j.(\forall q \in
C.\ \delta^{\Qf}_{u_j}(q,C) > (1 - \frac{1}{2^j}))$ for some $j$ then
$\exists u_i.(\forall q \in C.\ \delta^{\Qf}_{u_i}(q,C) > (1 -
\frac{1}{2^i}))$ holds for all $i \leq j$. From these it follows that
$\varphi$ logically follows from $\psi$.

Observe that $(\delta_u(\qs,C) > x)$ and $(\forall q \in
C.\ \delta^{\Qf}_{u_j}(q,C) > (1 - \frac{1}{2^j}))$ are recursive
predicates. Thus, $\psi$ demonstrates that the non-emptiness problem
is in $\mathbf{\Pi}^0_2$, which means that emptiness is in $\mathbf{\Sigma}^0_2$.
\end{proof}

We will now show that the emptiness problem is also $\mathbf{\Sigma}^0_2$-hard.

\begin{lem}\rm
\label{lem:probhard}
Given a \RPBA, $\cB$, the problem of deciding whether $\cL_{>0}(\cB)=\emptyset$ is $\mathbf{\Sigma}^0_2$-hard.
\end{lem}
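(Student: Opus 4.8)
The plan is to reduce a canonical $\mathbf{\Sigma}^0_2$-complete problem to the emptiness problem for $\cL_{>0}$ of an {\RPBA}. The natural source problem, already mentioned in the preliminaries, is: \emph{given a deterministic Turing machine $M$, does $M$ halt on only finitely many inputs?} This is $\mathbf{\Sigma}^0_2$-complete. So I would construct, from $M$, an {\RPBA} $\cB_M$ such that $\cL_{>0}(\cB_M) \ne \emptyset$ iff $M$ halts on infinitely many inputs. Equivalently (taking complements), I could reduce the co-$\mathbf{\Sigma}^0_2$, i.e.\ $\mathbf{\Pi}^0_2$-complete, problem ``$M$ halts on only finitely many inputs'' to \emph{non}-emptiness of $\cL_{=0}$ or to emptiness of $\cL_{>0}$ — but it is cleanest to aim directly at non-emptiness of $\cL_{>0}$ matching ``infinitely many halting inputs.''

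**Encoding the computation.** I would let the input alphabet encode candidate infinite sequences of pairs (input word $w_i$, purported halting computation of $M$ on $w_i$), separated by markers, so that an accepted infinite word is morally ``$M$ halts on $w_0$, on $w_1$, on $w_2$, \dots'' for infinitely many distinct inputs. The key technical device is the standard trick used for {\PFA}/{\PBA} undecidability (going back to Rabin, and used in \cite{Baier:fossacs2008}): a probabilistic gadget that, on a finite block, accepts with probability exactly $1$ if the block correctly encodes a halting computation of $M$ on the relevant input, and with probability \emph{strictly less than $1$} (bounded away, in a way that compounds) otherwise. Concretely, one checks the local consistency of successive TM configurations by a randomized comparison: guess a position, compare the symbol there in configuration $c_k$ against what the transition function of $M$ dictates from $c_{k-1}$, and route to a rejecting sink on mismatch with some fixed positive probability. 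A correct computation survives all such checks with probability $1$; an incorrect one loses a constant fraction of measure at the faulty block.

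**Putting the blocks together with a Büchi condition.** The {\PBA} $\cB_M$ processes the infinite input block by block, passing through an accepting state at the boundary of each correctly-verified halting computation, and sending mass to an absorbing non-accepting reject state whenever a block is faulty (with the property that a faulty block leaks at least a constant probability to reject). If the infinite word encodes infinitely many genuine halting computations on infinitely many inputs, then with positive probability the run survives every block and visits the accepting boundary infinitely often, so the word is in $\cL_{>0}(\cB_M)$ — hence $M$ halts on infinitely many inputs implies $\cL_{>0}(\cB_M)\ne\emptyset$. Conversely, if $M$ halts on only finitely many inputs, then any infinite input either contains infinitely many faulty blocks (each leaking a constant fraction, so the surviving measure tends to $0$ — here I would invoke exactly the ``$\prod(1-c)$ diverges to $0$'' / monotone-limit reasoning used in Lemma~\ref{lem:increasing} and Lemma~\ref{lem:Gdelta}), or from some point on encodes only the finitely many correct halting computations and therefore repeats inputs, which I design the automaton to detect (e.g.\ by a further randomized check that consecutive input words are lexicographically increasing, again leaking constant probability on violation); in either case $\muacc{\cB_M}{\alpha}=0$, so $\cL_{>0}(\cB_M)=\emptyset$. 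All transition probabilities are rational by construction, so $\cB_M$ is an {\RPBA}, and the map $M\mapsto\cB_M$ is plainly computable.

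**Main obstacle.** The delicate point is the converse direction: ensuring that \emph{no} infinite word sneaks into $\cL_{>0}(\cB_M)$ when $M$ has only finitely many halting inputs. Merely verifying individual computations is not enough — an adversarial input could pad with ever-longer correct-but-useless material, or re-encode the same halting input infinitely often. So the real work is designing the ``distinctness/progress'' check (lexicographic increase of the $w_i$, with a constant-probability penalty on failure) and then arguing that \emph{every} infinite input suffers infinitely many constant-probability penalties, so that the acceptance probability is a decreasing product bounded by $\prod_{k}(1-c)=0$. Getting the block structure, the sink-routing probabilities, and the Büchi accepting set to interact correctly so that this divergence argument goes through — that is the crux of the proof, and it mirrors, on the hardness side, the analysis done on the upper-bound side via strongly asymptotic words.
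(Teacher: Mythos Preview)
Your high-level plan---reduce ``$M$ halts on infinitely many inputs'' to non-emptiness of $\cL_{>0}$---matches the paper's, but the core gadget you rely on does not exist, and this breaks the argument.

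You assert that a {\PFA}-style block checker can be built so that a \emph{correct} halting computation survives with probability exactly $1$ while an \emph{incorrect} one ``loses a constant fraction of measure.'' This is the crux, and it is where the proposal fails. To compare position $j$ of configuration $c_k$ with position $j$ of $c_{k+1}$, a finite-state machine must somehow count to $j$ twice; the only way to ``guess'' an unbounded position with finite memory is via a geometric coin, and then the probability of landing on the faulty cell decays like $2^{-j}$. So the detection probability is \emph{not} a constant $c>0$ but depends on the length of the configurations. Consequently your key estimate $\prod_k(1-c)=0$ becomes $\prod_k(1-c_k)$ with $c_k\to 0$, which may well be positive---and an adversary can arrange exactly that by padding computations. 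The same objection applies to your ``lexicographic increase'' gadget for distinctness: comparing two unbounded strings for lex order cannot be done with a constant failure-detection probability by a {\PFA}. Both pieces of your construction therefore leak only vanishingly small probability on long faulty blocks, and the $\Rightarrow$ direction (finitely many halting inputs $\Rightarrow$ $\cL_{>0}(\cB_M)=\emptyset$) collapses.

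This is precisely why the actual proof is so much more elaborate. The paper does \emph{not} build a single {\PBA} with a ``probability $1$ vs.\ $<1$'' checker. Instead it imports the Condon--Lipton {\PFA} $\cR$, which only achieves a \emph{gap} $\geq 1-\epsilon$ versus $\leq\epsilon$, and even that requires the input to repeat each candidate computation $d^{n}$ times (with $n$ the computation length). Because the gap is not $1$ vs.\ $<1$, the paper then needs the Baier--Gr\"{o}{\ss}er machines $\cPP_1,\cPP_2$, whose probable languages are defined by infinite products of acceptance probabilities of $\cR$, to convert the gap into an emptiness condition. Finally, distinctness of inputs is enforced not by lex order but by a separate {\PBA} $\cPP_3$ that forces the \emph{lengths} of the encoded computations to grow unboundedly; this is feasible because unbounded growth of lengths (unlike lex comparison) can be tested with a single biased coin per block. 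The reduction then shows $L(M)$ finite iff $\cL_{>0}(\cPP_1)\cap\cL_{>0}(\cPP_2)\cap\cL_{>0}(\cPP_3)=\emptyset$, and uses closure of $\probable$ under intersection to finish. Your sketch is missing all three of these ingredients.
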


\begin{proof} 
The hardness result will be obtained by significantly modifying the
proof in~\cite{Baier:fossacs2008}, where the emptiness problem was
shown to be {\bf R.E.}-hard.

Consider a deterministic two counter machine $M$ with two counters and
a one way, read only input tape.  We can capture the computation of
$M$, as a sequence of configurations where each configuration is a
4-tuple $(q,x,a^i,b^j,m)$ where $q$ is the state of the finite state
control that $M$ changed to, $x$ is the input symbol that is read and
$i,j$ are the new counter values and $m$ indicates whether the input
head stayed in the same place, or moved right and read a new input
symbol. Here $m\in \{same,right\}$. Note that the two counter values
are represented in unary having a string of $a$s and $b$s,
respectively.  Thus, a computation of $M$ can be described by a string
over alphabet $\Sigma'$ that includes the states of $M$, the input
symbols of $M$, the symbols $a,b,same,right,(,),$ and `$,$'. In this
proof we will restrict our attention to machines $M$ that read all the
input symbols; thus, the number of steps in a computation is at least
the length of the input. A halting computation is a sequence of
configurations ending in a halting state. Define $L(M)$ to be the set
of input strings on which $M$ halts. Let $\langle M \rangle$ be a binary encoding of
$M$. Consider ${\cal H} = \{\langle M
\rangle\: |\: L(M) \neq \emptyset\}$ and ${\cal D} = \{\langle M
\rangle\: |\: L(M) \mbox{ is finite}\}$. Recall that ${\cal H}$ is
          {\bf R.E.}-complete and ${\cal D}$ is $\mathbf{\Sigma}^0_2$-complete.

Our proof of hardness will be as follows: Given a deterministic two
counter machine $M$, we will construct three {\RPBA}s $\cPP_1, \cPP_2$,
and $\cPP_3$ such that $\langle M \rangle \in {\cal D}$ iff
$\cL_{>0}(\cPP_1) \cap \cL_{>0}(\cPP_2) \cap \cL_{>0}(\cPP_3) =
\emptyset$. Since $\probable$ is closed under intersection and the
intersection of automata can be effectively
constructed~\cite{Baier:fossacs2008,GroesserThesis}, this will
demonstrate a reduction from ${\cal D}$ to the emptiness problem and
therefore prove the hardness result. Our construction of the {\RPBA}s
$\cPP_1, \cPP_2$, and $\cPP_3$ relies on ideas in~\cite{condon-lipton}
and~\cite{Baier:fossacs2008,GroesserThesis}. Hence, we begin by
recalling the key ideas from these papers that we will exploit.

For the rest of this proof, let us fix a deterministic two-counter
machine $M$ whose computations can be encoded as strings over
$\Sigma'$. Consider any rational $\epsilon$ such that
$0<\epsilon<\frac{1}{2}$. \cite{condon-lipton} give the construction
of a {\PFA} $\cR$ (that depends on $M$ and $\epsilon$) over alphabet
$\Sigma_\cR = \Sigma' \cup \{@\}$, where $@ \not\in \Sigma'$. We can
show that this {\PFA} $\cR$ satisfies the following properties.
\begin{enumerate}[(1)]
\item There exists an (computable) integer constant $d\geq 2$ such
  that if $w$ is a valid and halting computation of $M$ of length $n$,
  then the input string $(w@)^{d^{n}}$ is accepted by $\cR$ with
  probability $\geq (1-\epsilon)$; that is, the string obtained by
  concatenating $w$, $d^n$ number of times, where successive
  concatenations are separated by $@$, is accepted with probability at
  least $1 - \epsilon$.
\item Consider any input $u = w_1@w_2@\cdots @w_m@$, where no $w_i$ is
  a valid halting computation of $M$. $\cR$ accepts $u$ with
  probability at most $\epsilon$.
\end{enumerate}

The proof that $\cR$ satisfies these properties is defered to the
Appendix. Let   $\muacc\cR w$ %$Pr_\cR(w)$ 
denote the probability with which the input
$w$ is accepted by $\cR$. Observe that the above construction has the
following property: if $L(M) = \emptyset$ then any input string is
accepted by $\cR$ with probability at most $\epsilon$; on the other
hand, if $L(M) \neq \emptyset$ then there is some input that is
accepted with probability at least $1-\epsilon$.

Using the above construction of $\cR$,
\cite{Baier:fossacs2008,GroesserThesis} reduce ${\cal H}$ to the
emptiness problem, thus demonstrating its {\bf R.E.}-hardness. The
main ideas behind this are as follows. Let $\Sigma_\cPP = \Sigma_\cR
\cup \{\sharp,\$\}$, where $\sharp$ and $\$$ are symbols not in
$\Sigma_\cR$. \cite{Baier:fossacs2008,GroesserThesis} construct two
     {\RPBA}s $\cPP_1$ and $\cPP_2$ over alphabet $\Sigma_\cPP$
     such that $\cL_{>0}(\cPP_1)$ is
\[
\set{w^1_1\sharp w^1_2\sharp \cdots w^1_{k_{1}}\$\$w^2_1\sharp w^2_2\cdots w^2_{k_{2}}\$\$\cdots\st w^j_i\in \Sigma_\cR^* \textrm{ and }
 \prod_{j\geq 1}(1- (\prod^{k_j-1}_{i=1} (1-\muacc{\cR}{w^j_i})))>0}.
\]
and $\cL_{>0}(\cPP_2)$ is
\[
\set{v_1\$\$v_2\$\$\cdots\::\:v_i\in (\Sigma\cup\{\sharp\})^* \textrm{ and }
\prod_{i\geq 1}(1-(1-\epsilon)^{g(v_i)})=0}
\]
where $g(v_i)$ is the number of $\sharp$ symbols in $v_i$. Let $L_1 =
\cL_{>0}(\cPP_1)$ and $L_2 = \cL_{>0}(\cPP_2)$. The following two
observations are shown in~\cite{Baier:fossacs2008,GroesserThesis}.
\begin{enumerate}[(1)]
\item Consider any input $w = w^1_1\sharp w^1_2\sharp \cdots
  w^1_{k_{1}}\$\$w^2_1\sharp w^2_2\cdots w^2_{k_{2}}\$\$\cdots$, where
  $w^j_i\in \Sigma_\cR^*$ and $\muacc{\cR}{w^j_i} \leq \epsilon$. If $w \in
  L_2$ then $w \not\in L_1$.
\item Suppose $w_1, w_1, \ldots$ are (not necessarily distinct) words
  over $\Sigma_\cR$ such that $\muacc\cR{w_i} \geq 1-\epsilon$. For any
  $\epsilon< \frac{1}{2}$, there are $k_1, k_2, k_3, \ldots$ such that
  \[(w_1\sharp)^{k_1-1}w_1\$\$(w_2\sharp)^{k_2-1}w_2\$\$\cdots\] belongs to
  $L_1 \cap L_2$.
\end{enumerate}
Observe that the above two observations
allow~\cite{Baier:fossacs2008,GroesserThesis} to conclude that $L_1
\cap L_2 \neq \emptyset$ iff there is some $w$ such that $\muacc\cR{w}
\geq 1-\epsilon$. Thus, using properties of $\cR$, one can see that
${\cal H}$ can be reduced to the emptiness problem, therefore
demonstrating its undecidability.

In order to prove the tighter lower bound of $\mathbf{\Sigma}^0_2$, we would
like to extend the above ideas to obtain a reduction from ${\cal D}$,
instead of ${\cal H}$. We first outline the intuitions behind the
extension. Suppose $u_1, u_2, \ldots$ are (not necessarily distinct)
halting computations of $M$. Consider the input word
\[
w(k_1,k_2,\ldots) = ((u_1@)^{\ell_1}\sharp)^{k_1-1}(u_1@)^{\ell_1}\$\$
((u_2@)^{\ell_2}\sharp)^{k_2-1}(u_2@)^{\ell_2}\$\$\cdots
\]
where $\ell_i = d^{|u_i|}$. From the preceding paragraphs it can be
seen that there is a choice of $k_1,k_2, \ldots$ such that
$w(k_1,k_2,\ldots) \in L_1\cap L_2$. To obtain a reduction from ${\cal
  D}$, we need to ``check'' that infinitely many among the
computations $u_1, u_2, \ldots$ correspond to distinct inputs. To do
this we will construct a third {\RPBA} $\cPP_3$ that will check that
the computations $u_i$ grow unboundedly. Since $M$ is deterministic,
passing the test imposed by $\cPP_3$ ensures that $L(M)$ is
infinite, and conversely, if $L(M)$ is infinite then our assumption
that $M$ reads all the input symbols ensures that there will be some
string that passes the $\cPP_3$ test.

\begin{figure}
\begin{center}
\begin{tikzpicture}
\input{auto-exmp.tkz}
\end{tikzpicture}
\end{center}
\caption{Automata $\cPP_3$. Here $*$ indicates any input symbol, $b$
  any input symbol that is not $\$$, and $a$ any input symbol that is
  not $@,\sharp$ or $\$$.}
\label{fig:auto}
\end{figure}
We now outline the formal details. The {\RPBA} $\cPP_3$ has
$\Sigma_\cPP$ as input alphabet and is shown in
Figure~\ref{fig:auto}. It has four states $s_0,s_1,s_2,s_r$ where
$s_0$ is the initial state and the only final state. $s_r$ is an
absorbing state. The transition probabilities depend on a parameter
$\lambda$ that we will fix later. In state $s_0$, on inputs other than
$@,\sharp,\$$, the machine transitions to $s_0$ with probability
$\lambda$ and $s_1$ with probability $1-\lambda$; on $@,\sharp,\$$ it
goes to $s_r$ with probability $1$. In state $s_1$ it behaves as
follows. On input $@,\#$ it goes to state $s_0$ with probability $1$;
on input $\$$ it goes to $s_2$; on all other inputs, it remains in
$s_1$ with probability $1$.  In state $s_2$ it behaves as follows. On
input $\$$ it goes to state $s_0$ with probability $1$; on all other
inputs, it goes to $s_r$ with probability 1. We will pick $\lambda$ to
be such that $\lambda\cdot d \ll \frac{1}{2}$; recall that $d$ is the
constant associated with $\cR$. Let
$\SeqComp=\set{u_0x_0u_1x_1\cdots\st x_i\in \set{@,\#,\$\$} \text{ and
  } @,\#,\$ \textrm{ do not appear in the strings }u_i}$. It can be
easily shown that $\cL_{>0}(\cPP_3) \subseteq \SeqComp$. In
addition, consider $\alpha = u_0x_0u_1x_1\cdots \in \SeqComp$ with
$x_i\in \set{@,\#,\$\$}$ and $@,\#,\$$ not in $u_i$. If there is an
$\ell$ such that for infinitely many $i$, $|u_i| = \ell$ then $\alpha
\not\in \cL_{>0}(\cPP_3)$. We conclude the proof by showing the
following claim.

\noindent{\bf Claim:} $L(M)$ is a finite set iff
$\cL_{>0}(\cPP_1)\intersect \cL_{>0}(\cPP_2)\intersect\cL_{>0}(\cPP_3)
= \emptyset$.

\noindent{\bf Proof of the claim:} Let $\Cc(M)$ be the set of valid
halting computations of $M$. Since we assume that $M$ is deterministic
and reads all the input symbols, we can conclude that $L(M)$ is finite
iff $\Cc(M)$ is finite. Suppose $L(M)$ is finite and $\alpha \in
\cL_{>0}(\cPP_2)\intersect\cL_{>0}(\cPP_3)$. Since $\alpha$ is
accepted by $\cPP_3$, we know that the computations in $\alpha$ grown
unboundedly. However, since $\Cc(M)$ is finite, we can conclude that
there is a suffix $\beta$ of $\alpha$ such that none of computations
of $M$ in $\beta$ are valid and halting. Thus, $\beta = w^1_1\sharp
w^1_2\sharp \cdots w^1_{k_{1}}\$\$w^2_1\sharp w^2_2\cdots
w^2_{k_{2}}\$\$\cdots$ such that $\muacc\cR{w^j_i} \leq
\epsilon$. Coupled with the fact that $\alpha\in \cL_{>0}(\cPP_2)$, we
can argue that $\alpha\notin \cL_{>0}(\cPP_1)$ using a similar
reasoning as in~\cite{Baier:fossacs2008}.

Suppose $L(M)$ is an infinite set. Hence $\Cc(M)$ is also an infinite
set.  Let $u_1,u_2,\ldots$ be some \emph{distinct} computations in
$\Cc(M)$; we will describe how to choose $u_i$ later. As before,
consider
\[
w(k_1,k_2,\ldots) = ((u_1@)^{\ell_1}\sharp)^{k_1-1}(u_1@)^{\ell_1}\$\$
((u_2@)^{\ell_2}\sharp)^{k_2-1}(u_2@)^{\ell_2}\$\$\cdots
\]
where $\ell_i = d^{|u_i|}$. As mentioned before, there are $k_1,k_2,
\ldots$ such that $w(k_1,k_2,\ldots)$ is accepted by both $\cPP_1$ and
$\cPP_2$. In addition, the probability that $\cPP_3$ accepts
$w(k_1,k_2,\ldots)$ is $\prod_{i>0} p_i$ where
$p_i\:=\:(1-\lambda^{|u_i|})^{\ell_i\cdot k_{i}}$. We will choose
$u_i$ (or rather its length) to be a computation so that $p_i$ is
$>(1-\frac{1}{2^{i}})$; this will ensure that $\prod_{i>0} p_i$ is
non-zero. Assuming $\lambda$ to be very small 
and substituting for
$\ell_i\:=\:d^{|u_i|}$, it is easily seen that
%$p_i>(1-\ell_i\cdot \lambda^{|u_i|})^{k_i}$.  By substituting for
%$\ell_i\:=\:d^{|u_i|}$, we see that 
$p_i > (1- (d\cdot
\lambda)^{|u_i|})^{k_{i}}$. Here $d,k_i$ are fixed and $\lambda$ is a
small constant such that $d\cdot \lambda \ll \frac{1}{2}$.  Now, it
should be easy to see that we can chose a sufficiently long halting
computation $u_i$ so that $(1-(d\cdot \lambda)^{|u_i|})^{k_{i}} >
(1-\frac{1}{2^{i}})$.  
\end{proof}

Since the class $\probable$
is closed under complementation and the complementation procedure is
recursive \cite{Baier:fossacs2008} for {\RPBA}s, we can conclude that checking
universality of $\cL_{>0}(\cB)$ is also $\mathbf{\Sigma}^0_2$-complete. The same bounds also apply to
checking language containment under probable semantics. Note that these problems were
already shown to undecidable in \cite{Baier:fossacs2008}, but the exact complexity was not computed therein. 
\begin{thm}\rm
\label{thm:pbaanalytical}
Given a \RPBA, $\cB$, the problems 1) deciding whether $\cL_{>0}(\cB)=\emptyset$ and
2) deciding whether $\cL_{>0}(\cB)=\Sigmaw$, are
 $\mathbf{\Sigma}^0_2$-complete.  Given another  {\RPBA}, $\cB'$, the problem of deciding whether $\cL_{>0}(\cB)\subseteq \cL_{>0}(\cB')$  
is also $\mathbf{\Sigma}^0_2$-complete.  
\end{thm}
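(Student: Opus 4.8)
The plan is to derive all three $\mathbf{\Sigma}^0_2$-completeness statements by combining results already established, since essentially all of the technical work is done: Corollary~\ref{cor:sigma02} places emptiness of $\cL_{>0}(\cdot)$ in $\mathbf{\Sigma}^0_2$, Lemma~\ref{lem:probhard} shows it is $\mathbf{\Sigma}^0_2$-hard, and \cite{Baier:fossacs2008} provides effective closure of $\probable$ under the Boolean operations for {\RPBA}s. The emptiness claim is then immediate from Corollary~\ref{cor:sigma02} and Lemma~\ref{lem:probhard}.

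For universality, I would use the fact (from \cite{Baier:fossacs2008}) that there is a recursive procedure taking an {\RPBA} $\cB$ to an {\RPBA} $\cB^c$ with $\cL_{>0}(\cB^c) = \Sigmaw \setminus \cL_{>0}(\cB)$. Since $\cL_{>0}(\cB) = \Sigmaw$ iff $\cL_{>0}(\cB^c) = \emptyset$, the map $\cB \mapsto \cB^c$ is a many-one reduction of universality to emptiness, so universality lies in $\mathbf{\Sigma}^0_2$ by Corollary~\ref{cor:sigma02}; reading the same equivalence the other way, $\cL_{>0}(\cB) = \emptyset$ iff $\cL_{>0}(\cB^c) = \Sigmaw$, gives a reduction of emptiness to universality, so Lemma~\ref{lem:probhard} yields $\mathbf{\Sigma}^0_2$-hardness. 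Hence universality is $\mathbf{\Sigma}^0_2$-complete.

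For containment, I would combine effective closure under intersection and complementation: from {\RPBA}s $\cB$ and $\cB'$ one recursively builds an {\RPBA} $\cA$ with $\cL_{>0}(\cA) = \cL_{>0}(\cB) \intersect (\Sigmaw \setminus \cL_{>0}(\cB'))$, and then $\cL_{>0}(\cB) \subseteq \cL_{>0}(\cB')$ holds iff $\cL_{>0}(\cA) = \emptyset$; by Corollary~\ref{cor:sigma02} this puts containment in $\mathbf{\Sigma}^0_2$. For hardness, I would specialize $\cB$ to the trivial one-state {\RPBA} whose single state is initial, accepting and absorbing, which recognizes $\Sigmaw$ under probable semantics; then containment of this fixed language in $\cL_{>0}(\cB')$ is exactly the universality problem for $\cB'$, which was just shown $\mathbf{\Sigma}^0_2$-hard. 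Thus containment is $\mathbf{\Sigma}^0_2$-complete.

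Since the hard analytic ingredients (the strongly-asymptotic-word characterization behind Corollary~\ref{cor:sigma02} and the two-counter-machine reduction behind Lemma~\ref{lem:probhard}) together with the effective Boolean closure results of \cite{Baier:fossacs2008} are all in place, I expect no real obstacle. The only points demanding a line of justification are that the complementation and intersection constructions on {\RPBA}s are genuinely computable, so that the reductions above are legitimate many-one reductions in the arithmetical hierarchy, and that the one-state {\RPBA} for $\Sigmaw$ indeed accepts every word with probability $1$; both are routine.
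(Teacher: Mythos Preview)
Your proposal is correct and follows essentially the same route as the paper: both derive all three upper bounds from Corollary~\ref{cor:sigma02} via the effective complementation and intersection constructions of \cite{Baier:fossacs2008,GroesserThesis}, and both derive hardness from Lemma~\ref{lem:probhard} propagated through the same effective complementation. The only cosmetic difference is that you spell out the containment-hardness reduction explicitly (fixing $\cB$ to a one-state automaton for $\Sigmaw$ and invoking universality), whereas the paper leaves this step implicit.
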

\begin{proof}
Since {$\probable$} is closed under complementation and  the complementation is
recursive \cite{Baier:fossacs2008,GroesserThesis}, Lemma \ref{lem:probhard} 
immediately implies that the problems of universality, emptiness
and set containment are $\mathbf{\Sigma}^0_2$-hard.
 Also observe
that given $\cB_1$ and $\cB_2$, we have that  $\cL_{>0}(\cB_1)\subseteq \cL_{>0}(\cB_2)$ iff $\cL_{>0}(\cB_1)
\intersect (\Sigmaw\setminus \cL_{>0}(\cB_2))=\emptyset.$ Now, results of \cite{Baier:fossacs2008,GroesserThesis}
show that there is a constructible $\cB_3$ such that $\cL_{>0}(\cB_3)=\cL_{>0}(\cB_1)
\intersect (\Sigmaw\setminus \cL_{>0}(\cB_2))$. Now, thanks to Corollary \ref{cor:sigma02},
the problems of universality, emptiness
and set containment are in $\mathbf{\Sigma}^0_2$.
 \end{proof}
\begin{rems}
Lemma \ref{lem:increasing} can be used to show that emptiness-checking
of $\cL_{>\half}(\cB)$ for a given {\RPBA} $\cB$ is in
$\mathbf{\Sigma}^0_2$. In contrast, we had shown in \cite{rch:sis:mv:08a} that
the problem of deciding whether $\cL_{>\half}(\cM)=\Sigmaw$ for a
given {\FPM} $\cM$ lies beyond the arithmetical hierarchy.
\end{rems}

\section{Almost-sure semantics}
\label{sec:alsure}

The class $\alsure$ was first studied in \cite{Baier:fossacs2008},
although it was not characterized topologically.  In this section,
we study the expressiveness and complexity of the class $\alsure.$ We
will also demonstrate that the class $\alsure$ is closed under finite
unions and intersections.
As in the case of probable semantics, we assume  that the alphabet $\Sigma$
is fixed and contains at least two letters.
\subsection{Expressiveness}
In this section, we shall establish new expressiveness results for the class $\alsure$--
\begin{iteMize}{$\bullet$}
\item $\alsure\subsetneq \cG_\delta.$ This is an immediate consequence of Theorem \ref{thm:closure}, Lemma \ref{lem:Gdelta} and Lemma \ref{lem:probableexp}.
\item $\Regular \intersect \alsure = \Regular \intersect \Det$ (see Proposition \ref{prop:alsureexpress}).
\item The class $\alsure$ is closed under union and intersection.  (see Corollary \ref{cor:alsurecolsure}).\medskip
\end{iteMize}

\noindent We start by characterizing the intersection $\Regular
\intersect \alsure$.  Note that the fact every language $\alsure$ is
contained in $\cG_\delta$ implies immediately that there are
$\omega$-regular languages not in $\alsure$.  That there are
$\omega$-regular languages not in $\alsure$ was also proved in
\cite{Baier:fossacs2008}, although the proof therein is by explicit
construction of an $\omega$-regular language which is then shown to be
not in $\alsure.$ Our topological characterization of the class
$\alsure$ has the advantage that we can characterize the intersection
$\Regular \intersect \alsure$ exactly: $\Regular \intersect \alsure$
is the class of $\omega$-regular languages that can be recognized by a
finite-state deterministic B\"uchi automaton.
%The following is proved in Appendix \ref{app:alsure}. 

\begin{prop}\rm
\label{prop:alsureexpress}
For any {\PBA} $\cB$, $\cL_{=1}(\cB)$ is a $\cG_\delta$ set.
Furthermore, $\Regular \intersect \alsure = \Regular \intersect \Det$
and $\Regular \intersect \Det\subsetneq \alsure \subsetneq
\cG_\delta=\Det.$
\end{prop}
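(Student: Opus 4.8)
The plan is to obtain the statement by assembling results already proved, plus two elementary observations. The first assertion, that $\cL_{=1}(\cB)$ is a $\cG_\delta$ set, is immediate from Lemma~\ref{lem:Gdelta}: since acceptance probabilities never exceed $1$, we have $\cL_{=1}(\cB)=\set{\alpha\in\Sigmaw\st\muacc{\cB}{\alpha}\geq 1}$, which is $\cG_\delta$ by that lemma. In particular $\alsure\subseteq\cG_\delta$, and combining this with the earlier proposition $\cG_\delta=\Det$ gives $\alsure\subseteq\Det$.

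Next I would show $\Regular\intersect\Det\subseteq\alsure$. By Proposition~\ref{prop:reg-det}, a language in $\Regular\intersect\Det$ is recognized by a finite-state \emph{deterministic} B\"uchi automaton $\cA$. Viewing $\cA$ as a {\PBA} (its transition matrices $\delta_a$ are $0/1$ matrices), every input word has exactly one run, so its acceptance probability is either $0$ or $1$; hence $\cL_{=1}(\cA)$ coincides with the classical B\"uchi language $\cL(\cA)$, which shows $\cL(\cA)\in\alsure$. (This inclusion is also one of the results of~\cite{Baier:fossacs2008}.) Now the equality $\Regular\intersect\alsure=\Regular\intersect\Det$ follows: the direction $\supseteq$ is $\Regular\intersect\Det\subseteq\alsure$ together with $\Regular\intersect\Det\subseteq\Regular$, and the direction $\subseteq$ holds because $\alsure\subseteq\Det$ (from the first paragraph), so $\Regular\intersect\alsure\subseteq\Regular\intersect\Det$.

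It remains to establish the two strict containments. For $\Regular\intersect\Det\subsetneq\alsure$: it was shown in~\cite{Baier:lics2005,Baier:fossacs2008} that $\alsure$ contains a language that is not $\omega$-regular, and such a language lies in $\alsure\setminus(\Regular\intersect\Det)$, while the inclusion $\Regular\intersect\Det\subseteq\alsure$ was just proved. For $\alsure\subsetneq\cG_\delta=\Det$: reuse the witness from the proof of Lemma~\ref{lem:probableexp}, namely the singleton language $\set{\alpha}$ with $\alpha=abaabb\cdots a^i b^i a^{i+1}b^{i+1}\cdots$; it is closed, hence lies in $\cF\subseteq\cG_\delta$, but the argument of Lemma~\ref{lem:probableexp} (its closure is $\set\alpha$ itself, which is not $\omega$-regular) shows $\set\alpha\notin\probable$, and $\alsure\subseteq\probable$ gives $\set\alpha\notin\alsure$. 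There is no genuine obstacle here; the only steps needing a sentence of care are the identity $\cL_{=1}(\cA)=\cL(\cA)$ for deterministic $\cA$, which rests on uniqueness of runs, and pulling the aperiodic-singleton separating witness out of the proof of Lemma~\ref{lem:probableexp}.
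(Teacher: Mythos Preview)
Your proposal is correct and follows essentially the same route as the paper: Lemma~\ref{lem:Gdelta} gives $\alsure\subseteq\cG_\delta$, Proposition~\ref{prop:reg-det} plus the observation that a deterministic B\"uchi automaton is a {\PBA} with $0/1$ acceptance yields $\Regular\intersect\Det\subseteq\alsure$, and the equality $\Regular\intersect\alsure=\Regular\intersect\Det$ is then immediate. The only minor difference is in the strictness $\alsure\subsetneq\cG_\delta$: the paper derives it by combining Theorem~\ref{thm:closure} with Lemma~\ref{lem:probableexp} (if $\alsure=\cG_\delta$ then $\probable=\cBool(\alsure)=\cBool(\cG_\delta)$, contradicting Lemma~\ref{lem:probableexp}), whereas you reuse the explicit singleton witness from that lemma's proof together with $\alsure\subseteq\probable$; your route is slightly more direct and avoids invoking Theorem~\ref{thm:closure}.
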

\begin{proof}
Lemma \ref{lem:Gdelta}, Theorem \ref{thm:closure} and Lemma
\ref{lem:probableexp} already imply that $ \alsure \subsetneq
\cG_\delta=\Det.$ We only need to show that $\Regular \intersect
\alsure = \Regular \intersect \Det.$ Since every language in $\alsure$
is deterministic, we get immediately that $\Regular \intersect \alsure
\subseteq \Regular \intersect \Det.$ For the reverse inclusion, note
that every $\omega$-regular, deterministic language is recognizable by
a finite-state deterministic B\"uchi automaton. It is easy to see that
any language recognized by a deterministic finite-state B\"uchi
automaton is in $\alsure$. The result follows.   \end{proof}

A direct consequence of the characterization of the intersection
$\Regular \intersect \Det$ is that the class $\alsure$ is not closed
under complementation as the class of $\omega$-regular languages
recognized by deterministic B\"uchi automata is not closed under
complementation. That the class $\alsure$ is not closed under
complementation is also observed in \cite{Baier:fossacs2008}, and is
proved by constructing an explicit example.  However, even though the
class $\alsure$ is not closed under complementation, we have a
``partial'' complementation operation--- for any $\PBA$ $\cB$ there is
another $\PBA$ $\cB'$ such that $\cL_{>0}(\cB')$ is the complement of
$\cL_{=1}(\cB)$.  This also follows from the following results of
\cite{Baier:fossacs2008} as they showed that $\alsure \subseteq
\probable$ and $\probable$ is closed under complementation. However
our construction has two advantages: 1) it is much simpler than the
one obtained by the constructions in \cite{Baier:fossacs2008}, and 2)
the {\PBA} $\cB'$ belongs to the restricted class of finite
probabilistic monitors {\FPM}s (see Section \ref{sec:prelim} for
definition of {\FPM}s).  This construction plays a critical role in
our complexity analysis of decision problems.

\begin{lem}\rm
\label{lem:comp}
For any {\PBA} $\cB$, there is an {\FPM} {$\cM$} such that 
$\cL_{=1}(\cB)=\Sigmaw\setminus \cL_{>0}(\cM).$ 
\end{lem}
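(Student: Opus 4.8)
The plan is to construct $\cM$ so that it simulates $\cB$ while, at each step, probabilistically deciding to \emph{commit}. Before committing, $\cM$ just mimics $\cB$; once committed, $\cM$ moves to its (absorbing) reject state the first time the simulated run of $\cB$ enters $\Qf$. A run of $\cM$ then avoids the reject state forever precisely when the simulated run of $\cB$ visits $\Qf$ only finitely often, i.e.\ when that run of $\cB$ is rejecting. Consequently $\muacc{\cM}{\word}>0$ should hold iff $\murej{\cB}{\word}>0$, i.e.\ iff $\muacc{\cB}{\word}<1$, which is exactly $\word\notin\cL_{=1}(\cB)$.

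Concretely, writing $\cB=(\Q,\qs,\Qf,\delta)$, I would take $\cM$ to have states $(\Q\times\set{w})\cup((\Q\setminus\Qf)\times\set{c})\cup\set{\qr}$, initial state $(\qs,w)$, reject state $\qr$, and every other state accepting (so $\cM$ is an {\FPM}). From $(q,w)$ on input $a$: with probability $\tfrac12\delta(q,a,q')$ go to $(q',w)$, and with probability $\tfrac12\delta(q,a,q')$ commit, going to $(q',c)$ if $q'\notin\Qf$ and to $\qr$ if $q'\in\Qf$. From $(q,c)$ on $a$: go to $(q',c)$ with probability $\delta(q,a,q')$ if $q'\notin\Qf$, and to $\qr$ with probability $\delta(q,a,q')$ if $q'\in\Qf$. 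Finally $\delta(\qr,a,\qr)=1$. It is routine that each row sums to $1$, so $\cM$ is a well-defined {\FPM}, and since $\qr$ is absorbing and all other states are accepting, $\muacc{\cM}{\word}$ equals the measure of runs of $\cM$ on $\word$ that never reach $\qr$.

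To finish, fix $\word\in\Sigmaw$ and abbreviate $\mu=\mu_{\cB,\word}$. For $n\geq1$ let $R_n=\set{\rho\in\Q^\omega\st\rho[0]=\qs\ \text{and}\ \rho[i]\notin\Qf\ \text{for all}\ i\geq n}$; these sets are measurable and increasing in $n$, and $\union_{n\geq1}R_n$ is, modulo a $\mu$-null set, exactly the set of rejecting runs of $\cB$ on $\word$, so $\sup_n\mu(R_n)=\murej{\cB}{\word}$. Decomposing the set of surviving runs of $\cM$ according to the step $n$ at which it commits (the event ``never commit'' has measure $\prod_{k\geq1}\tfrac12=0$), the defining feature of the construction --- each waiting step contributes a factor $\tfrac12$ while preserving the correct $\cB$-transition probabilities, the commit step contributes one more factor $\tfrac12$, and survival after committing at step $n$ forces the underlying run of $\cB$ to avoid $\Qf$ from step $n$ on --- yields
$$\muacc{\cM}{\word}\;=\;\sum_{n\geq1}\frac{1}{2^n}\,\mu(R_n).$$
Hence $\muacc{\cM}{\word}>0$ iff some $\mu(R_n)>0$ iff $\murej{\cB}{\word}>0$, equivalently $\muacc{\cM}{\word}=0$ iff $\muacc{\cB}{\word}=1$; therefore $\Sigmaw\setminus\cL_{>0}(\cM)=\cL_{=1}(\cB)$. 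The only step requiring genuine care is the justification of the displayed identity --- making precise that the commit coin flips are independent of the simulated run of $\cB$, and that every joint outcome in which that run re-enters $\Qf$ after committing is routed to $\qr$ --- together with the monotone passage to the limit over $n$; everything else (that $\cM$ is an {\FPM}, and degenerate cases such as $\Qf=\Q$) is immediate.
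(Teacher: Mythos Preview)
Your argument is correct. Your construction and the paper's are, however, genuinely different realizations of the same underlying idea---namely, to build an {\FPM} whose probability of avoiding $\qr$ is positive exactly when the simulated run of $\cB$ visits $\Qf$ only finitely often.

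The paper's construction is simpler: it adds a single reject state $\qr$ to $\cB$, leaves transitions from non-final states unchanged, and from each $q\in\Qf$ sends probability $\tfrac12$ to $\qr$ and scales the remaining $\cB$-transitions by $\tfrac12$. Thus every visit of the underlying $\cB$-run to $\Qf$ carries an independent $\tfrac12$ chance of killing the $\cM$-run; the survival probability conditioned on a $\cB$-run with $k$ visits to $\Qf$ is $2^{-k}$, so $\muacc{\cM}{\word}=E_{\rho\sim\mu}[2^{-V(\rho)}]$ (with $2^{-\infty}=0$), which is positive iff $\mu(V<\infty)>0$. Your construction instead doubles the state space into a \emph{waiting} and a \emph{committed} copy, draws an independent geometric commit time $N$, and kills the run at the first visit to $\Qf$ after committing; this yields the explicit identity $\muacc{\cM}{\word}=\sum_{n\ge1}2^{-n}\mu(R_n)$. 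Both computations give the same qualitative conclusion. The paper's version has the advantage of being smaller (one extra state rather than roughly twice as many), which matters for the complexity arguments downstream; your version has the advantage of making the independence structure between the ``coin'' and the simulated run completely explicit, so the displayed identity is immediate once one views the transition out of a waiting state as the product of a fair coin and a $\delta$-step. Either way the lemma follows.
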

\begin{proof}
Let $\cB=(\Q,\qs,\Qf,\delta)$. We construct $\cM$ as follows. First we
pick a new state $\qr$, which will be the reject state of the {\FPM}
$\cM.$ The set of states of $\cM$ would be $\Q\cup\set{\qr}$.  The
initial state of $\cM$ will be $\qs$, the initial state of $\cB$. The
set of final states of $\cM$ will be $\Q$, the set of states of $\cB$.
The transition relation of $\cM$ would be defined as follows. If $q$
is not a final state of $\cB$ then the transition function would be
the same as for $\cB$. If $q$ is an final state of $\cB$ then $\cM$
will transit to the reject state with probability $\half$ and with
probability $\half$ continue as in $\cB.$ Formally, $\cM=(\Q\cup
\set{\qr}, \qs,\Q, \delta_\cM)$ where $\delta_\cM$ is defined as
follows. For each $a\in \Sigma$, $q,q'\in \Q,$
\begin{iteMize}{$\bullet$}
\item $\delta_\cM(q,a,\qr)=\half$ and $\delta_\cM(q,a,q')=\half
 \delta(\q,a,\q')$ if $q\in \Qf, $
\item $\delta_\cM(q,a,\qr)=0$ and
  $\delta_\cM(q,a,q')=\delta(\q,a,\q')$ if $q\in \Q\setminus \Qf, $
\item $\delta_\cM(\qr,a,\qr)=1$.
\end{iteMize}
It is easy to see that a word $\word\in \Sigmaw$ is rejected with
probability $1$ by $\cM$ iff it is accepted with probability $1$ by
$\cB$.  The result now follows.  
\end{proof}
The ``partial'' complementation operation has many consequences. One
 consequence is that the class
$\alsure$ is closed under union. The class $\alsure$ is easily shown
to be closed under intersection. Hence for closure properties,
$\alsure$ behave like deterministic B\"uchi automata.
%The proof of these closure properties has been deferred to Appendix~\ref{app:alsure}. 
Please note that 
closure properties were not studied in \cite{Baier:fossacs2008}.
\begin{cor}
\label{cor:alsurecolsure}
\rm
The class $\alsure$ is closed under finite union and finite intersection.
\end{cor}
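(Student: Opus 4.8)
The plan is to establish the two binary closure properties separately and then obtain the general finite cases by an immediate induction on the number of languages. Closure under intersection I would handle by a direct construction. Given {\PBA}s $\cB_1=(\Q_1,\qs^1,\Qf^1,\delta_1)$ and $\cB_2=(\Q_2,\qs^2,\Qf^2,\delta_2)$ with disjoint state sets, I would adjoin a fresh initial state $\qs$ whose transition distribution on a letter $a$ is $\half\,\delta_1(\qs^1,a,\cdot)$ on $\Q_1$ and $\half\,\delta_2(\qs^2,a,\cdot)$ on $\Q_2$, leave $\delta_1,\delta_2$ unchanged elsewhere, and take $\Qf^1\union\Qf^2$ as the accepting set. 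Intuitively $\cB$ tosses a fair coin while reading its first letter and thereafter simulates $\cB_1$ or $\cB_2$; since that single extra step is irrelevant to the B\"uchi condition, $\muacc\cB\word=\half\muacc{\cB_1}\word+\half\muacc{\cB_2}\word$, which equals $1$ precisely when $\muacc{\cB_1}\word=\muacc{\cB_2}\word=1$, so $\cL_{=1}(\cB)=\cL_{=1}(\cB_1)\intersect\cL_{=1}(\cB_2)$. This is the ``easy'' half.

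For union I would lean on the partial-complementation result, Lemma~\ref{lem:comp}. It supplies {\FPM}s $\cM_1,\cM_2$ with $\cL_{=1}(\cB_i)=\Sigmaw\setminus\cL_{>0}(\cM_i)$, so by De~Morgan $\cL_{=1}(\cB_1)\union\cL_{=1}(\cB_2)=\Sigmaw\setminus(\cL_{>0}(\cM_1)\intersect\cL_{>0}(\cM_2))$; it thus suffices to build a {\PBA} $\cB$ whose almost-sure language is this complement. I would take $\cB$ to be the synchronous product of $\cM_1$ and $\cM_2$ in which the entire ``dead'' region — every pair having at least one coordinate in a reject state — is merged into one fresh absorbing state $\qr$, and I would declare $\set\qr$ to be the unique accepting state of $\cB$ (note $\qr$ stays absorbing). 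Because the two coordinates of the product evolve independently, the probability that a run of $\cB$ avoids $\qr$ forever equals the probability that the first-coordinate run avoids the reject state of $\cM_1$ times the probability that the second avoids that of $\cM_2$, i.e.\ $\muacc{\cM_1}\word\cdot\muacc{\cM_2}\word$ (for an {\FPM}, avoiding the reject state forever is the same event as being accepted); and since a run visits $\qr$ infinitely often exactly when it ever reaches $\qr$, we get $\muacc\cB\word=1-\muacc{\cM_1}\word\cdot\muacc{\cM_2}\word$. This is $1$ iff one of the factors vanishes, i.e.\ iff $\word\notin\cL_{>0}(\cM_1)$ or $\word\notin\cL_{>0}(\cM_2)$, giving $\cL_{=1}(\cB)=\Sigmaw\setminus(\cL_{>0}(\cM_1)\intersect\cL_{>0}(\cM_2))=\cL_{=1}(\cB_1)\union\cL_{=1}(\cB_2)$.

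The one step that needs care — the main obstacle — is this last move for union: one cannot shortcut it by combining closure of $\probable$ under intersection with closure of $\probable$ under complementation, since the complement of an arbitrary $\probable$ language need not lie in $\alsure$ (that would force $\probable\subseteq\alsure$, contradicting $\alsure\subsetneq\probable$ from Proposition~\ref{prop:alsureexpress}). The construction genuinely exploits the monitor structure: reaching the unique absorbing reject state is a co-safety event whose probability is $1-\muacc{\cM}{\word}$, and it is precisely this feature that lets one ``accept the reject state'' and land back inside $\alsure$. Everything else — the fair-coin identity for the acceptance probabilities, the independence/product-measure factorization for the monitor product, and the passage from the binary to the finitary statement — is routine.
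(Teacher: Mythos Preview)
Your proposal is correct and follows essentially the same approach as the paper: the fair-coin construction for intersection is identical, and for union both you and the paper pass through Lemma~\ref{lem:comp} to {\FPM}s, form a product monitor whose acceptance probability is $\muacc{\cM_1}\word\cdot\muacc{\cM_2}\word$, and then flip its reject state into the sole accepting state. The only cosmetic difference is that the paper imports the product-{\FPM} fact from~\cite{rch:sis:mv:08a} while you spell it out directly as the synchronous product with the dead region collapsed.
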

\begin{proof}
Let $\cB_1 = (\Q^1,\qs^1,\Qf^1,\delta^1)$ and $\cB_2 =
(\Q^2,\qs^2,\Qf^2,\delta^2)$ be two PBAs, and we assume without loss
of generality that $\Q^1 \cap \Q^2 = \emptyset$. We will present
construction of PBAs that recognize the union and intersection of
these languages under the almost sure semantics.

We begin by first considering the construction for union. Now by
Lemma~\ref{lem:comp}, we know that there are FPMs $\cM_1$ and $\cM_2$
%$\cM_1 = (\Q^1 \cup\{\qr^1\}, \qs^1,\Q^1,\delta_{\cM_1})$ and $\cM_2 = (\Q^2 \cup
%\{\qr^2\}, \qs^2,\Q^2,\delta_{\cM_2})$ 
such that $\cL_{=1}(\cB_i) = \Sigmaw \setminus \cL_{>0}(\cM_i)$. Now,
we had shown in \cite{rch:sis:mv:08a} that there is a FPM
$\cM=(\Q,\qs,\Qf,\delta)$ such that for any word $\alpha$,
$\muacc\cM\alpha=\muacc{\cM_1}\alpha\times\muacc{\cM_2}\alpha.$ It is
easy to see that $\cL_{>0}(\cM)=\cL_{>0}(\cM_1) \cap \cL_{>0}(\cM_2).$

Now, the FPM $\cM$ can be easily ``complemented''. If $\qr$ is the
reject state of $\cM$, then consider the {\PBA} $\overline\cM =
(\Q,\qs,\set{\qr},\delta)$; clearly $\cL_{=1}(\overline\cM) =
\Sigmaw\setminus \cL_{>0}(\cM)$. Thus, by DeMorgan Laws,
$\cL_{=1}(\overline\cM) = \cL_{=1}(\cB_1) \cup \cL_{=1}(\cB_2)$.

The PBA recognizing the intersection of the languages recognized by
$\cB_1$ and $\cB_2$ with respect to almost-sure semantics does the
following: on an input $\alpha$, with probability $\frac{1}{2}$ it
runs $\cB_1$ on $\alpha$, and with probability $\frac{1}{2}$ it runs
$\cB_2$. Clearly, such a machine will accept (with respect to
almost-sure semantics) iff both $\cB_1$ and $\cB_2$ accept. Formally,
$\cB = (\Q,\qs,\Qf,\delta)$ is given by
\begin{iteMize}{$\bullet$}
\item $\Q = \Q^1 \cup \Q^2 \cup \{\qs\}$ where $\qs \not\in \Q^1 \cup
  \Q^2$
\item $\Qf = \Qf^1 \cup \Qf^2$
\item The transition relation $\delta$ is defined as follows
\begin{iteMize}{$-$}
\item For $q \in \Q^1$, $\delta(\qs,a,q) =
  \frac{1}{2}\delta^1(\qs^1,a,q)$, and for $q \in \Q^2$,
  $\delta(\qs,a,q) = \frac{1}{2}\delta^2(\qs^2,a,q)$
\item For $q,q' \in \Q^1$, $\delta(q,a,q') = \delta^1(q,a,q')$ and for
  $q,q' \in \Q^2$, $\delta(q,a,q') = \delta^2(q,a,q').$ \qedhere
\end{iteMize}
\end{iteMize}
\end{proof}

\subsection{Decision problems}
For the rest of this section, we shall focus our attention on decision
problems for almost sure semantics for {\RPBA}s. Results of this section are summarized in the
 second row of Figure \ref{fig:compresults}  on page~\pageref{fig:compresults} and proved in Theorem \ref{thm:alsurethm}
 and Theorem \ref{thm:alcontundec}.
The problem of checking whether $\cL_{=1}(\cB)=\emptyset$ for a given {\RPBA} $\cB$ was shown to
be decidable in $\exptime$ in \cite{Baier:fossacs2008}, where it was also conjectured to be 
$\exptime$-complete. The decidability of the
universality problem was left open in \cite{Baier:fossacs2008}.  We
can leverage our ``partial'' complementation operation to show that a)
the emptiness problem is in fact \pspace-complete, thus tightening the
bound in \cite{Baier:fossacs2008} and b) the universality
problem is also   \pspace-complete.

\begin{thm}\rm 
\label{thm:alsurethm}
Given a {\RPBA} $\cB$, the problem of deciding whether
$\cL_{=1}(\cB)=\emptyset$ is \pspace-complete. The problem of deciding
whether $\cL_{=1}(\cB)=\Sigmaw$ is also \pspace-complete.
\end{thm}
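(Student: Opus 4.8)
The plan is to reduce both problems to questions about the {\FPM} $\cM$ that ``partially complements'' $\cB$, for which the tools of \cite{rch:sis:mv:08,rch:sis:mv:08a} give tight bounds. By Lemma \ref{lem:comp}, given the {\RPBA} $\cB=(\Q,\qs,\Qf,\delta)$ we can construct in polynomial time a rational {\FPM} $\cM$ (a {\RFPM}) with $|\Q|+1$ states such that $\cL_{=1}(\cB)=\Sigmaw\setminus\cL_{>0}(\cM)$. Hence $\cL_{=1}(\cB)=\emptyset$ iff $\cL_{>0}(\cM)=\Sigmaw$, and $\cL_{=1}(\cB)=\Sigmaw$ iff $\cL_{>0}(\cM)=\emptyset$. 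So the emptiness of $\cL_{=1}(\cB)$ becomes the \emph{universality} of the probable language of an {\FPM}, and the universality of $\cL_{=1}(\cB)$ becomes the \emph{emptiness} of the probable language of an {\FPM}. First I would invoke the results of \cite{rch:sis:mv:08a}: (i) for a given {\RFPM} $\cM$, checking whether $\cL_{>0}(\cM)=\emptyset$ is decidable, and in fact in \pspace, since $\cL_{>0}(\cM)\ne\emptyset$ iff there is a set of final states $C$ and words $u,v$ with $\delta_u(\qs,C)>0$ and, for every $q\in C$, the probability of staying inside $C$ on $v$ from $q$ is $1$; this is a reachability condition over the subset construction on $\Q$, solvable in polynomial space (indeed the key combinatorial fact is that $C$ and the ``always returns to $C$ with probability $1$'' witness can be found by analysing the finite subset automaton, whose transitions only depend on the \emph{support} of $\delta$, so the numeric values play no role and a polynomial-space search suffices); and (ii) for a given {\RFPM} $\cM$, checking $\cL_{>0}(\cM)=\Sigmaw$ is in \pspace by the result of \cite{rch:sis:mv:08a} that $\cL_{>0}(\cM)$ is universal iff its complement contains no ultimately periodic word, which reduces to checking a \pspace property of the product/subset construction. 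Combining, both $\cL_{=1}(\cB)=\emptyset$ and $\cL_{=1}(\cB)=\Sigmaw$ are in \pspace.

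For the lower bounds I would give explicit polynomial-time reductions from a \pspace-complete problem. For the universality $\cL_{=1}(\cB)=\Sigmaw$, I would reduce from the universality problem for the probable language of an {\FPM} (shown \pspace-hard in \cite{rch:sis:mv:08a}), or directly from the emptiness problem for (non-probabilistic) nondeterministic finite automata on finite words intersected with a product of such, i.e.\ from the \pspace-complete problem of deciding whether $\bigcap_i \cL(\cA_i)=\emptyset$ for a list of DFAs; the standard trick is that a {\PBA} can run each $\cA_i$ with probability $1/n$ and accept with probability $1$ exactly on words all the $\cA_i$ accept in the appropriate $\omega$-sense, so universality of $\cL_{=1}(\cB)$ encodes that an intersection of regular safety-like conditions is everything. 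For the emptiness $\cL_{=1}(\cB)=\emptyset$, I would dualize: via Lemma \ref{lem:comp} this is universality of $\cL_{>0}(\cM)$ for an {\FPM}, and the \pspace-hardness of the latter was established in \cite{rch:sis:mv:08a}, so composing that hardness reduction with (the polynomial, and \emph{invertible in the relevant direction}) construction of Lemma \ref{lem:comp} gives \pspace-hardness of emptiness of $\cL_{=1}(\cB)$ as well. (Concretely: take an {\RFPM} $\cM$ for which universality of $\cL_{>0}(\cM)$ is hard, form the {\PBA} $\overline\cM$ as in the proof of Corollary \ref{cor:alsurecolsure} with final set $\{\qr\}$, so that $\cL_{=1}(\overline\cM)=\Sigmaw\setminus\cL_{>0}(\cM)$, whence $\cL_{=1}(\overline\cM)=\emptyset$ iff $\cL_{>0}(\cM)=\Sigmaw$.)

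The main obstacle is the \pspace \emph{upper} bound for universality of the probable language of an {\FPM}, i.e.\ step (ii): unlike emptiness, which is a pure support/reachability question, universality genuinely uses the numeric transition probabilities, and one must argue that the ``no ultimately periodic word in the complement'' characterization from \cite{rch:sis:mv:08a} can be checked in polynomial space — this requires bounding the period/length of a candidate witness word polynomially (or reducing to a polynomial-size system of reachability/positivity constraints over the subset construction) and then verifying in \pspace that along the corresponding lasso the acceptance probability stays below $1$; care is needed because a naive search over ultimately periodic words is unbounded. I would handle this by citing the precise \pspace bound already proved in \cite{rch:sis:mv:08a} for {\RFPM}s and noting that Lemma \ref{lem:comp} produces exactly such an {\RFPM} of polynomial size, so no new analysis of probabilities is needed here. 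The rest — the hardness reductions and the easy \pspace routine for {\FPM} emptiness — is then routine.
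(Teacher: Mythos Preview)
Your proposal follows essentially the same route as the paper: use Lemma~\ref{lem:comp} to pass from the {\RPBA} $\cB$ to an {\RFPM} $\cM$ with $\cL_{=1}(\cB)=\Sigmaw\setminus\cL_{>0}(\cM)$, then invoke the {\pspace} bounds for emptiness and universality of $\cL_{>0}(\cdot)$ on {\RFPM}s from \cite{rch:sis:mv:08,rch:sis:mv:08a} for the upper bounds, and use the inverse construction $\overline\cM$ (making $\qr$ the sole final state) for the lower bounds. The paper's proof is exactly this, stated in two short paragraphs without your additional commentary on \emph{why} the {\FPM} problems are in {\pspace}; it simply cites the prior results.

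There is, however, a small mix-up in your lower-bound paragraph. For the hardness of universality of $\cL_{=1}(\cB)$ you propose reducing from \emph{universality} of $\cL_{>0}(\cM)$, but the $\overline\cM$ construction gives $\cL_{=1}(\overline\cM)=\Sigmaw\setminus\cL_{>0}(\cM)$, so $\cL_{=1}(\overline\cM)=\Sigmaw$ iff $\cL_{>0}(\cM)=\emptyset$: the correct source problem is \emph{emptiness} of $\cL_{>0}(\cM)$ (also {\pspace}-hard by \cite{rch:sis:mv:08,rch:sis:mv:08a}), which is precisely what the paper uses. Your alternative DFA-intersection sketch is similarly miswired: running each $\cA_i$ with probability $\tfrac{1}{n}$ gives $\cL_{=1}(\cB)=\bigcap_i\cL(\cA_i)$, so it is \emph{emptiness} of $\cL_{=1}(\cB)$, not universality, that inherits the {\pspace}-hardness of DFA intersection emptiness. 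Once you swap the pairing (emptiness of $\cL_{>0}(\cM)\leadsto$ universality of $\cL_{=1}(\overline\cM)$, and universality of $\cL_{>0}(\cM)\leadsto$ emptiness of $\cL_{=1}(\overline\cM)$), your argument coincides with the paper's.
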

\begin{proof}
({\bf Upper bounds}.) We first show the upper bounds.  The proof of
  Lemma \ref{lem:comp} shows that for any {\RPBA} $\cB$, there is a
  {\RFPM} $\cM$ constructed in polynomial time such that
  $\cL_{=1}(\cB)=\Sigmaw\setminus\cL_{>0}(\cM).$ $\cL_{=1}(\cB)$ is
  empty (universal) iff $\cL_{>0}(\cM)$ is universal (empty
  respectively).  Now, we had shown in
  \cite{rch:sis:mv:08,rch:sis:mv:08a} that given a {\RFPM} $\cM$, the
  problems of checking emptiness and universality of $\cL_{>0}(\cM)$
  are in {\pspace}, thus giving us the desired upper bounds.
\hspace*{0.1cm}\\
\noindent
({\bf Lower bounds}.)  We had shown in
\cite{rch:sis:mv:08,rch:sis:mv:08a} that given a {\RFPM} $\cM$, the
problems of deciding the emptiness and universality of $\cL_{>0}(\cM)$
are $\pspace$-hard respectively. Given a {\RFPM}
$\cM=(\Q,\qs,\Q_0,\delta)$ with $\qr$ as the absorbing reject state,
consider the {\PBA} $\overline\cM=(\Q,\qs,\set\qr,\delta)$ obtained by
considering the unique reject state of $\cM$ as the only final state
of $\overline\cM$. Clearly we have that
$\cL_{>0}(\cM)=\Sigmaw\setminus \cL_{=1}(\overline\cM)$. Thus
$\cL_{>0}(\cM)$ is empty (universal) iff $\cL_{=1}(\overline\cM) $ is
universal (empty respectively). The result now follows.  
\end{proof}

Even though the problems of checking emptiness and universality of
almost-sure semantics of a {\RPBA } are decidable, the problem of
deciding language containment under almost-sure semantics turns out to
be undecidable, and is indeed as hard as the problem of deciding
language containment under probable semantics (or, equivalently,
checking emptiness under probable semantics).
\begin{thm}\rm
\label{thm:alcontundec}
Given {\RPBA}s, $\cB_1$ and $\cB_2$, the problem of deciding whether
$\cL_{=1}(\cB_1)\subseteq \cL_{=1}(\cB_2)$ is $\mathbf{\Sigma}^0_2$-complete.
\end{thm}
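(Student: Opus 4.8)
\medskip
\noindent\textbf{Proof plan.}
The statement splits into membership in $\mathbf{\Sigma}^0_2$ and $\mathbf{\Sigma}^0_2$-hardness. The plan is to get the upper bound almost for free from the partial complementation of Lemma~\ref{lem:comp} together with the results already proved for probable semantics, and to get hardness by reducing from the emptiness problem for probable semantics (shown $\mathbf{\Sigma}^0_2$-hard in Lemma~\ref{lem:probhard}), exploiting the explicit Boolean decomposition that appears inside the proof of Theorem~\ref{thm:closure}.

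\medskip
\noindent\emph{Upper bound.} First I would apply Lemma~\ref{lem:comp} to $\cB_1$ and $\cB_2$, obtaining {\RFPM}s $\cM_1,\cM_2$ (computable, rationality preserved) with $\cL_{=1}(\cB_i)=\Sigmaw\setminus\cL_{>0}(\cM_i)$. Complementing both sides, $\cL_{=1}(\cB_1)\subseteq\cL_{=1}(\cB_2)$ is equivalent to $\cL_{>0}(\cM_2)\subseteq\cL_{>0}(\cM_1)$; since $\cM_1,\cM_2$ are {\RPBA}s, Theorem~\ref{thm:pbaanalytical} places this containment in $\mathbf{\Sigma}^0_2$, so the almost-sure containment problem is in $\mathbf{\Sigma}^0_2$.

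\medskip
\noindent\emph{Hardness.} Here I would reduce from emptiness of $\cL_{>0}(\cdot)$ for {\RPBA}s. Given an {\RPBA} $\cB$, invoke the construction inside the proof of Theorem~\ref{thm:closure}: there is an {\RPRA} $\cR=(\Q,\qs,F,\delta)$, $F=\{(B_1,G_1),\dots,(B_n,G_n)\}$, with associated {\RPBA}s $\cR_\I=(\Q,\qs,\Good_\I,\delta)$ and $\cR^j_\I=(\Q,\qs,\Bad_{\I,j},\delta)$ such that
\[
\cL_{>0}(\cB)=\bigcup_{\I\subseteq\{1,\dots,n\},\, j\in\I}\cL_{=1}(\cR_\I)\cap(\Sigmaw\setminus\cL_{=1}(\cR^j_\I)).
\]
Hence $\cL_{>0}(\cB)=\emptyset$ iff $\cL_{=1}(\cR_\I)\subseteq\cL_{=1}(\cR^j_\I)$ for every pair $(\I,j)$ with $j\in\I$, i.e.\ a \emph{finite} conjunction of almost-sure containments. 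I would then fuse this conjunction into a single instance by tagging: introduce one fresh letter $t_{\I,j}$ per such pair, and build an {\RPBA} $\cB_1$ over $\Sigma$ extended by these letters which, on an input beginning with $t_{\I,j}$, simulates $\cR_\I$ on the remaining suffix, using a private copy of $\Q$ for each pair so that the correct accepting set $\Good_\I$ is used, and which sends every other input (wrong first letter, or a tag letter read later) into a non-accepting absorbing sink; then $\cL_{=1}(\cB_1)=\{\, t_{\I,j}\word \st j\in\I,\ \word\in\cL_{=1}(\cR_\I)\,\}$. Define $\cB_2$ analogously from $\cR^j_\I$, so $\cL_{=1}(\cB_2)=\{\, t_{\I,j}\word \st j\in\I,\ \word\in\cL_{=1}(\cR^j_\I)\,\}$. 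Both machines are {\RPBA}s and are computable from $\cB$ (the blow-up is exponential in $n$, harmless for a recursive reduction). Since every word of $\cL_{=1}(\cB_1)$ carries a tag, $\cL_{=1}(\cB_1)\subseteq\cL_{=1}(\cB_2)$ holds iff $\cL_{=1}(\cR_\I)\subseteq\cL_{=1}(\cR^j_\I)$ for all $(\I,j)$, iff $\cL_{>0}(\cB)=\emptyset$. This is the desired reduction, and with the upper bound it yields $\mathbf{\Sigma}^0_2$-completeness.

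\medskip
\noindent\emph{Main obstacle.} The upper bound is routine. The real content is in the hardness direction, and the key point is that one must \emph{not} look for a direct reduction from a single containment: making either $\cB_1$ or $\cB_2$ trivial collapses the question to emptiness or universality of almost-sure semantics, both decidable (Theorem~\ref{thm:alsurethm}), so the $\mathbf{\Sigma}^0_2$ difficulty has to come from genuine interplay of the two machines. The device that supplies this is the tagging trick, which turns the finite conjunction of almost-sure containments produced by the $\probable=\cBool(\alsure)$ decomposition of Theorem~\ref{thm:closure} into one containment instance; the only slightly delicate bookkeeping is checking that the tagged automata realize exactly the intended languages, i.e.\ that malformed prefixes and interior tag letters are rendered harmless by the sink.
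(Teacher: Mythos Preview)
Your proof is correct and follows essentially the same route as the paper: the upper bound via Lemma~\ref{lem:comp} and Theorem~\ref{thm:pbaanalytical} is identical, and the lower bound reduces from emptiness of $\cL_{>0}(\cdot)$ through the Boolean decomposition of Theorem~\ref{thm:closure}. The one difference is that the paper stops after observing that $\cL_{>0}(\cB)=\emptyset$ is equivalent to the \emph{finite conjunction} $\bigwedge_{i}\bigl(\cL_{=1}(\cB^+_i)\subseteq\cL_{=1}(\cB^-_i)\bigr)$ and declares hardness, whereas you go one step further and use a tagging construction to fold this conjunction into a single containment instance $\cL_{=1}(\cB_1)\subseteq\cL_{=1}(\cB_2)$; since the paper's notion of completeness is many-one (footnote on p.~3), your extra step is the honest way to finish the reduction, and it is carried out correctly.
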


\begin{proof}
Observe first that given {\RPBA}s $\cB_1$ and $\cB_2$, there are
(constructible) {\RFPM}s $\cM_1$ and $\cM_2$ such that
$\cL_{=1}(\cB_i)=\Sigmaw\setminus \cL_{>0}(\cM_i)$ for $i=1,2$ (see
Lemma \ref{lem:comp}).  Thus, $\cL_{=1}(\cB_1)\subseteq
\cL_{=1}(\cB_2)$ iff $\cL_{>0}(\cM_2)\subseteq \cL_{=1}(\cM_1)$. The
upper bound then follows from the upper bound of the containment of
{\PBA}s under probable semantics.

The lower bound is shown by a reduction from emptiness-checking of
probable semantics. Recall from the proof of the fact that
$\probable=\cBool({\alsure})$ (Theorem \ref{thm:closure}) that given a
{\RPBA} $\cB$, there are {\RPBA}s $\cB^+_1, \cB^+_2,\ldots \cB^+_m$
and $\cB^-_1, \cB^-_2\ldots\cB^-_m $ such that
$$ \cL_{>0}(\cB)=\bigcup_{1\leq i\leq m} \cL_{=1}(\cB^+_i) \intersect
(\Sigmaw \setminus \cL_{=1}(\cB^-_i)).$$ Furthermore, the construction in the proof of Theorem
\ref{thm:closure} 
and results of
\cite{GroesserThesis} (Proposition \ref{prop:buchi-rabin} and the fact that 
complementation of probable semantics is a recursive operation for {\RPBA}s) implies that $\cB^+_i$ and $\cB^-_i$ are
constructible. Now, $\cL_{>0}(\cB)=\emptyset$ iff for each $i$,
$\cL_{=1}(\cB^+_i) \intersect (\Sigmaw \setminus
\cL_{=1}(\cB^-_i))=\emptyset$.  The lower bound now follows from the
observation that $\cL_{=1}(\cB^+_i) \intersect (\Sigmaw \setminus
\cL_{=1}(\cB^-_i))=\emptyset$ iff $\cL_{=1}(\cB^+_i) \subseteq
\cL_{=1}(\cB^-_i).$
\end{proof}

%%%%%%%%%%%%%%%%%%%%%%%%%%%%%%%%%%%%%%%%%%%%%%%%%%%%%%%%%%%%%%%%%%%%%%
%%%%%%%%%%%%%%%%%%%%%%%%%%%%%%%%%%%%%%%%%%%%%%%%%%%%%%%%%%%%%%%%%%%%%%

\section{Hierarchical PBAs}
\label{sec:hier}

We now identify a simple syntactic restriction on {\PBA}s which--
\begin{iteMize}{$\bullet$}
\item under
probable semantics coincide exactly with $\omega$-regular languages, and
\item  under almost-sure semantics coincide exactly with $\omega$-regular
deterministic languages.

 \end{iteMize}
 We will also establish complexity of decision problems
of emptiness and universality for the case when transition probabilities are given 
as rational numbers. The complexity results are summarized in Figure \ref{fig:hcompresults}.

  \begin{figure*}[t]
\footnotesize
\begin{center}
\begin{tabular}{| l | l | l | }

\hline
  & \multicolumn{1}{c|}{{Emptiness}} & \multicolumn{1}{c|}{{Universality}}  \\

  \hline
  
$\probableh$  &$\nl$-complete & $\pspace$-complete  \\

\hline
$\alsureh$ &$\pspace$-complete & $\nl$-complete  \\
%$$ & ploynom& 
\hline
  \end{tabular}
\end{center}  

\caption{Complexity of decision problems for {\RHPBA}s.}
\label{fig:hcompresults}
\end{figure*}

Intuitively, a hierarchical {\PBA} is a {\PBA} such that the set of
its states can be stratified into (totally) ordered levels.  From a
state $q$, for each letter $a$, the machine can transition with
non-zero probability to at most one state in the same level as $q$,
and all other probabilistic transitions go to states that belong to a
higher level. Formally,
\begin{defi}\rm
Given a natural number $k$, a {\PBA} $\cB=(\Q,\qs,\Q,\delta)$ over
an alphabet $\Sigma$ is said to be a {\it $k$-level hierarchical PBA
  ($k$-{\HPBA}}) if there is a function $\rk:\Q\to \set{0,1,\ldots, k}$
such that the following holds.
\begin{quote}
Given $j\in \set{0,1,\ldots,k}$, let $\Q_j=\set{\q\in \Q\st
  \rk(\Q)=j}.$ For every $q\in \Q$ and $a\in \Sigma$, if $j_0=\rk(q)$
then $\post(q,a)\subseteq \union_{j_0\leq \ell\leq k} \Q_\ell$ and
$|\post(q,a)\intersect\Q_{j_0}| \leq 1$.\medskip
\end{quote}

\noindent The function $\rk$ is said to be a {\it compatible ranking function}
of $\cB$ and for $q\in\Q $ the natural number $\rk(q)$ is said to be
the {\it rank} or {\it level} of $q.$ $\cB$ is said to be a {\it
  hierarchical {\PBA} ({\HPBA})} if $\cB$ is $k$-hierarchical for some
$k.$ If $\cB$ is also a {\RPBA}, we say that $\cB$ is a rational
hierarchical {\PBA} ({\RHPBA}). 
%If $\cB$ is a {\FPM} then $\cB$ is
%said to be a hierarchical probabilistic monitor ({\HPM}).  If $\cB$ is
%both $\RPBA$ and $\RHPBA$ then $\cB$ is said to be a {\RHPM}.
\end{defi}
\begin{comment}
We can define classes analogous to $\probable$, $\alsure$, $\strict$ and $\nstrict$; and we shall call them $\probableh$, $\alsureh$, $\stricth$ and $\nstricth$ respectively.
\end{comment}
We can define classes analogous to $\probable$ and $\alsure$; and we
shall call them $\probableh$ and $\alsureh$ respectively.  Before we
proceed to discuss the probable and almost-sure semantics for
{\HPBA}s, we point out two interesting facts about hierarchical
{\HPBA}s.  First is that for the class of $\omega$-regular
deterministic languages, {\HPBA}s like non-deterministic B\"uchi
automata can be exponentially more succinct.
\begin{prop}\rm
\label{prop:succinct}
Let $\Sigma=\set{a,b,c}.$ For each $n\in \Nats$, there is a
$\omega$-regular deterministic language $\sL_n\subseteq \Sigmaw$ such
that i) any deterministic B\"uchi automata for $\sL_n$ has at least
$O(2^n)$ number of states, and ii) there are {\HPBA}s $\cB_n$ s.t.
$\cB_n$ has $O(n)$ number of states and $\sL_n=\cL_{=1} (\cB_n).$
\end{prop}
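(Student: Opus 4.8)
The plan is to exhibit an explicit family. Fix $n$, take $\Sigma=\set{a,b,c}$ with $c$ serving as a block separator, and set
$$\sL_n \;=\; \set{(cu)^\omega \st u\in\set{a,b}^n},$$
the set of words $c\,u\,c\,u\,c\,u\cdots$ in which a single length-$n$ word $u$ over $\set{a,b}$ is repeated. As a finite set of ultimately periodic words, $\sL_n$ is closed and $\omega$-regular, so $\sL_n\in\Regular\intersect\cF\subseteq\Regular\intersect\cG_\delta=\Regular\intersect\Det$; in particular it is recognized by some finite-state deterministic B\"uchi automaton, and (i)--(ii) are the content of the claim.

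For the lower bound (i) I would use a standard indistinguishability argument. Suppose a deterministic B\"uchi automaton $\cA$ with fewer than $2^n$ states recognized $\sL_n$, and for $u\in\set{a,b}^n$ let $q_u$ be the state $\cA$ reaches after reading the prefix $cu$. Since there are $2^n$ words $u$ but fewer states, $q_u=q_{u'}$ for some $u\ne u'$. The accepting run of $\cA$ on $(cu)^\omega\in\sL_n$ passes through $q_u$ after $cu$ and then accepts the suffix $(cu)^\omega$; hence from $q_{u'}=q_u$ the automaton also accepts $(cu)^\omega$, so $\cA$ accepts $cu'(cu)^\omega = cu'cucuc\cdots$, which is not in $\sL_n$ (its first block is $u'$, its second is $u\ne u'$). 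This contradiction shows every deterministic B\"uchi automaton for $\sL_n$ has at least $2^n$ states.

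For (ii) I would construct an {\HPBA} $\cB_n$ with $O(n)$ states and $\cL_{=1}(\cB_n)=\sL_n$ as follows. On the first symbol $\cB_n$ tosses a fair coin. With probability $\tfrac12$ it runs a purely deterministic \emph{format branch} that checks $\word\in(c\set{a,b}^n)^\omega$: a cycle of $n{+}1$ states tracking the position inside the current block, one absorbing reject state, and every non-reject state accepting. With probability $\tfrac12$ it runs a \emph{content branch}: it reads the first block through states $A_1,\dots,A_n$ (rejecting if it meets a $c$ too early), and at position $\ell$, on the just-read symbol $x\in\set{a,b}$, it either \emph{commits} --- remembering only the bit $x$ and jumping to a countdown state $(n{+}1,x)$ --- or passes to $A_{\ell+1}$; at $A_n$ it must commit. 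After committing it merely counts down modulo $n{+}1$ through states $(m,x)$, $1\le m\le n{+}1$, reading one input symbol per step, and at $m=1$ it compares the current symbol with $x$, rejecting on a mismatch and otherwise resetting to $(n{+}1,x)$. The crucial point is that, having committed while \emph{already synchronised} with the block structure, a period-$(n{+}1)$ countdown automatically revisits position $\ell$ in every later block, so the states $(m,x)$ can be \emph{shared by all choices of $\ell$}. Since every state but the reject sink is accepting, $\muacc{\cB_n}\word$ is the probability of never reaching that sink: it is $1$ when $\word\in\sL_n$ (the format holds and every block equals $u_1$, so neither branch rejects), and is $<1$ when $\word\notin\sL_n$ (either $\word$ is malformed, caught with probability $1$ inside the deterministic format branch, which is entered with probability $\tfrac12$; or $\word\in(c\set{a,b}^n)^\omega$ but some block $u_j$ disagrees with $u_1$ at a position $i^\ast$, and committing at $i^\ast$ --- an event of positive probability --- then triggers a reject). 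Thus $\cL_{=1}(\cB_n)=\sL_n$; a ranking function witnessing hierarchality is obtained by placing the initial state lowest, then the format branch, then the $A_\ell$'s, then the $(m,x)$'s, then the reject state, so that each state has at most one same-level successor on each letter; and the state count is $1+(n{+}1)+n+2(n{+}1)+1=O(n)$.

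The one genuine difficulty is keeping the {\HPBA} at $O(n)$ states in (ii). The obvious design --- spend $\lceil\log n\rceil$ coin tosses to pick an index $i$ and then verify that the symbol at block-position $i$ is constant --- forces the machine to remember $i$ while simultaneously tracking the current offset inside a block, and a state recording this pair ranges over a product of two $\Theta(n)$-size sets, giving $\Theta(n^2)$ states. Committing \emph{during} the scan of the first block, and then replacing ``remember $i$'' by a plain modulo-$(n{+}1)$ countdown that is reused for all $i$, is exactly what collapses the construction to $O(n)$ states; the separate deterministic format branch is what frees the content branch from having to locate block boundaries on its own.
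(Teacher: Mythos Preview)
Your argument is correct, but it takes a considerably more involved route than the paper.  The paper chooses the classical safety language
\[
\sL_n \;=\; \Sigmaw\setminus(\Sigma^*a\Sigma^n\{a,b\}\Sigmaw)
\]
(``every $a$ is followed by a $c$ exactly $n$ steps later'').  The $2^n$ lower bound for deterministic B\"uchi automata is then the textbook one (any deterministic machine must remember which of the last $n$ positions carried an $a$), and the {\HPBA} is a two-line {\FPM}: scan the input, and upon each $a$ toss a fair coin; on heads, continue scanning, on tails, count $n$ steps and reject if the symbol seen is not $c$.  That is all---no separate format branch, no commit mechanism, and no sharing trick is needed.

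Your choice $\sL_n=\{(cu)^\omega:u\in\{a,b\}^n\}$ also works, and your lower-bound and {\HPBA} constructions are both sound.  The lower-bound argument is the standard fooling-set one.  For the upper bound, your key observation---commit to a position while scanning the first block and then use a single modulo-$(n{+}1)$ countdown shared by all positions---is exactly what prevents the $\Theta(n^2)$ blow-up, and your ranking function does witness the hierarchical structure.  The price you pay is a noticeably more delicate construction (two branches, a commit protocol, and a correctness argument that splits on format versus content errors), whereas the paper's example keeps everything to a single countdown triggered by $a$.  What your approach buys is a language that is \emph{finite} (hence trivially closed and $\omega$-regular), so the membership $\sL_n\in\Regular\cap\Det$ requires no additional argument; the paper instead relies on the explicit deterministic B\"uchi automaton for that.
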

\begin{proof}
Given $n\in \Nats$, let $\sL_n$ be the safety language in which for
every $a$ there is a $c$ after exactly $n$-steps.  In other words,
$\sL_n=\Sigmaw\setminus (\Sigma^*a\Sigma^n\set{a,b}\Sigmaw).$ This
could model, for instance, the property ``every request $a$ is
answered after exactly $n$-steps''. We can build a deterministic
B\"uchi automaton for $\sL_n$ and the number of states of such a
automaton is $O(2^n).$
%For $\sL_n$ we could also construct a ``non-deterministic'' B\"uchi automaton $\cA_n$ that recognizes the complement 
%$\Sigmw\setminus \SL_n$--- the non-deterministic automaton scans all inputs and upon 
%encountering $a$ non-deterministically decides to check if there is a $c$ after exactly $n$-steps. 
%This non-deterministic B\"'uchi automaton has $O(n)$ number of states. 
We could build a {\HPBA} $\cB_n$ with $O(n)$ state such that
$\sL_n=\cL_{= 1}(\cB_n)$. The {\HPBA} $\cB_n$ will be an {\FPM} also.
The construction of $\cB_n$ is as follows--- $\cB_n$ scans the input
and upon encountering $a$, $\cB_n$ decides with probability $\half$ to
check if there is a $c$ after $n$ steps and with probability $\half$,
$\cB_n$ decides to continue scanning the rest of the input. In the
former case, if the check $\cB_n$ reveals an error then $\cB_n$
rejects the input; otherwise $\cB_n$ accepts the input.  
\end{proof}

The second thing is that even though {\HPBA}s yield only
$\omega$-regular languages under both almost-sure semantics and
probable semantics, we can recognize non-$\omega$-regular languages
with cutpoints.
\begin{prop}\rm
\label{prop:nonregular}
There is a HPBA $\cB$ such that both $\cL_{\geq \half}(\cB)$ and
$\cL_{>\half}(\cB)$ are not $\omega$-regular.
\end{prop}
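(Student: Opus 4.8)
The plan is to exhibit one concrete HPBA, adapting the classical constructions (going back to~\cite{Rabin}, with the probabilistic‑verifier refinements of~\cite{condon-lipton}) by which probabilistic finite automata attain non‑regular cut‑point languages. Since the other results of this section will show that $\cL_{>0}(\cB)$ and $\cL_{=1}(\cB)$ are $\omega$‑regular for every HPBA $\cB$, the non‑regularity must sit entirely at the threshold $\half$. I would therefore aim for an HPBA $\cB$ over $\Sigma\supseteq\set{a,b,c}$ that (i) uses a deterministic skeleton in its lowest rank to send every word not of the shape $\alpha=a^{n_1}b\,a^{n_2}b\,a^{n_3}b\cdots$ (infinitely many $b$'s, no $c$) into an absorbing non‑final sink, and (ii) on each such well‑formed $\alpha$, with block‑length sequence $(n_i)_{i\ge1}$, accepts with probability
\[
\muacc{\cB}{\alpha}\;=\;\tfrac{1}{2}+\tfrac{1}{4}\sum_{i\ge1}2^{-i}\,\tilde{c}_i(\alpha),
\]
where $\tilde{c}_i(\alpha)$ is, in the idealised error‑free reading, $\mathrm{sgn}(n_{i+1}-n_i)\in\set{-1,0,+1}$. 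In that idealised case $\cL_{>\half}(\cB)$ is the set of well‑formed $\alpha$ whose first non‑tie comparison ``$n_i$ vs.\ $n_{i+1}$'' is a strict increase, and $\cL_{\ge\half}(\cB)$ contains in addition every well‑formed $\alpha$ all of whose blocks have equal length; a routine pumping argument for B\"uchi automata --- pump a sufficiently long $a$‑block near the front of a witness --- shows that neither of these languages is $\omega$‑regular.

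To build $\cB$ with a bounded number of ranks, rank $0$ would carry the format check and a deterministic ``clock'' over the current $a$‑block; the higher ranks implement, for each $i$, one probabilistic comparison of $n_i$ with $n_{i+1}$ of Freivalds / \cite{condon-lipton} type (each $a$ contributing a factor $\half$ to geometrically small quantities that are subsequently matched), performed only when a ``commit'' coin --- tossed once at each block boundary, with probability $\half$ each time --- selects the pair $(i,i+1)$ to watch. Once committed, the run is frozen into a higher rank that records the randomised comparison sign and becomes accepting or rejecting after a short bounded coin experiment realising probabilities $\tfrac{3}{4},\tfrac{1}{2},\tfrac{1}{4}$ according to that sign; because a rank can never be left downward and is near‑deterministic within itself, summing over which pair is watched yields precisely the displayed closed form, with commit masses $2^{-i}$. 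The steps would then be: (1) exhibit the compatible ranking function $\rk$ and check locally, for each state $q$ and letter $x$, that $\post(q,x)$ meets rank $\rk(q)$ in at most one state and otherwise lies in strictly higher ranks; (2) compute $\muacc{\cB}{\alpha}$ in closed form using the near‑determinism within ranks; (3) bound how far $\tilde{c}_i(\alpha)$ can stray from $\mathrm{sgn}(n_{i+1}-n_i)$ and verify this is small enough that $\cL_{>\half}(\cB)$ and $\cL_{\ge\half}(\cB)$ still coincide with the idealised languages, then invoke the B\"uchi pumping argument.

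The main obstacle --- and the substantive part of the proof --- is the threshold arithmetic under the hierarchical constraint. A one‑way finite‑state device cannot compare two unbounded consecutive block lengths with a fixed error probability (by Rabin's isolation theorem that would already force a regular language), so the per‑comparison error must be allowed to degrade with the sizes of $n_i$ and $n_{i+1}$; one must then tune the matching of the geometric quantities and the weights $2^{-i}$ so that, after summing, the single cut‑point $\half$ \emph{exactly} separates the inputs to be kept from those to be discarded --- in particular, so that $\muacc{\cB}{\alpha}$ is never \emph{equal} to $\half$ for an $\alpha$ that should be excluded, which is exactly what makes $\cL_{\ge\half}(\cB)$ come out non‑$\omega$‑regular for the same reason as $\cL_{>\half}(\cB)$. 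A second constraint always in play is that the whole machine must remain hierarchical: this rules out the Baier--Gr\"o{\ss}er‑style automaton that resets to a lower state between blocks (which would make $\cL_{>0}(\cB)$ non‑$\omega$‑regular, contradicting the rest of this section), so the non‑regularity genuinely has to be produced by the bounded‑rank, commit‑and‑freeze encoding of the comparisons, and checking that the error analysis survives that restriction is the heart of the argument.
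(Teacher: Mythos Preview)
Your proposal is a plan rather than a proof, and the part you yourself flag as ``the main obstacle'' is precisely the step that carries all the weight. A one-way finite-state device cannot output $\tilde c_i(\alpha)=\mathrm{sgn}(n_{i+1}-n_i)$ exactly, so the acceptance probability will not be your displayed closed form but a perturbation of it; you give no argument that the perturbed cut-point languages at $\half$ still coincide with the idealised ones, nor that $\muacc{\cB}{\alpha}$ avoids the value $\half$ on the ``wrong'' inputs. A Freivalds-style comparison of $n_i$ with $n_{i+1}$ produces a signal whose magnitude is of order $2^{-\min(n_i,n_{i+1})}$, not a clean $\pm1$; summing such signals against weights $2^{-i}$ yields a function of $\alpha$ whose level sets at $\half$ have no evident reason to be non-$\omega$-regular, and you propose no concrete tuning that would force this. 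So the argument has a genuine gap at its central step.

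The paper bypasses all of this with a three-state construction. Over $\Sigma=\{\zero,\one\}$ take the FPM $\cM_\Id$ with states $q_0,q_1,q_r$, where $q_1,q_r$ are absorbing and from $q_0$ one goes on $\zero$ to $q_0,q_r$ with probability $\half$ each and on $\one$ to $q_0,q_1$ with probability $\half$ each. This is hierarchical ($\rk(q_0)=0$, $\rk(q_1)=1$, $\rk(q_r)=2$), and one checks $\muacc{\cM_\Id}{\alpha}=\val{\alpha}$, the real in $[0,1]$ with binary expansion $\alpha$. The product FPM $\cM_\Id\circ\cM_\Id$ (state set $\{q_0,q_1\}^2\cup\{q_{r_\new}\}$, with rank of $(q_i,q_j)$ equal to $i+j$) is again an HPBA and satisfies $\muacc{\cM_\Id\circ\cM_\Id}{\alpha}=\val{\alpha}^2$. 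Hence $\cL_{>\half}(\cM_\Id\circ\cM_\Id)=\{\alpha:\val{\alpha}>\sqrt{\half}\}$ and $\cL_{\geq\half}(\cM_\Id\circ\cM_\Id)=\{\alpha:\val{\alpha}\geq\sqrt{\half}\}$; since $\sqrt{\half}$ is irrational its binary expansion is not ultimately periodic, so neither language is $\omega$-regular. No block comparisons, no commit-and-freeze bookkeeping, no error analysis: the non-regularity drops out of the irrationality of the threshold.
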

\begin{proof}

The {\HPBA} we will construct will actually an {\FPM}. The following
construction is given in \cite{rch:sis:mv:08}.  Let
$\Sigma=\{\zero,\one\}.$ Let $\Q=\{\q_0,\q_1,q_r\}$ and
$\delta:\Q\times\Sigma\times\Q\to \unit$ be defined as follows.  The
states $q_r$ and $q_1$ are absorbing, \ie,
$\delta(\q_r,\zero,\q_r)=\delta(\q_r,\one,\q_r)=\delta(\q_1,\zero,\q_1)=\delta(\q_1,\one,\q_1)=1$.
%For transitions out of $\q_0$,
%$\delta(\q_0,\zero,\q_0)=\delta(\q_0,\zero,\q_r)=\delta(\q_0,\one,\q_0)=\delta(%\q_0,\one,\q_1)=\frac{1}{2}.$
Transitions out of $\q_0$ satisfy
$\delta(\q_0,\zero,\q_0)=\delta(\q_0,\zero,\q_r)=\delta(\q_0,\one,\q_0)=\delta(\q_0,\one,\q_1)=\frac{1}{2}.$
Consider the {\FPM}
$\cM_\Id=\fpm{\Q,\q_0,\set{\q_0,\q_1},\delta}$. $\cM_\Id$ can be seen
to be $2$-hierarchical with $\rk(q_0)=0,\rk(q_1)=1$ and $\rk(q_r)=2.$
Given $\alpha=a_0a_1\ldots,$ it can be shown that
$\muacc{\cM_\Id}\word=\val{\alpha}$ where $\val{\alpha}$ is the real
number: $\sum_{i} \frac{\mathsf{num}(a_i)}{2^{i+1}}$ where
$\mathsf{num}(\zero)$ is the integer $0$ and $\mathsf{num}(\one)$ is
the integer $1$.

Now, consider the {\FPM} $\cM_\Id\circ \cM_\Id$ constructed as
follows.  The states of this {\FPM} are $\set{q_0,q_1}\times
\set{q_0,q_1} \cup q_{r_\new}.$ The initial state is $(q_0,q_0)$ and
the reject state is $q_{r_\new}$. The transition probabilities, from
the state $(q_{i_1},q_{j_1})$ on input $a\in \set{\zero,\one}$ is
defined as follows--- to state $(q_{i_2}, q_{j_2})$ the transition
probability is $\delta(q_{i_1},a,q_{i_2})\times
\delta(q_{j_1},a,q_{j_2})$ and to state $q_{r_\new}$ the transition
probability is $1-\sum_{i_2,j_2\in \set{0,1}}
\delta(q_{i_1},a,q_{i_2})\times \delta(q_{j_1},a,q_{j_2}).$ The state
$q_{r_\new}$ is absorbing. The {\FPM} $ \cM_\Id\circ \cM_\Id$ can be
seen to be hierarchical with $\rk(q_{i_1},q_{i_2})= i_1+i_2.$
Furthermore, it can be shown that on word $\alpha$,
$\muacc{\cM_\Id\circ \cM_\Id}\word=(\val{\alpha})^2.$ Thus,
$\cL_{>\half}({\cM_\Id\circ \cM_\Id})=\set{\word \st \val{\alpha} >
  \sqrt \half}$ and $\cL_{\geq \half}({\cM_\Id\circ
  \cM_\Id})=\set{\word \st \val{\alpha} \geq \sqrt \half}$ ; both of
which are not $\omega$-regular.  
\end{proof}

\begin{rems}
We will see shortly that the problems of deciding emptiness and
universality for a {\HPBA} turn out to be decidable under both
probable and almost-sure semantics. However, with cutpoints, they
turn out to be undecidable. 
The latter observation is out of scope of
the paper.

\end{rems}

\subsection{Probable semantics.}
We shall now show that the class $\probableh$ coincides with the class
of $\omega$-regular languages.  In~\cite{Baier:lics2005}, a restricted class of {\PBA}s
called uniform {\PBA}s was identified that also accept exactly the
class of $\omega$-regular languages. We make a couple of observations,
contrasting our results here with theirs. First the definition of
uniform {\PBA} was semantic (i.e., the condition depends on the
acceptance probability of infinitely many strings from different
states of the automaton), whereas {\HPBA} are a syntactic restriction
on {\PBA}. Second, we note that the definitions themselves are
incomparable in some sense; in other words, there are {\HPBA}s which
are not uniform, and vice versa. Finally, {\HPBA}s appear to be more
tractable than uniform {\PBA}s. We show that the emptiness problem for
$\probableh$ is \nl-complete. In contrast, the same problem was
demonstrated to be in {\exptime} and co-\np-hard~\cite{Baier:lics2005}
for uniform {\PBA}s.
%\vspace*{0.2cm}
%\noindent
%{\it Expressiveness.}

%Our main observation is that the Hierarchical {\PBA}s capture exactly
%the class of $\omega$-regular languages.
 We first establish that every
$\omega$-regular language can be recognized by a hierarchical {\PBA};
this is the content of the next Lemma.

\begin{lem}\rm
\label{lem:regisprobh}
For every $\omega$-regular language $\sL$, there is a hierarchical
{\PBA} $\cB$ such that $\sL=\cL_{>0}(\cB).$
\end{lem}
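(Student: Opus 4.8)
The plan is to realise $\sL$ as $\cL_{>0}(\cB)$ for an {\HPBA} $\cB$ that simulates a \emph{deterministic} Rabin automaton for $\sL$, using randomisation only to ``guess'' a winning Rabin pair together with a time after which the bad set of that pair is no longer visited. Fix a deterministic Rabin automaton $\cR=(Q,q_0,\set{(B_1,G_1),\dots,(B_n,G_n)},\delta)$ with $\cL(\cR)=\sL$; such an automaton exists because deterministic Rabin automata recognise exactly $\Regular$. I would take $\cB$ to have state set $Q\,\cup\,(Q\times\set{1,\dots,n})\,\cup\,\set{\mathit{sink}}$, initial state $q_0$, and accepting states exactly those $(q,i)$ with $q\in G_i$; the ranking function assigns rank $0$ to the states of $Q$, rank $1$ to the states of $Q\times\set{1,\dots,n}$, and rank $2$ to $\mathit{sink}$.

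The transitions of $\cB$ are as follows: from a rank-$0$ state $q$ on a letter $a$, with probability $\half$ go to $\delta(q,a)$ (rank $0$, ``keep scanning''), and for each $i$ with probability $\frac{1}{2n}$ go to $(\delta(q,a),i)$ (rank $1$, ``commit to pair $i$''); from a rank-$1$ state $(q,i)$ on $a$, go with probability $1$ to $(\delta(q,a),i)$ if $\delta(q,a)\notin B_i$ and to $\mathit{sink}$ otherwise; and $\mathit{sink}$ is absorbing. The outgoing probabilities sum to $1$, and $\cB$ is hierarchical: a rank-$0$ state has $\delta(q,a)$ as its unique rank-$0$ successor with every other successor at rank $1$, a rank-$1$ state has a single successor at rank $1$ or $2$, and $\mathit{sink}$ loops at rank $2$; since all probabilities are rational, $\cB$ is in fact an {\RHPBA}.

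For correctness one proves $\cL_{>0}(\cB)=\sL$ in both directions, in each case using that $\cR$ is deterministic, so that the rank-$0$ component of any run of $\cB$, and the $Q$-component of a run after it has committed, is forced to follow the unique run $\pi$ of $\cR$ on the input $\alpha$. For $\sL\subseteq\cL_{>0}(\cB)$: pick $i$ so that $\pi$ satisfies $(B_i,G_i)$, and pick $t_0$ with $\pi_j\notin B_i$ for all $j\geq t_0$; the event that $\cB$ keeps scanning for $t_0$ steps and then commits to $i$ has probability $(\half)^{t_0}\cdot\frac{1}{2n}>0$, and conditioned on it the (now deterministic) continuation never reaches $\mathit{sink}$ and visits $\set{(q,i)\st q\in G_i}$ infinitely often, so $\muacc\cB\alpha>0$. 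For $\cL_{>0}(\cB)\subseteq\sL$: if $\muacc\cB\alpha>0$ there is at least one accepting run $\rho$, and since rank-$0$ states and $\mathit{sink}$ are non-accepting, $\rho$ must commit to some $i$ and never reach $\mathit{sink}$ afterwards; hence from the commit point on it stays among rank-$1$ states with second coordinate $i$, which forces $\rho$'s $Q$-component to equal $\pi$, forces $\pi$ to avoid $B_i$ from some point on, and (as $\rho$ visits accepting states infinitely often) forces $\pi$ to meet $G_i$ infinitely often; thus $\pi$ satisfies $(B_i,G_i)$ and $\alpha\in\cL(\cR)=\sL$.

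The real crux is the choice of intermediate model rather than any single computation: an {\HPBA} that merely tosses coins to mimic the nondeterministic choices of a B\"uchi automaton for $\sL$ may have acceptance probability $0$ on words of $\sL$, since a B\"uchi condition can demand ``correct'' choices infinitely often. Routing through a deterministic \emph{Rabin} automaton replaces acceptance by ``for some pair, the bad set is visited finitely often and the good set infinitely often'', and the key observation is that one well-timed commit certifies exactly this; the remaining points --- that $\cB$ meets the hierarchical constraint, and that a positive probability for the single good commit already yields $\muacc\cB\alpha>0$ --- are then routine. (An alternative, slightly more roundabout, route uses $\Regular=\cBool(\Regular\intersect\Det)$ (Proposition~\ref{prop:reg-det}) together with the closure of $\Regular\intersect\Det$ under finite unions and intersections, the fact that finite-state deterministic B\"uchi automata are already {\HPBA}s, and the facts that $\probableh$ is closed under finite unions and intersections and contains, via a similar commit construction, the complement of every finite-state deterministic B\"uchi language.)
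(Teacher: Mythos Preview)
Your proof is correct and follows essentially the same route as the paper: both start from a deterministic Rabin automaton for $\sL$ and build an {\HPBA} that randomly commits to a Rabin pair, with the accepting states being the good set of the chosen pair. The only difference is in how the ``finitely many visits to $B_i$'' part is handled: the paper commits to a pair $i$ in the very first step and then, on every visit to $B_i$, survives with probability $\tfrac12$ (so finitely many visits leave a positive survival probability), whereas you delay the commit by a geometrically distributed time---effectively guessing a point after which $B_i$ is never visited---and then kill the run deterministically on any subsequent $B_i$-visit. Both mechanisms correctly yield $\cL_{>0}(\cB)=\sL$; your version has the minor cosmetic advantage of needing only three levels regardless of the number of Rabin pairs, while the paper uses $k{+}2$ levels.
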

\begin{proof}

Let $\cR = (\Q,\qs,F,\Delta)$ be a deterministic Rabin automaton
recognizing $\sL$, where $F = \{(B_1,G_1),\ldots (B_k,G_k)\}$. The
hierarchical PBA will, intuitively, in the first step choose the pair
$(B_i,G_i)$ that will be satisfied in the run, and then ensure that
the measure of paths that visit $B_i$ infinitely often is
$0$. Formally, $\cB = (\Q',\qs',\Qf',\delta')$ is given as follows.
\begin{iteMize}{$\bullet$}
\item $\Q' = \{\qs',\qr'\} \cup (\{1,\ldots k\} \times \Q)$, where $\qs',\qr'
  \not\in \Q$
\item $\Qf' = \bigcup_{i=1}^k (\{i\} \times G_i)$
\item The transition relation $\delta'$ is given by
\begin{iteMize}{$-$}
\item $\delta'(\qs',a,(i,q)) = \frac{1}{k}$ iff $(\qs,a, q)\in \Delta$
\item For $q \not\in B_i$, $\delta'((i,q),a,(i,q')) = 1$ iff
  $(q,a,q')\in \Delta $
\item For $q \in B_i$, $\delta'((i,q),a,\qr') = \frac{1}{2}$ for all
  $a \in \Sigma$, and $\delta'((i,q),a,(i,q')) = \frac{1}{2}$ iff
  $(q,a,q')\in \Delta$
\item Finally, $\delta'(\qr',a,\qr') = 1$ for all $a \in \Sigma$.
\end{iteMize}
\end{iteMize}
%Observe that for each $i$ on input $\alpha$, there is only one
%run that passes through a state $(i,q)$. Furthermore runs  that visit a state in $(i,B_i)$
%infinitely often have measure $0$. Therefore,  runs that pass through a state  $(i,q)$
%that  visit $(i,G_i)$ infinitely often with strictly positive measure must visit
%$(i,B_i)$ only finitely many times.
% From this, 
 It is easy to see that $\cL_{>0}(\cB) = \sL$. Finally, we point out
 that $\cB$ is a $k+1$-level hierarchical PBA. This is witnessed by
 the ranking function $\rk$ defined as follows --- $\rk(\qs') = 0$,
 $\rk((i,q)) = i$, and $\rk(\qr') = k+1$. 
\end{proof}

\begin{thm}\rm
\label{thm:regprobh}
$\probableh=\Regular.$
\end{thm}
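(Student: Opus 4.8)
The plan is to establish the two inclusions separately. The inclusion $\Regular \subseteq \probableh$ is exactly Lemma~\ref{lem:regisprobh}, which is already available, so the work lies entirely in showing $\probableh \subseteq \Regular$, that is, that $\cL_{>0}(\cB)$ is $\omega$-regular for every {\HPBA} $\cB$. The structural fact I would build on is that along any run of an {\HPBA} the level of the current state is non-decreasing, and that \emph{within a single level the machine is deterministic}: from a state $q$ with $\rk(q)=j$, on a letter $a$ there is at most one positive-probability successor of level $j$, every other positive-probability successor having strictly larger level.

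Fix a compatible ranking function with levels $0,\dots,k$, write $\cB_q$ for $\cB$ with its initial state reset to $q$, and set $\mathrm{Acc}_q=\{\beta\in\Sigmaw:\muacc{\cB_q}{\beta}>0\}$, so that $\cL_{>0}(\cB)=\mathrm{Acc}_{\qs}$. I would prove by downward induction on $\rk(q)$ that each $\mathrm{Acc}_q$ is $\omega$-regular. The driving observation is a case split on a run of $\cB_q$ on $\beta$ (with $j=\rk(q)$): it either (i) never leaves level $j$ --- in which case it must follow the unique deterministic level-$j$ path from $q$ spelled by $\beta$, provided that path never gets forced to a higher level --- or (ii) follows that path for finitely many steps and then jumps, at some step $t$, to a state $q'$ with $\rk(q')>j$, after which it runs $\cB_{q'}$ on $\beta[t+1:]$. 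Computing measures, $\muacc{\cB_q}{\beta}>0$ holds iff: (i) the deterministic level-$j$ path from $q$ on $\beta$ is infinite, visits $\Qf$ infinitely often, and eventually only traverses state--letter pairs from which no positive-probability transition leaves level $j$; or (ii) there is a state $q'$ with $\rk(q')>j$ such that $\beta \in R_{q\to q'}\cdot \mathrm{Acc}_{q'}$, where $R_{q\to q'}\subseteq\Sigma^+$ is the set of finite words driving the level-$j$ path from $q$ into a positive-probability jump to $q'$. Condition (i) is recognized by a deterministic automaton that tracks the level-$j$ path (sinking when it dies), equipped with a B\"uchi condition (``$\Qf$ infinitely often'') conjoined with a co-B\"uchi condition (``eventually never at a level-$j$-exiting pair''), hence $\omega$-regular. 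For condition (ii), $R_{q\to q'}$ is regular --- it is accepted by the finite level-$j$ subautomaton, quantifying internally over the jump time --- there are finitely many targets $q'$, and each $\mathrm{Acc}_{q'}$ is $\omega$-regular by the induction hypothesis since $\rk(q')>j$; as $\omega$-regular languages are closed under finite union and left concatenation by a regular language, condition (ii) defines an $\omega$-regular set. Hence $\mathrm{Acc}_q$ is $\omega$-regular. The base case $\rk(q)=k$ is subsumed: no exit transitions exist, so $\cB_q$ restricted to level $k$ is a total deterministic B\"uchi automaton and condition (ii) is vacuous. Taking $q=\qs$ gives $\cL_{>0}(\cB)\in\Regular$, and together with Lemma~\ref{lem:regisprobh} this yields $\probableh=\Regular$.

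The crux --- and the step I expect to need the most care --- is the measure computation behind condition (i), namely turning ``the run stays inside level $j$ with \emph{strictly positive} probability'' into an $\omega$-regular condition on the symbolic run. The probability of never leaving level $j$ along the level-$j$ path is the infinite product $\prod_i \delta_{\beta[i]}(q_i,q_{i+1})$ of the per-step survival probabilities, and this product is positive \emph{exactly when} all but finitely many factors equal $1$ --- which holds precisely because $\cB$ is finite-state and hence has only finitely many distinct transition probabilities, so any factor below $1$ is bounded away from $1$. This finiteness is essential: without it the product could be positive with infinitely many factors below $1$, and the resulting condition would not be $\omega$-regular. A lesser subtlety worth flagging is the shape of condition (ii): it must be organized as a finite union over the jump \emph{target} $q'$, with the regular prefix language $R_{q\to q'}$ absorbing the unbounded choice of jump time, rather than as a countable union over jump times, which need not be $\omega$-regular.
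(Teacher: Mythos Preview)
Your argument is correct and shares the paper's central observation: because $\cB$ has only finitely many transition probabilities, any positive-measure run must from some point on take only probability-$1$ transitions, so acceptance with positive probability reduces to an ``eventually deterministic'' condition. The paper formalizes this as a single claim --- $\alpha\in\cL_{>0}(\cB)$ iff there is a run $q_0q_1\cdots$ with $\delta(q_i,\alpha[i],q_{i+1})>0$ for all $i$, $\delta(q_i,\alpha[i],q_{i+1})=1$ for all sufficiently large $i$, and $q_i\in\Qf$ infinitely often --- proved by induction on the number of levels in the same spirit as your case split, and then reads off a nondeterministic B\"uchi automaton on $\Q\times\{0,1\}$: copy $0$ follows any positive-probability edge, copy $1$ only probability-$1$ edges, one may pass from copy $0$ to copy $1$ on any positive-probability edge, and the B\"uchi set is $\Qf\times\{1\}$. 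You instead establish $\omega$-regularity of each $\mathrm{Acc}_q$ by downward induction on $\rk(q)$, writing $\mathrm{Acc}_q$ as the union of your condition (i) with finitely many concatenations $R_{q\to q'}\cdot\mathrm{Acc}_{q'}$ and invoking closure of $\omega$-regular languages. Both routes are valid and rest on the same infinite-product argument you highlight; your compositional proof is perhaps conceptually tidier, while the paper's explicit $2|\Q|$-state automaton is what makes the $\nl$ upper bound in Theorem~\ref{thm:hpbaprobcomplex} immediate --- a size bound your closure-based construction does not directly deliver.
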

\begin{proof}
Thanks to Lemma \ref{lem:regisprobh} we need to show that every language
in $\probableh$ is $\omega$-regular.  The other inclusion follows from the
following Claim.
\begin{claim}\rm
%\label{lem:probhacc}
For any hierarchical {\PBA} $\cB=(\Q,\qs,\Qf,\delta)$ and any word
$\alpha\in \Sigmaw$, $\alpha\in\cL_{>0}(\cB)$ iff there is an infinite
sequence of states $\qs=q_0,q_1,\ldots$ such that $q_i\in \Qf$ for
infinitely many $i\in \Nats,$ $\delta(q_i,\alpha[i],q_{i+1})>0$ for
all $i\in \Nats$ and $\exists j\geq 0$ such that $
\delta(q_i,\alpha[i],q_{i+1})=1$ for all $i\geq j.$
\end{claim}
{\bf Proof of the claim:}
Let $\cB$ be a $k$-level hierarchical PBA with compatible ranking
function $\rk$. Let $\Q_j = \{q \in \Q\st \rk(q) =
j\}$. The proof will proceed by induction on the level $k$.

{\bf Base Case:} Suppose $k = 0$. Based on the definition of
hierarchical PBAs, this means that $\cB$ is a deterministic B\"{u}chi
automaton, i.e., for all $q,q' \in \Q$ and $a \in \Sigma$, either
$\delta(q,a,q') = 1$ or $\delta(q,a,q') = 0$. Thus, the claim clearly
holds in this case.

{\bf Induction Step:} Let $\alpha \in \Sigmaw$ be such that
$\alpha\in\cL_{>0}(\cB)$, with $\muacc\cB\alpha = x > 0$. Observe that
for every $i$, $|\post(\qs,\alpha[0,i]) \cap \Q_0| \leq 1$. There are
two cases to consider.
\begin{desCription}
\item\noindent{\hskip-12 pt\bf Case 1:}\ Suppose
  $|\post(\qs,\alpha[0,i]) \cap \Q_0| = 1$ for all $i$; let us denote
  the unique state in $\post(\qs,\alpha[0,i]) \cap \Q_0$ by
  $q_i$. Suppose in addition, there is a $j$ such that for all $\ell >
  j$, $\delta(q_\ell,\alpha[\ell],q_{\ell+1}) = 1$. Then clearly the
  sequence $q_0,q_1,\ldots$ satisfies the conditions of the lemma.
\item\noindent{\hskip-12 pt\bf Case 2:}\ Suppose Case 1 does not
  hold. Then there are two possibilities. The first possibility is
  that there is a $i_0$ such that $\post(\qs,\alpha[0,i_0]) \cap \Q_0
  = \emptyset$. The second possibility is that for every $j$, there is
  a $\ell > j$ such that $\delta(q_\ell,\alpha[\ell],q_{\ell+1}) < 1$,
  where once again we are denoting the unique state of $\Q_0$ in
  $\post(\qs,\alpha[0,\ell])$ by $q_\ell$. In this second subcase,
  there must then exist an $i_0$ such that $\delta_u(\qs,q_{i_0}) <
  x$, where $u = \alpha[0,i_0]$.

Now, based on the definition of $i_0$ given for the two subcases
above, it must be the case that for some state $q \in
\post(\qs,\alpha[0,i_0]) \setminus \Q_0$, the measure of accepting
runs from $q$ on the word $\alpha[i_0+1]\alpha[i_0+2]\cdots$ is
non-zero. Consider the hierarchical PBA $\cB' = (\Q',q,\Qf',\delta')$,
where $\Q' = \Q\setminus\Q_0$, $\Qf' = \Qf \setminus \Q_0$ and
$\delta' = \restrict{\delta}{\Q'\times\Sigma\times\Q'}$. Clearly,
$\cB'$ is a $k-1$-level hierarchical PBA, and thus by induction
hypothesis, the string $\alpha[i_0+1]\alpha[i_0+2]\cdots$ has a run
$q=q'_0q'_1\ldots$ satisfying the conditions in the claim. The desired run for
$\alpha$ (in PBA $\cB$) satisfying the conditions in the lemma is
obtained by concatenating a run from $\qs$ to $q$ on $\alpha[0,i_0]$
with $q'_0q'_1\ldots$. (\bf End proof of claim).\qed
\end{desCription}\smallskip

\noindent We now proceed with the main theorem.
Let $\cB=(\Q,\qs,\Qf,\delta).$ 
We will construct a  finite-state nondeterministic B\"uchi
automaton $\cA=(\Q',\qs',\Qf',\Delta')$, such that the language 
recognized by $\cA$ is exactly $\cL_{>0}(\cB).$
Intuitively, the set of states $\Q'$  will consist of two copies
of $\Q$--- $\Q\times \set{0}$ and $\Q\times \set{1}.$ In the first copy, we will simulate the possible transitions between pair of states of $\cB$ (we  ignore the 
exact transition probabilities of $\cB$). In the second copy, we will only simulate {\it deterministic}
transitions of $\cB$, {\it i.e.}, those transitions between pair of states which  happen with probability $1$.
From the first copy, we can transit to the second copy if the probability of transiting between the corresponding states
in $\cB$ is non-zero. From the second copy, we will never transit to the first state. The set of final states of $\cA$ are those
states in second level that correspond to the final states of $\cB.$ Intuitively, the construction ensures that if $\alpha\in \sL_{>0}(\cB)$,
 and the sequence $\qs=q_0,q_1,\ldots$ and natural number $j\geq 0$ are such that 
 \begin{enumerate}[(1)]
 \item $\q_\ell\in \Q_f$ for infinitely many $\ell$,
 \item
$0<\delta(q_i,\alpha[i],\q_{i+1})<1$ for all $i<j$ and $\delta(q_i,\alpha[i],\q_{i+1})=1$ for all $i\geq j$
\end{enumerate}
then $(\q_0,0),\dots (\q_{j},0),(\q_{j+1},1),(\q_{j+2},1)\ldots$ is an
accepting run of $\cA$ on input $\alpha.$

Formally, $\Q'$ is the set
$\Q\times\set{0,1}$, $\qs'=(\qs,0)$, $\Qf'=\set{(q,1)\st q\in \Qf}$,
and $\Delta'$ is defined as follows. For each $q_1,q_2\in \Q$,
\begin{iteMize}{$\bullet$}
\item $((q_1,0),a,(q_2,0)) \in \Delta'$  iff $\delta(q_1,a,q_2)>0.$

\item   $((q_1,0),a,(q_2,1)) \in \Delta'$  iff $\delta(q_1,a,q_2)>0.$
\item  $((q_1,1),a,(q_2,1)) \in \Delta'$  iff $\delta(q_1,a,q_2)=1.$
\item $((q_1,1),a,(q_2,0)) \in \Delta'$  iff never.
\end{iteMize}
The claim above immediately implies that $\cL_{>0}(\cB)$ is the language recognized by $\cA$ and hence is $\omega$-regular.
\end{proof}

%\begin{proof}
%Let $\cB=(\Q,\qs,\Qf,\delta).$ Consider the finite-state B\"uchi
%automaton $\cA=(\Q',\qs',\Qf',\Delta')$ where $\Q'$ is the set
%$\Q\times\set{0,1}$, $\qs'=(\qs,0)$, $\Qf'=\set{(q,1)\st q\in \Qf}$,
%and $\Delta'$ is defined as follows. For each $q_1,q_2\in \Q$,
%\begin{iteMize}
%\item $((q_1,0),a,(q_2,0)) \in \Delta'$  iff $\delta(q_1,q_2)>0.$

%\item   $((q_1,0),a,(q_2,1)) \in \Delta'$  iff $\delta(q_1,q_2)>0.$
%\item  $((q_1,1),a,(q_2,1)) \in \Delta'$  iff $\delta(q_1,q_2)=1.$
%\item $((q_1,1),a,(q_2,0)) \in \Delta'$  iff never.
%\end{iteMize}
%Lemma \ref{lem:probhacc} immediately implies that $\cL_{>0}(\cB)$ is the language recognized by $\cA$ and hence is $\omega$-regular.
%\qed
%\end{proof}

%\vspace*{0.2cm}
%{\it Decision problems.} 
We will show that the problem of deciding whether $\cL_{>0}(\cB)$ is
empty for hierarchical {\RPBA}'s is \nl-complete while the problem of
deciding whether $\cL_{>0}(\cB)$ is universal
is \pspace-complete. Thus ``algorithmically'', hierarchical {\PBA}s
are much ``simpler'' than both {\PBA}s and uniform {\PBA}s.  Note that
the emptiness and universality problem for finite state
B\"uchi-automata are also \nl-complete and \pspace-complete
respectively. 

\begin{thm}\label{thm:hpbaprobcomplex}\rm
Given a \RHPBA, $\cB$, the problem of deciding whether
$\cL_{>0}(\cB)=\emptyset$ is \nl-complete. The problem of deciding
whether $\cL_{>0}(\cB)=\Sigmaw$ is \pspace-complete.
\end{thm}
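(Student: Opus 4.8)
The plan is to sandwich both problems between the corresponding problems for nondeterministic B\"uchi automata, using the effective translation already produced inside the proof of Theorem~\ref{thm:regprobh}. For the upper bounds, recall that that proof builds, from an \RHPBA\ $\cB=(\Q,\qs,\Qf,\delta)$, a nondeterministic B\"uchi automaton $\cA$ on the state set $\Q\times\set{0,1}$ with $\cL(\cA)=\cL_{>0}(\cB)$, whose transitions are given by the four rules displayed there, each of which merely asks whether an entry $\delta(q_1,a,q_2)$ is $>0$ or is $=1$. The point is that $\cA$ has only $2\card\Q$ states and that, since $\cB$ is rational, membership of a tuple in its transition relation is decided by a single comparison of a rational with $0$ or $1$; hence $\cA$ can be produced from $\cB$ in logarithmic space (equivalently, generated on the fly). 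It then suffices to invoke the classical facts that emptiness of a nondeterministic B\"uchi automaton is in \nl\ --- search for a final state that is reachable from the initial state and lies on a cycle --- and that universality of a nondeterministic B\"uchi automaton is in \pspace. This already gives $\cL_{>0}(\cB)=\emptyset$ in \nl\ and $\cL_{>0}(\cB)=\Sigmaw$ in \pspace.

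For \nl-hardness of emptiness I would reduce from directed-graph reachability: given a digraph $G$ with designated vertices $s,t$, take the alphabet to be the edge set of $G$ and build the deterministic B\"uchi automaton $\cB_G$ with states $V(G)\cup\set{q_a,q_r}$, initial state $s$, unique final state $q_a$, and transitions --- from a vertex $u$ the letter $(u,v)$ goes to $v$ and every other letter goes to the absorbing non-final state $q_r$, and from $t$ every letter goes to the absorbing final state $q_a$. A deterministic B\"uchi automaton is a hierarchical \PBA\ (put every state at level $0$) with transition probabilities in $\set{0,1}$, so $\cB_G$ is an \RHPBA, and $\cL_{>0}(\cB_G)=\cL(\cB_G)$ is non-empty iff $q_a$ is reachable in $\cB_G$, i.e.\ iff $t$ is reachable from $s$ in $G$; the map $G\mapsto\cB_G$ is logspace, giving \nl-hardness. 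Note the \pspace\ lower bound for universality of finite B\"uchi automata cannot be transported this way, since \HPBA s generalise \emph{deterministic} B\"uchi automata, and universality of deterministic B\"uchi automata is only \nl-hard.

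For \pspace-hardness of universality I would reduce from the \pspace-complete problem ``given deterministic finite automata $D_1,\dots,D_n$ over a common alphabet $\Sigma$, is $\bigcup_{i}\cL(D_i)=\Sigma^*$?'' --- it is \pspace-complete because its complement, ``$\bigcap_{i}\cL(\overline{D_i})\neq\emptyset$'', is the classical \pspace-complete non-emptiness-of-intersection problem for deterministic automata. Fix a fresh symbol $\$\notin\Sigma$, put $\Sigma'=\Sigma\cup\set{\$}$, and construct in polynomial time an \RHPBA\ $\cB$ over $\Sigma'$ with
\[
\cL_{>0}(\cB)\;=\;(\Sigma')^\omega\;\setminus\;\set{w\$\beta\st w\in\Sigma^*\setminus\bigcup_{i}\cL(D_i),\ \beta\in(\Sigma')^\omega},
\]
so that $\cL_{>0}(\cB)=(\Sigma')^\omega$ iff $\bigcup_{i}\cL(D_i)=\Sigma^*$. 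The automaton $\cB$ has an initial state $\qs$ at level $0$, a disjoint copy of each $D_i$ placed entirely at level $1$, and absorbing states $q_a$ (final) and $q_r$ (non-final) at level $2$; its final states are all copy states together with $q_a$. From $\qs$, on a letter $a\in\Sigma$ it branches with probability $\frac1n$ into the copy of $D_i$, landing at the state reached by $D_i$ on $a$ from its initial state --- all targets sit at level $1$, so $\qs$ has no same-level successor; on $\$$ it goes to $q_a$ if some $D_i$ accepts $\epsilon$ and to $q_r$ otherwise. Inside a copy it runs $D_i$ deterministically on $\Sigma$-letters (one same-level successor, as required) and, on the first $\$$, jumps to $q_a$ if the current state is final in $D_i$ and to $q_r$ otherwise. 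One checks that $\cB$ is a $2$-level \RHPBA, that every $\$$-free infinite word admits a run that stays inside some copy forever and hence visits final states infinitely often (accepted with probability $\geq\frac1n>0$), and that $w\$\beta$ with $w\in\Sigma^*$ is accepted with positive probability exactly when some $D_i$ accepts $w$; this yields the displayed identity and the reduction.

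I expect the \pspace-hardness step to be the real obstacle. The difficulty is that an \HPBA\ is an ``almost deterministic'' device: it can branch probabilistically only by moving to a strictly higher level, hence only finitely often along any run, so the familiar ``re-randomise at every separator'' construction --- which is what lets general \PBA s recognise non-regular probable languages --- is simply unavailable. The construction above circumvents this by performing a \emph{single} random choice of $D_i$ at the very start and then running that $D_i$ deterministically forever, so that only the first $\$$-delimited block is ever tested; the care needed is in verifying that testing just this one block is already enough to characterise universality, and, in particular, that every $\$$-free infinite word and every $w\$\beta$ with $w$ accepted by some $D_i$ still lands in $\cL_{>0}(\cB)$.
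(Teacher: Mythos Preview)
Your proposal is correct. The upper bounds and the \nl-hardness of emptiness follow the same route as the paper: translate $\cB$ into the B\"uchi automaton $\cA$ from Theorem~\ref{thm:regprobh} and invoke the standard complexity of B\"uchi emptiness and universality; the paper likewise obtains \nl-hardness from the emptiness problem for deterministic machines.

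Where you diverge is in the \pspace-hardness of universality. The paper does not reduce from a known \pspace-complete automata problem; instead it performs a \emph{generic} reduction. Given a deterministic polynomial-space Turing machine $\cT$ and input $\sigma$ of length $n$, it builds a $2$-level hierarchical \FPM\ $\cM_\sigma$ whose inputs encode successive head-moves of $\cT$; from the initial state $\cM_\sigma$ branches uniformly to states $(i,r_i)$, $0\leq i<p(n)$, each tracking the contents of a single tape cell, and each such track is thereafter deterministic and detects any inconsistency at cell $i$ by falling into the reject state. Thus $\cL_{=1}(\cM_\sigma)\neq\emptyset$ iff $\cT$ fails to halt on $\sigma$; swapping final and non-final states yields an \RHPBA\ $\overline{\cM_\sigma}$ with $\cL_{>0}(\overline{\cM_\sigma})=\Sigmaw$ iff $\cT$ halts on $\sigma$. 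Your reduction from union-universality of DFAs achieves the same end by a different and arguably more modular route: you exploit a single off-the-shelf \pspace-complete problem and a very small gadget (one random choice of $D_i$ at the first step, then deterministic simulation), whereas the paper's argument is self-contained but heavier. Structurally the two constructions share the same key insight you correctly identify --- an \HPBA\ can branch only finitely often, so the randomness must be spent once, up front, to pick which deterministic check to run --- and both realise it as a $2$-level \RHPBA.
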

\begin{proof}
{\bf(Upper Bounds).}  First note since $\cB$ is hierarchical,
the language $\cL_{>0}(\cB)$ is $\omega$-regular (see Theorem
\ref{thm:regprobh}). The proof of Theorem \ref{thm:regprobh} also
allows us to construct a finite-state B\"uchi automata $\cA$ such that
a) $\cL_{>0}(\cB)$ is the language recognized by $\cA$ and b) the size
of the automaton $\cA$ is at-most twice the size of the automaton
$\cB.$ Furthermore, the construction can be carried out in
$\nl$. Since the emptiness problem of finite-state B\"uchi automata is
in $\nl$ and the universality problem is in $\pspace$, we immediately
get that the desired upper bounds.

{\bf(Lower Bounds).} Please note that the $\nl$-hardness of the
emptiness problem can be proved easily from the emptiness problem of
deterministic finite state machines. For the universality problem, we
make the following claim.
\begin{claim}
Given an {\FPM} $\cM$ such that the $\cM$ is also a hierarchical
{\PBA}, the problem of deciding whether $\cL_{=1}(\cM)$ is empty
is \pspace-hard.
\end{claim}
Before, we proceed to prove the claim, we first show how the lower
bound follows from the reduction.  Given an {\FPM}
$\cM=(\Q,\qs,\Qf,\delta)$ with reject state $\qr$, consider the {\PBA}
$\overline\cM=(\Q,\qs,\set\qr,\delta)$ obtained by taking the reject
state of $\cM$ as the unique final state of $\overline\cM.$ Clearly,
\begin{enumerate}[(1)]
\item $\overline\cM$ is {\HPBA} if $\cM$ is.
 \item $\cL_{>0}(\overline\cM)$ is universal iff
$\cL_{=1}(\cM)$ is empty.
\end{enumerate} 
From these two observations the desired result will follow if we can
prove the claim.  We now prove the claim.

\noindent
{\bf Proof of the claim.}
We show that there is a polynomial time bounded reduction from
every language in {\pspace} to the language 
$$\{(\cM,\Sigma)\st \cM \textrm{ is an {\FPM} on $\Sigma$, $\cM$ is a
  {HPBA} and } \cL_{= 1}(\cM)=\emptyset\}.$$ Consider a language
$L\in \pspace$ and $\cT$ be a single tape deterministic Turing machine
that accepts $L$ in space $p(n)$ for some polynomial $p$ where $n$ is
the length of its input. We assume that $\cT$ accepts an input by
halting in a specific final state $q_f$ and $\cT$ rejects an input by
not halting. Let $\cT$ be given by the tuple
$(Q,\Lambda,\Gamma,\Delta,q_0,q_f)$.  Here $Q$ is the set of states of
the finite control of $\cT$; $\Lambda,\Gamma$ are the input and tape
alphabets and $\Lambda\subset \Gamma$ and the blank symbol \# is in
$\Gamma \setminus\Lambda$; $\Delta: Q\times\Gamma\rightarrow\Gamma\times
Q\times \{Left,Right\}$; $q_0$ is the initial state and $q_f$ is the
final state. Each tuple $\Delta(q,a)=(a',q',d)$ indicates that when
$\cT$ is in state $q$, scanning a cell containing the symbol $a$, then
$\cT$ writes value $a'$ in the current cell, changes to state $q'$ and
moves in the direction $d$. Without loss of generality, we assume that
head position of $\cT$ initially is at cell number $0.$

Let $\Phi'\;=\Gamma \times Q$ and
$\Phi\;=\Phi'\cup \Gamma$. We call members of $\Phi'$ composite symbols.
A configuration of $\cT$, on an input
of length $n$, is a string of symbols, of length $p(n)$, drawn from $\Phi$.
We can define a valid configuration in the standard way.
% \cite{ahu}. 
In each valid configuration there can be only one composite symbol 
(i.e.,from $\Phi'$)
and that indicates the head position of $\cT$.
A computation
of $\cT$ is a sequence of configurations which is either finite or infinite
depending on whether the input is accepted or not. A computation
 starts in an initial configuration and each succeeding configuration
is obtained by one move of $\cT$ from the previous configuration. The initial
configuration contains the input string and the first symbol in it is from
$\Gamma\times Q$ indicating its  head position is on the first cell.

For given input $\sigma$, we construct a $\FPM$ $\cM_\sigma$ such that $\cM_\sigma$ is a $2$-{\HPBA} and
$\cM_{\sigma}$ accepts some infinite input with probability $1$
iff $\cT$ rejects $\sigma$, i.e., $\cT$ does not halt on $\sigma$.
Let $\sigma$ be an input to $\cT$ of length $n$ and let $m=p(n)$.
A state of the automaton $\cM_{\sigma}$ is a pair of the form
$(i,s)$ where $0\leq i< m$ and $s\in \Phi$, or is in $\{\qs,\qr\}$;
here  $\qs$  is the
initial state and is of rank $0$ and $\qr$ is the reject state and is of rank $2$.
The rank of states $\set{(i,s)\st 0\leq i< m \textrm{ and }s\in \Phi}$ will be $1$.
 Intuitively, if $\cM_{\sigma}$
is in state $(i,s)$ that denotes that $i^{th}$ element of the current 
configuration of the computation of $\cT$ has value $s$. Note that $s$ is
in $\Phi'$ or is in $\Gamma$. 
The input alphabet to $\cM_{\sigma}$ is the set 
$\{0,...,m-1\}\times\Phi'\times \{left,right\}$ 
together with an additional input symbol $\tau$;
that is each input to the automaton is $\tau$ or is of the form $(i,(b,q),d)$.

Let $\sigma\;=\;\sigma_0,...,\sigma_{n-1}$ be the input to $\cT$.
 The transitions of $\cM_{\sigma}$ are defined as follows.
From the initial state $\qs$, on input $\tau$, there are 
transitions to the states $(i,r_i)$, for each $i\in \{0,...,m-1\}$ where,
$r_i=(\sigma_0,q_0)$ and
$r_i=\sigma_i$ for $0<i<n$, and is the blank symbol otherwise; the probability
of each of these transitions is $\frac{1}{m}$.  
 Thus the input $\tau$ sets up the initial configuration when 
$\cM_{\sigma}$ is in the initial state $\qr$. From every other state on input
$\tau$ there is a transition to the reject state $\qr$ with probability $1$.
Also, from the initial state $\qs$, there  is a transition to the reject state $\qr$ with probability $1$
for all input symbols other than $\tau.$

From any state of the form $(j,(b,q))$ on input of the form $(i,(a,q'),d)$ the transition is defined
as follows: if  
$i=j$, $q=q'$, $b=a$ and $\Delta(q,a)=(q_1,a_1,d)$, then there is a transition to the automaton state $(j,a_1)$;
otherwise, the transition is to $\qr$; in either case, the probability
of the transition is $1$. Note that if $\cT$ halts then also there is a 
transition to $\qr$.

From any state of the form $(j,b)$, where $b\in \Gamma$, on input symbol of the form
$(i,(a,q),d)$ the transitions are defined as follows: 
if  either
$i=j-1$, $d=right$ and $\Delta(q,a)=(q',a',d')$, or  if $i=j+1$, $d=left$ and $\Delta(q,a)=(q',a',d')$ then
the transition is to the state $(j,(b,q'))$; otherwise the transition is
back to $(j,b)$; in both cases the probability of the transition is $1$.

Suppose $\sigma$ is rejected, i.e., $\cT$ does not terminate on $\sigma$.
Furthermore assume that the composite symbols in each successive configuration of the infinite computation
of $T$ on input $\sigma$
are $(a_0,q_0),(a_1,q_1),...$ and they occur in positions $i_0,...$ and the 
direction of the head movement is given by $d_0,...$
respectively. Then $\cM_{\sigma}$ accepts the infinite string $\tau(i_0,(a_0,q_0),d_0),..,(i_k,(a_k,q_k),d_k),...$
with probability $1$ and accepts all others with probability less than $1$.
It is not difficult to see that if $\sigma$ is accepted by $\cT$, all
input strings are accepted by $\cM_{\sigma}$ with probability less than
$1$. The above reduction is clearly polynomial time bounded.   
{\bf (End proof of the claim.)}
\end{proof}

%\begin{proof}
%We had shown in \cite{techreport2} that given a rational hierarchical
%probabilistic monitor $\cM$, the problems of deciding the emptiness
%and universality of $\cL_{>0}(\cM)$ are $\nl$-hard and $\pspace$-hard
%respectively. Since any $\RHPM$ is a $\RHPBA$, we get the desired
%lower bounds.

%We now show the upper bounds. First note since $\cB$ is hierarchical,
%the language $\cL_{>0}(\cB)$ is $\omega$-regular (see Theorem
%\ref{thm:regprobh}). The proof of Theorem \ref{thm:regprobh} also
%allows us to construct a finite-state B\"uchi automata $\cA$ such that
%a) $\cL_{>0}(\cB)$ is the language recognized by $\cA$ and b) the size
%of the automaton $\cA$ is at-most twice the size of the automaton
%$\cB.$ Furthermore, the construction can be carried out in
%$\nl$. Since the emptiness problem of finite-state B\"uchi automata is
%in $\nl$ and the universality problem is in $\pspace$, we immediately
%get that the desired upper bounds.
%\qed
%\end{proof}

\subsection{Almost-sure semantics.}
For a hierarchical PBA, the ``partial'' complementation operation for almost-sure semantics discussed in
 Section \ref{sec:alsure} 
yields a hierarchical PBA. Therefore using Theorem \ref{thm:regprobh}, we
immediately get that a language $\cL\in \alsureh$ is
$\omega$-regular. Thanks to the topological characterization of $\alsureh$ as a sub-collection of 
deterministic languages, we get that $\alsureh$ is exactly the class of languages recognized by deterministic finite-state B\"uchi automata.
\begin{thm}\rm
\label{thm:alsurehdet}
$\alsureh=\Regular \intersect \Det .$
\end{thm}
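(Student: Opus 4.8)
The plan is to prove the two inclusions $\alsureh \subseteq \Regular \intersect \Det$ and $\Regular \intersect \Det \subseteq \alsureh$ separately, leaning almost entirely on results already in hand.

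For the inclusion $\alsureh \subseteq \Regular \intersect \Det$, I would fix an arbitrary $\sL = \cL_{=1}(\cB)$ with $\cB = (\Q,\qs,\Qf,\delta)$ a hierarchical {\PBA}, and first revisit the complementation construction of Lemma~\ref{lem:comp}. Applied to $\cB$, it produces an {\FPM} $\cM$ with $\sL = \Sigmaw \setminus \cL_{>0}(\cM)$; the only modification it makes is to add a fresh absorbing reject state $\qr$ and, from each final state, redirect half the probability mass to $\qr$. Thus $\post_\cM(q,a) = \post_\cB(q,a) \cup \set{\qr}$ when $q \in \Qf$ and $\post_\cM(q,a) = \post_\cB(q,a)$ otherwise. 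Giving $\qr$ a rank strictly larger than every rank occurring in a compatible ranking of $\cB$, and keeping all other ranks unchanged, yields a compatible ranking for $\cM$: for each state $q$ and letter $a$ at most one successor (the one inherited from $\cB$) lies at the same level, and every new successor (namely $\qr$) lies at a strictly higher level. Hence $\cM$ is a hierarchical {\PBA}, and by Theorem~\ref{thm:regprobh} the language $\cL_{>0}(\cM)$ is $\omega$-regular; since $\Regular$ is closed under complementation, $\sL \in \Regular$. Moreover $\sL \in \alsure$, and by Lemma~\ref{lem:Gdelta} (equivalently Proposition~\ref{prop:alsureexpress}) we have $\alsure \subseteq \cG_\delta = \Det$, so $\sL \in \Det$ as well. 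Therefore $\sL \in \Regular \intersect \Det$.

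For the reverse inclusion $\Regular \intersect \Det \subseteq \alsureh$, I would take $\sL \in \Regular \intersect \Det$ and invoke Proposition~\ref{prop:reg-det} to obtain a finite-state deterministic B\"uchi automaton $\cA = (\Q,\qs,F,\Delta)$ with $\cL(\cA) = \sL$. Viewing $\cA$ as a {\PBA} by setting $\delta(q,a,q') = 1$ when $(q,a,q') \in \Delta$ and $0$ otherwise, it is trivially a $0$-level {\HPBA} (the constant ranking $\rk \equiv 0$ is compatible, since every $\post(q,a)$ is a singleton). For any input $\word$ there is a unique run with nonzero measure, it has measure $1$, and it is accepting exactly when $\word \in \sL$; hence $\muacc{\cA}{\word} \in \set{0,1}$ and $\cL_{=1}(\cA) = \cL(\cA) = \sL$. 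Thus $\sL \in \alsureh$, completing the proof.

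The only step that needs genuine care is the verification that the Lemma~\ref{lem:comp} construction preserves the hierarchical structure; everything else is a direct appeal to Theorem~\ref{thm:regprobh}, Lemma~\ref{lem:Gdelta}, Proposition~\ref{prop:alsureexpress}, and Proposition~\ref{prop:reg-det}. I do not anticipate any real obstacle — the rank-assignment argument above is routine — so the theorem essentially follows by assembling the earlier results.
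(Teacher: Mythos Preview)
Your proposal is correct and follows essentially the same approach as the paper: both directions use exactly the tools you identify (Lemma~\ref{lem:comp} plus verification that its output is hierarchical, then Theorem~\ref{thm:regprobh} and the inclusion $\alsure\subseteq\cG_\delta=\Det$ for one direction; a finite-state deterministic B\"uchi automaton viewed as a $0$-level {\HPBA} for the other). Your explicit rank-assignment argument for why $\cM$ stays hierarchical is the detail the paper merely asserts as ``easy to see,'' so you have in fact filled in slightly more than the paper does.
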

\begin{proof}
The inclusion $\Regular \intersect \Det \subseteq \alsureh$ follows
immediately from the fact that any language in $\Regular\intersect
\Det$ is recognizable by a finite-state deterministic B\"uchi
automaton. For the reverse inclusion $\alsureh\subseteq \Regular \intersect \Det $, note that since
$\alsure\subseteq \Det,$ it suffices  to show that $\alsureh\subseteq \Regular.$ 
Now, given $\sL\in \alsureh,$ Lemma \ref{lem:comp}
immediately implies that there is an {\FPM} $\cM$ such that
$\cL_{>0}(\cM)=\Sigmaw\setminus \sL.$ Furthermore, it is easy to see
from the proof of Lemma \ref{lem:comp} that we can take $\cM$ to be
hierarchical given that $\sL\in \alsureh.$ Now, thanks to Theorem
\ref{thm:regprobh}, $\cL_{>0}(\cM)$ is $\omega$-regular which implies
that $\sL$ is also $\omega$-regular. 
\end{proof}

\begin{comment}
\begin{proof}
The inclusion $\Regular \intersect \Det \subseteq \alsureh$ follows
immediately from the fact that any language in $\Regular\intersect
\Det$ is recognizable by a finite-state deterministic B\"uchi
automaton. Observe that, from Proposition~\ref{prop:reg-det} and the
fact that $\alsureh\subseteq\cG_\delta=\Det$, it follows that if we
show that $\alsureh \subseteq \Regular$, then $\alsureh subseteq \Regular
\intersect \Det$. Now, given $\sL\in \alsureh,$ Lemma \ref{lem:comp}
immediately implies that there is an {\FPM} $\cM$ such that
$\cL_{>0}(\cM)=\Sigmaw\setminus \sL.$ Furthermore, it is easy to see
from the proof of Lemma \ref{lem:comp} that we can take $\cM$ to be
hierarchical given that $\sL\in \alsureh.$ Now, thanks to Theorem
\ref{thm:regprobh}, $\cL_{>0}(\cM)$ is $\omega$-regular which implies
that $\sL$ is also $\omega$-regular.  
\end{proof}
\end{comment}
The ``partial'' complementation operation also yields the complexity
of emptiness and universality problems.
\begin{thm}\rm
\label{thm:alsurecomplex}
Given a \RHPBA, $\cB$, the problem of deciding whether
$\cL_{=1}(\cB)=\emptyset$ is \pspace-complete. The problem of deciding
whether $\cL_{=1}(\cB)=\Sigmaw$ is \nl-complete.
\end{thm}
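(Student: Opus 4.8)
The plan is to obtain both halves of the theorem as consequences of the ``partial complementation'' operation of Lemma~\ref{lem:comp} together with the complexity bounds for the probable semantics of hierarchical automata proved in Theorem~\ref{thm:hpbaprobcomplex}. Two structural observations drive everything. First, as already noted in the proof of Theorem~\ref{thm:alsurehdet}, the {\FPM} $\cM$ built from $\cB$ in Lemma~\ref{lem:comp} is again hierarchical---one keeps the ranks of the states of $\cB$ and gives the new reject state $\qr$ a fresh top rank, and all new transitions lead to $\qr$---and it is a {\RFPM} whenever $\cB$ is a {\RHPBA}; moreover it is computable from $\cB$ in logarithmic space. Second, for any {\FPM} $\cM'$ with reject state $\qr$, the {\PBA} $\overline{\cM'}$ obtained by declaring $\{\qr\}$ the sole accepting set has exactly the transition structure of $\cM'$, hence is a {\RHPBA} whenever $\cM'$ is; and since $\qr$ is absorbing, a run of $\overline{\cM'}$ is accepting iff it ever enters $\qr$, so $\cL_{=1}(\overline{\cM'}) = \Sigmaw \setminus \cL_{>0}(\cM')$. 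Thus hierarchicity and rationality survive both the complementation of Lemma~\ref{lem:comp} and this inverse operation.

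For the upper bounds, let $\cB$ be a {\RHPBA} and $\cM$ the hierarchical {\RFPM} of Lemma~\ref{lem:comp}, so $\cL_{=1}(\cB) = \Sigmaw \setminus \cL_{>0}(\cM)$. Hence $\cL_{=1}(\cB) = \emptyset$ iff $\cL_{>0}(\cM) = \Sigmaw$, and $\cL_{=1}(\cB) = \Sigmaw$ iff $\cL_{>0}(\cM) = \emptyset$. Since $\cM$ is a {\RHPBA}, Theorem~\ref{thm:hpbaprobcomplex} puts the universality of $\cL_{>0}(\cM)$ in {\pspace} and the emptiness of $\cL_{>0}(\cM)$ in {\nl}. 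Because $\cM$ is produced from $\cB$ in logarithmic space and {\pspace} (resp.\ {\nl}) is closed under logspace many-one reductions, deciding $\cL_{=1}(\cB) = \emptyset$ is in {\pspace} and deciding $\cL_{=1}(\cB) = \Sigmaw$ is in {\nl}.

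For the lower bounds, {\pspace}-hardness of emptiness is essentially already contained in the proof of Theorem~\ref{thm:hpbaprobcomplex}: the claim established there turns a {\pspace} Turing machine $\cT$ and an input $\sigma$ into a rational hierarchical {\FPM} $\cM_\sigma$ (a $2$-{\HPBA}) with $\cL_{=1}(\cM_\sigma) = \emptyset$ iff $\cT$ accepts $\sigma$; since $\cM_\sigma$ is a {\RHPBA}, this is precisely a reduction showing that emptiness of $\cL_{=1}(\cdot)$ for {\RHPBA}s is {\pspace}-hard. For {\nl}-hardness of universality, reduce from the emptiness problem for $\cL_{>0}(\cdot)$ on hierarchical {\RFPM}s, which is {\nl}-hard by the same kind of argument invoked in Theorem~\ref{thm:hpbaprobcomplex}: a partial deterministic finite automaton, viewed as a hierarchical {\RFPM} $\cM'$ by sending every undefined move to the reject state $\qr$, has $\cL_{>0}(\cM') \neq \emptyset$ iff it admits an infinite run avoiding $\qr$, i.e.\ iff its start state can reach a cycle, which is an {\nl}-hard reachability question. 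The map $\cM' \mapsto \overline{\cM'}$ is logspace and produces a {\RHPBA} with $\cL_{=1}(\overline{\cM'}) = \Sigmaw$ iff $\cL_{>0}(\cM') = \emptyset$, completing the reduction.

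There is no deep obstacle: the theorem repackages Lemma~\ref{lem:comp}, Theorem~\ref{thm:hpbaprobcomplex}, and the claim embedded in the latter's proof, the organizing insight being that ``partial complementation'' of hierarchical rational automata is again a hierarchical, rational, logspace operation, and so is its inverse $\cM' \mapsto \overline{\cM'}$. The two points that need care are (i) verifying that Lemma~\ref{lem:comp} and the map $\cM' \mapsto \overline{\cM'}$ preserve both the ranking function and rationality, so that the bounds of Theorem~\ref{thm:hpbaprobcomplex} genuinely apply to the intermediate automata, and (ii) checking that the reductions are logspace, not merely polynomial time, as the two {\nl} claims require.
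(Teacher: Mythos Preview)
Your proposal is correct and follows essentially the same approach as the paper: upper bounds via the partial complementation of Lemma~\ref{lem:comp} (observing it preserves hierarchicity and rationality) combined with Theorem~\ref{thm:hpbaprobcomplex}, and lower bounds by invoking the {\pspace}-hardness claim inside the proof of Theorem~\ref{thm:hpbaprobcomplex} for emptiness and a reduction from deterministic finite-state emptiness/reachability for the {\nl}-hardness of universality. Your write-up is in fact more careful than the paper's in tracking the logspace nature of the reductions and the preservation of the ranking function, but there is no substantive difference in strategy.
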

{\bf (Upper Bounds.)} The upper bounds
are obtained by constructing the {\FPM} $\cM$ such that
$\cL_{=1}(\cB)=\Sigmaw\setminus \cL_{>0}(\cM)$ as in the proof of
Lemma \ref{lem:comp}. Now, $\cM$ is hierarchical if $\cB$ is hierarchical. The result now follows
immediately from  Theorem
\ref{thm:hpbaprobcomplex}.  
\begin{proof}
{\bf(Lower Bounds.)}
The $\nl$-hardness of checking universality can be shown from $\nl$-hardness of checking emptiness of deterministic
finite state machines.  
Please recall that in the proof of Theorem \ref{thm:hpbaprobcomplex}, we had shown that given an {\FPM} $\cM$
such that $\cM$ is a {\HPBA}, the problem of checking whether $\cL_{=1}(\cM)$ is empty is {\pspace}-hard.
Thus, it follows immediately that checking emptiness of $\cL_{=1}(\cB)$ for a {\HPBA} is {\pspace}-hard.
\end{proof}

\section{Conclusions}
\label{sec:conc}
In this paper, we  investigated the power of randomization in finite state
automata on infinite strings. We presented a number of results on the 
expressiveness and decidability problems under different notions of acceptance
based on the probability of acceptance. In the case of decidability, we
gave tight bounds for both the universality and emptiness problems.
%As part of future work, it will be interesting to investigate the 
%power  of Rabin acceptance for almost sure semantics. 
As part of future work, it will be
interesting to investigate the power of randomization in other models of 
computations on infinite strings such as pushdown automata etc. 
Since the universality and emptiness problems are PSPACE-complete for
almost-sure semantics, their application to practical systems needs further
enquiry.

\section*{Acknowledgements.}
The authors thank anonymous referees whose comments have improved the presentation of the paper. 
%Rohit Chadha was supported in part by NSF grants CCF04-29639 and NSF CCF04-48178. 
%A. Prasad Sistla was supported in part by NSF CCF-0742686. Mahesh Viswanathan was 
%supported in part by NSF CCF04-48178 and NSF CCF05-09321. 

%\begin{thebibliography}{condon-lipton}
%\bibliography{ref}
 %\end{thebibliography}
\bibliographystyle{alpha}
\bibliography{ref}

\appendix
\section{Properties of $\cR$ in the proof of Lemma \ref{lem:probhard}}
\begin{lem}\rm
\label{lem:condon-lipton-prop}
Let $M$ be a deterministic 2-counter machine with a one way read only input tape whose configurations are
encoded over alphabet $\Sigma'$. Let $\epsilon$ be any rational such
that $0<\epsilon<\frac{1}{2}$. There is a {\PFA} $\cR$ over alphabet
$\Sigma_\cR = \Sigma' \cup \{@\}$, where $@ \not\in \Sigma'$, such
that
\begin{enumerate}[(1)]
\item There exists an (computable) integer constant $d\geq 2$ such
  that if $w$ is a valid and halting computation of $M$ of length $n$,
  then the input string $(w@)^{d^{n}}$ is accepted by $\cR$ with
  probability $\geq (1-\epsilon)$, and
\item Any input $x = w_1@w_2@\cdots @w_m@$, where no $w_i$ is a valid
  halting computation of $M$, is accepted by $\cR$ with probability at
  most $\epsilon$.
\end{enumerate}
\end{lem}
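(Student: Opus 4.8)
The plan is to recall the probabilistic finite automaton $\cR$ of Condon and Lipton~\cite{condon-lipton} (also used in~\cite{Baier:fossacs2008,GroesserThesis}) and then verify the two displayed properties directly. Recall first how computations are presented: a computation of $M$ is a sequence of configurations, each a string over $\Sigma'$ in which the two counter values appear in unary as blocks of $a$'s and $b$'s. Checking that a string over $\Sigma_\cR$ encodes a valid halting computation of $M$ splits into (i) \emph{structural} conditions — the first configuration is the correct initial one, the last is halting, and consecutive configurations agree with $M$'s finite control, head movement, and with the \emph{pattern} of counter updates (increment, decrement, or unchanged) — all of which are regular and checkable exactly by a deterministic finite control; and (ii) for each consecutive pair of configurations and each counter, the \emph{quantitative} condition that the two unary blocks have the prescribed relative length (equal, or off by one). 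Condition (ii) is the only one a one-way machine cannot verify exactly, and $\cR$ checks it by a Freivalds-style randomized comparison of the two unary blocks, which has \emph{one-sided} error: when the prescribed relation holds the test reports ``consistent'' with probability $1$, and when it fails the test reports ``consistent'' with probability at most $1-c^{-n}$, where $n$ bounds the lengths of the blocks compared (hence $n$ is at most the length of the inspected computation) and $c$ is a constant depending only on $M$. On its input $\cR$ selects which checks to run using coin tosses made while scanning; it can enter its (unique) absorbing accepting state only along a run in which all selected checks reported success, and it enters an absorbing rejecting state as soon as a selected structural check fails.

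With $\cR$ so described, property~(1) is the completeness direction. Fix a valid halting computation $w$ of length $n$ and consider $(w@)^{d^n}$. Every structural check that $\cR$ may select succeeds with certainty, and every quantitative check it may select reports ``consistent'' with probability $1$ because the counter blocks genuinely satisfy the prescribed relations. Hence each of the $d^n$ scanned copies of $w@$ offers an independent chance $\rho_n>0$ of driving $\cR$ into its absorbing accepting state; $\rho_n$ may be exponentially small in $n$ but is bounded below by $c_0^{-n}$ for a constant $c_0$ depending only on $M$, so $\cR$ accepts with probability at least $1-(1-\rho_n)^{d^n}$. Taking the computable constant $d>c_0$ large enough as a function of $M$ and $\epsilon$ makes this at least $1-\epsilon$ for every $n\ge 1$, which is property~(1). (The point of one-sidedness is that the Freivalds comparisons used inside a copy of $w$ would otherwise incur a small false-rejection loss; since the relations genuinely hold, that loss is absent.)

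Property~(2) is the soundness direction, where the one-sidedness does the real work. Suppose the input is $u=w_1@\cdots@w_m@$ with no $w_i$ a valid halting computation of $M$. Then each $w_i$ witnesses at least one violated condition — a structural one, which $\cR$ detects with certainty whenever it selects it, or a quantitative one, whose corresponding Freivalds comparison reports ``consistent'' with probability at most $1-c^{-|w_i|}$. Since the accepting state can be reached only along a run in which every selected check reported success, and since the probabilistic checks never manufacture a spurious ``consistent'' report with more than the designed-in probability, the total probability of reaching the accepting state is at most $\epsilon$; here one invokes the quantitative calibration of the Freivalds test (its error bound $c^{-n}$ together with the probabilities with which $\cR$ selects its checks) exactly as in~\cite{condon-lipton}.

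The main obstacle in turning this into a full proof is precisely that calibration: one must fix the test's error parameters and $\cR$'s selection probabilities so that the amplification needed for~(1) — which forces $d$ to grow with both $n$ and $1/\epsilon$ — is simultaneously compatible with a soundness bound for~(2) that holds \emph{uniformly in $m$} (in particular for a single short flawed block). Once those parameters are pinned down, the remaining work is routine finite-control bookkeeping for the structural checks and an elementary estimate of the $\prod_j(1-c^{-|w_j|})$-type products arising in the analysis of $\cR$'s acceptance probability.
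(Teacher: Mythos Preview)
Your sketch has a genuine gap, and it is precisely the point you flag as an ``obstacle'' but then dismiss as a calibration detail. In your architecture the automaton has an absorbing accepting state that can be entered after processing a single block $w_i@$, and each block offers an \emph{independent} chance of doing so. For a valid block that chance is some $\rho_n>0$. But consider an invalid block $w$ whose only flaw is quantitative (a counter value off by more than one). By your own description the Freivalds-style test reports ``consistent'' on that flaw with probability as large as $1-c^{-|w|}$, and all structural checks pass. Hence the per-block probability of entering the absorbing accepting state on $w@$ is some $\rho'>0$. Then on input $(w@)^m$ the acceptance probability is at least $1-(1-\rho')^m\to 1$ as $m\to\infty$, contradicting property~(2), which must hold uniformly in $m$. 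No choice of ``selection probabilities'' or error parameters can repair this: as long as there is any invalid block with positive per-block acceptance chance and the accepting state is absorbing, soundness fails for many repetitions of that block.

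The actual Condon--Lipton construction used in the paper avoids this by a genuinely different mechanism: there is no absorbing accepting state reachable by passing checks. Instead, on each block $\cR$ plays a coin-tossing game whose outcomes are \emph{reject}, \emph{double wins}, \emph{sum wins}, or \emph{neither}; for a valid block the probabilities of ``double wins'' and ``sum wins'' are exactly equal (each $\ge 2^{-4n}$), while for an invalid block they are skewed. Two bounded counters $D,S$ race to a fixed threshold $q$, and acceptance is decided by which pattern of $(D,S)$ values is reached (accept iff $S=q$, or $D=q$ with $S\neq 0$). The point is that this race is \emph{symmetric} for valid blocks and \emph{biased} for invalid ones, so repeating an invalid block many times drives $\cR$ toward rejection, not acceptance. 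Property~(1) then follows from a binomial/Chernoff estimate on the probability that one of $D,S$ reaches $q$ after $d^n$ plays; property~(2) follows from the calibration of $q$ against $\epsilon$ together with the fact that when neither counter reaches $q$ the automaton rejects. Your write-up needs to replace the ``absorbing accept when all selected checks pass'' picture with this race mechanism (or something equivalent that does not give invalid blocks a positive per-block acceptance shot).
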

\begin{proof}
We begin by recalling some details of the construction given
in~\cite{condon-lipton}. Given $M$ and $\epsilon_0$,
\cite{condon-lipton}\footnote{The construction in \cite{condon-lipton} is actually carried out only for deterministic 2-counter machines without an input tape. However, the construction easily carries over to deterministic 2-counter machines with one-way read only input tape.} give the construction of a {\PFA}
$\cR^M(\epsilon_0)$ which expects the input to be of the form $x =
w_1@w_2@\cdots w_m@$, where $w_i \in (\Sigma')^*$. The automaton
$\cR^M(\epsilon_0)$ tries to check that each $w_i$ is a valid, halting
computation of $M$. Since $\cR^M(\epsilon_0)$ has only finitely many
states, it cannot reliably check consistency as it requires
maintaining counter values. Instead $\cR^M(\epsilon_0)$ plays a game
on reading a computation $w_i$, wherein it tosses $O(n)$ coins; here
$n$ is $|w_i|$. The game has four possible outcomes.
\begin{enumerate}[(1)]
\item \emph{reject}, when $\cR^M(\epsilon_0)$ discovers an error in
  $w_i$,
\item \emph{double wins}, 
\item \emph{sum wins}, or
\item \emph{neither wins}
\end{enumerate}
Details of how this game is played are beyond the scope of this paper,
and the interested reader is referred to~\cite{condon-lipton}. If
$w_i$ is a valid halting computation, the following properties are
known to hold: (a) \emph{reject} is never an outcome of the game, (b)
the probability of outcome \emph{double wins} is equal to the
probability of the outcome \emph {single wins}, which we will denote
by $p$ in this proof, and (c) $p \geq 2^{-4n}$. The automaton
$\cR^M(\epsilon_0)$ maintains two counters $D$ and $S$ that take
values between $0$ and $q$ --- $q$ is a constant integer whose value
will be fixed later in the next paragraph. After playing the game on
$w_i$, $\cR^M(\epsilon_0)$ takes the following steps depending on the
outcome of the game. If the outcome is \emph{reject}, then
$\cR^M(\epsilon_0)$ moves to a special reject state $q_r$ and ignores
the rest of the input. If the outcome is \emph{double wins} then
counter $D$ is incremented, and if the outcome is \emph{sum wins} then
counter $S$ is incremented. When the outcome is \emph{neither wins},
the counters $D$ and $S$ are left unchanged. The automaton
$\cR^M(\epsilon_0)$ then checks the value of $D$ and $S$ --- if either
of them are $q$ then it ignores the rest of the input and does not
play anymore games; on the other hand if both $S$ and $D$ are less
than $q$ then it processes $w_{i+1}$ by playing the game.

After processing the entire input $x = w_1@w_2@\cdots w_m@$, the
automaton $\cR^M(\epsilon_0)$ decides to accept or reject $x$ as
follows.
\begin{enumerate}[(1)]
\item If $\cR^M(\epsilon_0)$ is in the reject state $q_r$ (i.e., one
  of the games played had outcome \emph{reject}) then $x$ is rejected.
\item $x$ is also rejected if either (a) both $S$ and $D$ are $< q$,
  or (b) $D = q$ and $S = 0$.
\item In all other cases, $x$ is accepted, i.e., when $S = q$ or when
  $D = q$ and $S \neq 0$.
\end{enumerate}
The constant $q$ is fixed to ensure that the following property holds:
Assuming that after processing $x$ at least one of the counters $D$ or
$S$ is $q$, the probability that $x$ is accepted is (a) $> 1 -
\epsilon_0$ if all the $w_i$s are valid, halting computations of $M$,
and (b) $\leq \epsilon_0$ if none of the $w_i$s are valid, halting
computations. 

In proving this lemma, we will take $\cR$ to be the {\PFA} obtained by
taking $\epsilon_0 = \frac{\epsilon}{2}$, i.e., $\cR =
\cR^M(\frac{\epsilon}{2})$. We will first show that any input $x =
w_1@w_2@\cdots @w_m@$, where no $w_i$ is a valid halting computation
of $M$, is accepted by $\cR$ with probability at most $\epsilon$. Now,
we know by construction of $\cR$ and properties stated above, assuming
that one of $S$ or $D$ is $q$, the probability that $x$ is accepted is
at most $\frac{\epsilon}{2}$. Moreover, when both $S$ and $D$ are $<
q$, we know (by construction) that $\cR$ rejects $x$. Thus the second
condition in the lemma holds.

We will now prove the first condition. Consider input $x =
w_1@w_2@\cdots @w_m@$, where all the $w_i$s are valid, halting
computations of $M$ of (equal) length $n$. Taking $p_{SD}$ to be the
probability that either $D$ or $S$ is $q$ after processing $x$, the
probability that $x$ is accepted by $\cR$ is at least $p_{SD}(1 -
\frac{\epsilon}{2})$. Thus, to prove the first condition, all we need
to show is that if $m$ is larger than $d^{n}$, for some fixed,
computable $d$, then $p_{SD}(1 - \frac{\epsilon}{2}) > 1 -
\epsilon$. In other words, we need to prove that for such large $m$,
$p_{SD} > \delta$, where $\delta =
(1-\epsilon)/(1-\frac{\epsilon}{2})$. 

Let $\overline{p_{SD}} = 1 - p_{SD}$. So $\overline{p_{SD}}$ is the
probability that both $S$ and $D$ are less than $q$ after $x$ is
processed. Recall that $p$ is the probability that $D$ is incremented
after a single computation $w_i$ is processed by $\cR$. Moreover,
since all the $w_i$s are assumed to be valid computations, $p$ is also
the probability that $S$ is incremented after playing one game. Thus,
we can say that
\[
\overline{p_{SD}} = \sum_{0 \leq S < q}\sum_{0 \leq D <
  q}\comb{m}{S+D}\comb{S+D}{S}p^{S+D}(1-2p)^{m - (S+D)}
\]
where $\comb{k}{\ell}$ is the number of ways of choosing $\ell$
objects from $k$ objects. We can simplify the above expression as
follows.
\[
\begin{array}{rl}
\sum_{0 \leq S < q}\sum_{0 \leq D < q}& \comb{m}{S+D}\comb{S+D}{S}p^{S+D}(1-2p)^{m
  - (S+D)} \\ 
 & \leq \comb{2q-2}{q-1}\sum_{0 \leq k \leq  2q-2}
               \min(k,q-1)\comb{m}{k}p^k(1-2p)^{m-k}\\ 
 & = \comb{2q-2}{q-1}\sum_{0 \leq k \leq 2q-2}
               \frac{\min(k,q-1)}{2^k}\comb{m}{k}(2p)^k(1-2p)^{m-k}\\ 
 & \leq \comb{2q-2}{q-1}\sum_{0 \leq k \leq 2q-2}\comb{m}{k}(2p)^k(1-2p)^{m-k}
\end{array}
\]
In the above reasoning, the second line follows from the observation
that since $S+D \leq 2q-2$, $\comb{S+D}{S} \leq \comb{2q-2}{q-1}$, and
the last line follows from the fact that $\frac{\min(k,q-1)}{2^k} \leq
1$. Also observe that $\sum_{0 \leq k \leq
  2q-2}\comb{m}{k}(2p)^k(1-2p)^{m-k}$ is nothing but the cumulative
distribution function for a binomial distribution with parameters $2p$
and $m$. Taking $m$ such that $2q -2 < m(2p)$, we upper bound the
above expression using Chernoff bounds as follows,
\[
\overline{p_{SD}} \leq \comb{2q-2}{q-1}\sum_{0 \leq k <
  2q-2}\comb{m}{k}(2p)^k(1-2p)^{m-k} \leq \comb{2q-2}{q-1}{\rm
  exp}(-\frac{(2pm - (2q-2))^2}{4pm})
\]
Let $\rho = \comb{2q-2}{q-1}$. Now $\rho\cdot\:{\rm exp}(-\frac{(2pm -
  (2q-2))^2}{4pm}) < 1- \delta$ when $m > \frac{\theta + (2q-2)}{p}$,
where $\theta = \log (\frac{\rho}{1 - \delta})$. Finally since $p \geq
2^{-4n}$, we get the desired $d$  for the lemma.  
\end{proof}

\end{document}